% This is samplepaper.tex, a sample chapter demonstrating the
% LLNCS macro package for Springer Computer Science proceedings;
% Version 2.21 of 2022/01/12
%
\documentclass[draft,runningheads]{llncs}
\usepackage[T1]{fontenc}
% T1 fonts will be used to generate the final print and online PDFs,
% so please use T1 fonts in your manuscript whenever possible.
% Other font encondings may result in incorrect characters.
%
\usepackage{graphicx}
\usepackage{amsmath}
\usepackage{amssymb}
\usepackage{bussproofs}
\usepackage{listings}
\usepackage[linesnumbered,ruled,vlined]{algorithm2e} 
\usepackage[driverfallback=dvipdfm,colorlinks=true,linkcolor=black]{hyperref}
\hypersetup{
   colorlinks,%
   citecolor=blue,%
   filecolor=black,%
   linkcolor=red,%
   urlcolor=black
}

% Used for displaying a sample figure. If possible, figure files should
% be included in EPS format.
%
% If you use the hyperref package, please uncomment the following two lines
% to display URLs in blue roman font according to Springer's eBook style:
%\usepackage{color}
%\renewcommand\UrlFont{\color{blue}\rmfamily}
%\urlstyle{rm}
%

% definitions specific to your article
\def\+#1{\mathcal{#1}}
\def\-#1{\mathbf{#1}}
\newcommand{\iml}{\textbf{IML}}
\newcommand{\fik}{\textbf{FIK}}
\newcommand{\dik}{\textbf{LIK}}
\newcommand{\ik}{\textbf{IK}}

\newcommand{\wk}{\textbf{WK}}
\newcommand{\did}{\textbf{LIKD}}
\newcommand{\dit}{\textbf{LIKT}}
\newcommand{\cfik}{$\mathbf{C}_{\fik}$}

\newcommand{\cdid}{$\mathbf{C}_{\did}$}
\newcommand{\cdit}{$\mathbf{C}_{\dit}$}
\newcommand{\cdidd}{$\mathbf{C}_{\did^-}$}
\newcommand{\cditt}{$\mathbf{C}_{\dit^-}$}
\newcommand{\cdikk}{$\mathbf{C}_{\dik}$}
\newcommand{\local}{$\mathbf{C}_{\dik^-}$}
\renewcommand{\#}{\sharp}
\newcommand{\axiomd}{\textbf{D}}
\newcommand{\axiomt}{\textbf{T}}
\newcommand{\ditra}{\textbf{LIK4}}

\newcommand{\remove}[1]{}

\def\K{\mathbf{K}}

\def\CD{\mathbf{RV}}

\def\DP{\mathbf{DP}}
\def\NotPosBot{\mathbf{N}}
\def\NEC{\mathbf{NEC}}

\def\MP{\mathbf{MP}}

\def\At{\mathbf{At}}
\def\Fo{{\mathcal L}}
\def\fcfra{\mathbf{fc}}

\def\fdcfra{\mathbf{fdc}}

\def\IPL{\mathbf{IPL}}

\def\AxiomSer{\mathbf{D}}
\def\AxiomRef{\mathbf{T}}
\def\AxiomTra{\mathbf{4}}

\AddToHook{cmd/appendix/before}{%
  \setcounter{lemma}{0}%
  \setcounter{proposition}{0}%
  \setcounter{definition}{0}%
}

\begin{document}
\title{Local Intuitionistic Modal Logics\\and Their Calculi}
%
%\titlerunning{Abbreviated paper title}
% If the paper title is too long for the running head, you can set
% an abbreviated paper title here
%
\author{
Philippe Balbiani\inst{1}\and
Han Gao\inst{2}\and
Çi\u{g}dem Gencer\inst{1}\and
Nicola Olivetti\inst{2}
}
\authorrunning{Balbiani, Gao, Gencer and Olivetti}
% First names are abbreviated in the running head.
% If there are more than two authors, 'et al.' is used.
%
\institute{CNRS-INPT-UT3, IRIT, Toulouse, France\\
\email{\{philippe.balbiani, cigdem.gencer\}@irit.fr}
\and
Aix Marseille University, CNRS, LIS, Marseille, France\\
\email{\{gao.han, nicola.olivetti\}@lis-lab.fr}}
\maketitle              % typeset the header of the contribution
\begin{abstract}
%
%
%We investigate intuitionistic modal logics with local modalities, where both modalities are defined locally, exactly the same as in classical logic. Semantically these logics are characterized by suitable frame properties that ensure the hereditary property, whence conservativity over intuitionistic logic. The basic logic \dik~(local \ik) is stronger than constructive modal logic \ck~and incomparable with Fischer-Servi or Simpson's \ik. We propose an axiomatization of the basic logic \dik~and its extensions on serial, reflexive, and transitive frames. Whereas the basic \dik~does not satisfy the disjunctive property, its extension with seriality or reflexivity does. We further propose  bi-nested calculi for \dik~and its extensions which provide both a decision procedure and finite countermodel extraction.
%
%
We investigate intuitionistic modal logics with locally interpreted $\square$ and $\lozenge$.
The basic logic \dik\ is stronger than constructive modal logic \wk\ and incomparable with intuitionistic modal logic \ik.
We propose an axiomatization of  \dik\ and some of its extensions.
We propose  bi-nested calculi for \dik\ and these extensions, thus providing both a decision procedure and a procedure of finite countermodel extraction.
\keywords{Intuitionistic Modal Logic \and Axiomatization \and Sequent Calculus \and Decidability.}
\end{abstract}

\section{Introduction}\label{sec:intro}
%
%
%NEW INTRODUCTION
%
%
Along the time, two traditions emerged in Intuitionistic modal logic (\iml).
The first tradition~---~called \textit{intuitionistic modal logics}~\cite{Fischer:Servi:1977,Fischer:Servi:1978,Fischer:Servi:1984,Plotkin:Stirling:1986,Simpson:1994}~---~aims to define modalities justified by an intuitionistic meta-theory.
In this tradition  the basis logic is \textbf{IK} which is considered as the intuitionistic counterpart of the minimal normal modal logic \textbf{K}.
The second tradition~---~called \textit{constructive modal logics}~---~is mainly motivated by computer science applications (Curry-Howard correspondence, verification and contextual reasoning, etc).
In this tradition, the basic logics are \textbf{CCDL}~\cite{Wijesekera:1990} and  \textbf{CK}~\cite{Bellin:et:al:2001}.

However, there are natural logics that have received little interest and deserve to be studied.
The present work aims to study the logic \dik~(Local \ik) where, for all models $(W,{\leq},{R},V)$, the modal operators are classically interpreted:
\begin{quote}
(1) $x\Vdash \Box A$ iff  for all $y$ such that $Rxy$ it holds $y \Vdash A$; \\
(2) $x\Vdash \Diamond A$ iff  there exists $y$ such that $Rx y$  and  $y \Vdash A$.
\end{quote}
We call these forcing conditions ``local'' as they do not involve worlds $\leq$-greater than $x$. 
Meanwhile, we require that intuitionistic axioms remain valid in the full logic.
This is expressed by the {\em hereditary property} (HP), saying that for any formula $A$, if $A$ is forced by a world $x$, it will also be forced by any upper world of $x$.
In order to ensure (HP), we need to postulate, in models $(W,{\leq},{R},V)$, two frame conditions which relate $\leq$ and $R$: the conditions of \emph{downward confluence} and \emph{forward confluence}~\cite{Balbiani:et:al:2021,BozicDozen,Fischer:Servi:1984,Simpson:1994}.
%Notice that the condition of forward confluence was denoted by (F1) in~\cite[Chapter $3$]{Simpson:1994}.
%
%

%
%
Bo{\v{z}}i{\'c} and Do{\v{s}}en~\cite{BozicDozen} studied separately the $\Box$ fragment and the $\Diamond$ fragment of \dik.
They also considered a logic combining $\Box$ and $\Diamond$.
However, their obtained logic is stronger than \dik, as they considered a restricted classes of frames.
Moreover, it was non-appropriate from an intuitionistic point of view as $\Diamond$ becomes definable in terms of $\Box$. 
In other respect, Bo{\v{z}}i{\'c} and Do{\v{s}}en did not tackle the decidability issue. 
A logic related to \dik\ has been considered in~\cite{d1997grafting} in the context of substructural logics.
More recently the S4-extension of \dik\ has been shown to be decidable in~\cite{Balbiani:et:al:2021}.
In this paper, we consider \dik\ and some of its extensions with axioms characterizing~---~in models $(W,{\leq},{R},V)$~---~the seriality, the reflexivity and the transitivity of the accessibility relation $R$.
We provide complete axiomatizations for them with respect to appropriate classes of models.
The basic logic \dik\ is stronger than Wijesekera's \textbf{CCDL}, stronger than the \iml\ \fik\ which only assumes forward confluence on models~\cite{FIK-csl2024}.
It is also incomparable with \ik. 
It is noteworthy that \dik\ fails to satisfy the disjunction property.
However, unexpectedly, its extensions with axioms characterizing seriality or reflexivity of accessibility relations possess this property.
Turning to proof theory, we propose bi-nested sequent calculi for \dik\ and its extensions.  Nested sequent calculi for other \iml{s} are known since \cite{Galmiche:Salhi:2010,2013cut,marin2014label}. The bi-nested calculi use two kinds of nestings: the first one for representing $\geq$-upper worlds~---~as in~\cite{Fitting:2014}~---~and the second one for representing $R$-related worlds.
A sequent calculus with the same kind of nesting to capture an extension of \textbf{CCDL}\ has been presented in~\cite{anupam:2023} whereas a calculus for \ik\ with the same nesting was also preliminarily considered in~\cite{Marin:Maroles:2020}.
A bi-nested calculus with the same structure is proposed for the  logic \fik\ in \cite{FIK-csl2024} where the frame condition of forward confluence is captured by a suitable ``interaction'' rule. 
A calculus for \dik\ can be obtained from the calculus for \fik\ by adopting a ``local'' $\Box$, or by adding another ``interaction'' rule capturing the downward confluence condition.
We prove that the calculi provide a decision procedure for the logic \dik\ and some of its extensions.
Moreover, we show the semantic completeness of these calculi: from a single failed derivation under a suitable strategy, it is possible to extract a \emph{finite} countermodel of the given sequent. 
In addition, for the extensions of \dik\ with (\axiomd) or (\axiomt), we give a syntactic proof~---~via the calculi~---~of the disjunction property. 
These results demonstrate that bi-nested sequent calculus is a powerful and flexible tool.
It constitutes an alternative to labelled sequent calculus capable to treat uniformly various \iml{s}.

\section{The logic}\label{sec:logic}
%
%
%Let $\At$ be a countably infinite set (with typical members called {\em atoms}\/ and denoted $p$, $q$, etc).
Let $\At$ be a set (with members called {\em atoms}\/ and denoted $p$, $q$, etc).
\begin{definition}[Formulas]
Let $\Fo$ be the set (with members called {\em formulas}\/ and denoted $A$, $B$, etc) of finite words over $\At{\cup}\{{\supset},{\top},{\bot},{\vee},{\wedge},{\square},{\lozenge},(,)\}$ defined by
$$A\ {::=}\ p{\mid}(A{\supset}A){\mid}{\top}{\mid}{\bot}{\mid}(A{\vee}A){\mid}(A{\wedge}A){\mid}{\square}A{\mid}{\lozenge}A$$
where $p$ ranges over $\At$.
%For all $A{\in}\Fo$, the {\em length of $A$}\/ (denoted ${\parallel}A{\parallel}$) is the number of symbols in $A$.
%We follow the standard rules for omission of the parentheses.
For all $A\in\Fo$, when we write $\neg A$ we mean $A\supset\bot$.
\end{definition}
%
%
%Let ${\bowtie}$ be the binary relation between sets of formulas such that for all sets $\Gamma,\Delta$ of formulas, $\Gamma{\bowtie}\Delta$ if and only if for all $A{\in}\Fo$, the following conditions hold:
%
%
%\begin{itemize}
%
%
%\item if ${\square}A{\in}\Gamma$ then $A{\in}\Delta$,
%
%
%\item if $A{\in}\Delta$ then ${\lozenge}A{\in}\Gamma$.
%
%
%\end{itemize}
%
%
For all sets $\Gamma$ of formulas, let $\square\Gamma{=}\{A{\in}\Fo:\ \square A{\in}\Gamma\}$ and $\lozenge\Gamma{=}\{\lozenge A{\in}\Fo:\ A{\in}\Gamma\}$.
\begin{definition}[Frames]
A {\em frame}\/ is a relational structure $(W,{\leq},{R})$ where $W$ is a nonempty set of {\em worlds,} $\leq$ is a preorder on $W$ and ${R}$ is a binary relation on $W$.
%For all frames $(W,{\leq},{R})$ and for all $s,t{\in}W$, when we write ``$s{\geq}t$'' we mean ``$t{\leq}s$'', when we write ``$s{<}t$'' we mean ``$s{\leq}t$ and $t{\not\leq}s$'' and when we write ``$s{>}t$'' we mean ``$t{\leq}s$ and $s{\not\leq}t$''.
%
%
%\end{definition}
%
%
%\begin{definition}[Elementary conditions]
%
%
A frame $(W,{\leq},{R})$ is {\em forward (resp. downward) confluent}\/ if ${\geq}{\circ}{R}{\subseteq}{R}{\circ}{\geq}$ (resp. ${\leq}{\circ}{R}{\subseteq}{R}{\circ}{\leq}$).
%for all $s,t,u{\in}W$, if $s{\leq}t$ and $t{R}u$ then there exists $v{\in}W$ such that $s{R}v$ and $v{\leq}u$.
%A frame $(W,{\leq},{R})$ is {\em serial}\/ if for all $s{\in}W$, there exists $t{\in}W$ such that $s{R}t$.
%A frame $(W,{\leq},{R})$ is {\em reflexive}\/ if for all $s{\in}W$, $s{R}s$.
%A frame $(W,{\leq},{R})$ is {\em transitive}\/ if for all $s,t,u{\in}W$, if $s{R}t$ and $t{R}u$ then $s{R}u$.
Let ${\mathcal C}_{\fcfra}$ be the class of forward confluent frames.
For all $X{\subseteq}\{\AxiomSer,\AxiomRef,\AxiomTra\}$, an $X$-frame is a frame $(W,\leq,R)$ such that $R$ is serial if $\AxiomSer{\in}X$, $R$ is reflexive if $\AxiomRef{\in}X$ and $R$ is transitive if $\AxiomTra{\in}X$.
Let ${\mathcal C}_{\fdcfra}^{X}$ be the class of forward and downward confluent $X$-frames.
We write ``${\mathcal C}_{\fdcfra}$'' instead of ``${\mathcal C}_{\fdcfra}^{\emptyset}$''.
%Let ${\mathcal C}_{\fcfra}$ (resp. ${\mathcal C}_{\fdcfra}$) be the class of all forward (resp. forward and downward) confluent frames.
%Let ${\mathcal C}_{\fdcfra}^{\ser}$ (resp. ${\mathcal C}_{\fdcfra}^{\mathbf{ref}}$, ${\mathcal C}_{\fdcfra}^{\tra}$) be the class of all serial (resp. reflexive, transitive) forward and downward confluent frames.
%Let ${\mathcal C}_{\fdcfra}^{\mathbf{ref}}$ be the class of all reflexive forward and downward confluent frames.
%Let ${\mathcal C}_{\fdcfra}^{\tra}$ be the class of all transitive forward and downward confluent frames.
%
%
\end{definition}
%
%
%Notice that ${\mathcal C}_{\fdcfra}^{\mathbf{ref}}\subseteq{\mathcal C}_{\fdcfra}^{\ser}\subseteq{\mathcal C}_{\fdcfra}$.
%
%
\begin{definition}[Valuations, models and truth conditions]
%
%
%For all frames $(W,{\leq},{R})$, a subset $U$ of $W$ is {\em $\leq$-closed}\/ if for all $s,t{\in}W$, if $s{\in}U$ and $s{\leq}t$ then $t{\in}U$.
A {\em valuation on $(W,{\leq},{R})$}\/ is a function $V: \At\longrightarrow\wp(W)$ such that for all $p{\in}\At$, $V(p)$ is $\leq$-closed.
%
%
%\end{definition}
%
%
%\begin{definition}[Models]
%
%
%A {\em model}\/ is a $4$-tuple consisting of the $3$ components of a frame and a valuation on that frame.
A {\em model based on $(W,{\leq},{R})$}\/ is a model of the form $(W,{\leq},{R},V)$.
%
%
%\end{definition}
%
%
%\begin{definition}[Truth conditions]
%
%
In a model ${\mathcal M}{=}(W,{\leq},{R},V)$, for all $x\in W$ and for all $A\in \Fo$, the {\em satisfiability of $A$ at $x$ in ${\mathcal M}$}\/ (in symbols ${\mathcal M},x\Vdash A$) is defined as usual when $A$'s main connective is either $\top$, $\bot$, $\vee$ or $\wedge$ and as follows otherwise:
\begin{itemize}
\item ${\mathcal M},x\Vdash p$ if and only if $x\in V(p)$,
\item ${\mathcal M},x\Vdash A \supset B$ if and only if for all $x'\in W$ with $x\leq x'$, if ${\mathcal M},x'\Vdash A$ then ${\mathcal M},x'\Vdash B$,
\item ${\mathcal M},x\Vdash \square A$ if and only if for all $y\in W$, if $Rxy$, then ${\mathcal M},y\Vdash A$,
\item ${\mathcal M},x\Vdash \lozenge A$ if and only if there exists $y\in W$ such that $Rxy$ and ${\mathcal M},y\Vdash A$.
\end{itemize}
%
%
%For all models ${\mathcal M}{=}(W,{\leq},{R},V)$, for all $s{\in}{\mathcal M}$ and for all $A{\in}\Fo$, if ${\mathcal M}$ is clear from the context then when we write ``$s{\Vdash}A$'' we mean ``${\mathcal M},s{\Vdash}A$''.
When ${\mathcal M}$ is clear from the context, we write ``$x \Vdash A$'' instead of ``${\mathcal M},x \Vdash A$''.
%We define ``truth in a model'', ``validity in a frame'' and ``validity on a class of frames'' as usual.
We define ``truth'' and ``validity'' as usual.
\end{definition}
\begin{lemma}[Heredity Property]\label{lemma:HB:monotonicity}
  Let $(W,{\leq},{R},V)$ be a forward and downward confluent model. For all $A \in \Fo$ and for all $x,x'\in W$, if $x\Vdash A$ and $x\leq x'$ then $x'\Vdash A$.
\end{lemma}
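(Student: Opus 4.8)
The plan is to prove the statement by induction on the structure of the formula $A$. The base cases are atoms and the constant cases; for atoms, $x \Vdash p$ means $x \in V(p)$, and since valuations are required to be $\leq$-closed, $x \leq x'$ gives $x' \in V(p)$, hence $x' \Vdash p$. The cases for $\top$ and $\bot$ are trivial, and the cases for $\vee$ and $\wedge$ follow immediately from the induction hypotheses applied to the immediate subformulas. The case for $\supset$ is standard for intuitionistic Kripke semantics: if $x \Vdash A \supset B$ and $x \leq x'$, then for any $x'' \geq x'$ we have $x'' \geq x$ by transitivity of $\leq$, so the defining condition for $x \Vdash A \supset B$ transfers directly; no frame conditions are needed here.

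The interesting cases are the two modalities, and this is where the frame conditions enter. For $\Box$: suppose $x \Vdash \Box A$ and $x \leq x'$; I must show $x' \Vdash \Box A$, i.e. for every $y$ with $Rx'y$ we have $y \Vdash A$. Given such a $y$, we have $x \leq x'$ and $Rx'y$, so the pair $(x,y)$ lies in ${\leq}{\circ}{R}$; by downward confluence ${\leq}{\circ}{R} \subseteq {R}{\circ}{\leq}$, there exists $z$ with $Rxz$ and $z \leq y$. From $x \Vdash \Box A$ and $Rxz$ we get $z \Vdash A$, and then the induction hypothesis applied to $z \leq y$ yields $y \Vdash A$, as required. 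For $\Diamond$: suppose $x \Vdash \Diamond A$ and $x \leq x'$; then there is $y$ with $Rxy$ and $y \Vdash A$. Now $(x',y)$ — more precisely the pair $(x',y)$ reached via $x' \geq x$ and $Rxy$ — lies in ${\geq}{\circ}{R}$, so by forward confluence ${\geq}{\circ}{R} \subseteq {R}{\circ}{\geq}$ there is $z$ with $Rx'z$ and $y \leq z$. The induction hypothesis applied to $y \leq z$ gives $z \Vdash A$, and together with $Rx'z$ this witnesses $x' \Vdash \Diamond A$.

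The only mild subtlety — and the main thing to be careful about — is getting the directions of the confluence inclusions and the composition conventions right, so that the witness $z$ ends up on the correct side of $\leq$ for the induction hypothesis to apply (in the $\Box$ case we need $z \leq y$ so that $z \Vdash A$ propagates up to $y$; in the $\Diamond$ case we need $y \leq z$ for the same reason). Once the compositions are read off correctly, each modal case is a two-line diagram chase plus one appeal to the induction hypothesis. I would therefore present the proof compactly: set up the induction, dispatch the propositional cases with a sentence each, and give the $\Box$ and $\Diamond$ cases in full since they are the only places the hypotheses on the frame are used.
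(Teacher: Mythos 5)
Your proof is correct and is exactly the argument the paper intends: the paper's appendix proof is just ``by induction on $A$,'' and your expansion fills in the propositional cases in the standard way and correctly uses downward confluence for the $\Box$ case and forward confluence for the $\Diamond$ case, with the witness $z$ on the right side of $\leq$ each time. No issues.
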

Our definition of $\Vdash$ differs from the definitions proposed by Fischer-Servi~\cite{Fischer:Servi:1984} and Wijesekera~\cite{Wijesekera:1990}: both in~\cite{Fischer:Servi:1984} and~\cite{Wijesekera:1990}, 
$x\Vdash \square A$ if and only if for all $x'\in W$ with $x\leq x'$ and for all $y\in W$ with $Rx'y$, it holds $y \Vdash A$ whereas in~\cite{Wijesekera:1990}, $x\Vdash \lozenge A$ if and only if for all $x'\in W$ with $x\leq x'$ then there exists $y\in W$ such that $Rx'y$ and $y\Vdash A$.
%Fischer Servi~\cite{FischerServi:1984} has defined the satisfiability of $\{\square,\lozenge\}$-formulas as follows:
%
%
%\begin{itemize}
%
%
%\item $s{\Vdash_{\mathtt{FS}}}{\square}A$ if and only if for all $t{\in}W$, if $s{\leq}t$ then for all $u{\in}W$, if $t{R}u$ then $u{\Vdash_{\mathtt{FS}}}A$,
%\item $s{\Vdash_{\mathtt{FS}}}{\square}A$ if and only if for all $t{\in}W$, if $s{\leq}{\circ}{R}t$ then $t{\Vdash_{\mathtt{FS}}}A$,
%
%
%\item $s{\Vdash_{\mathtt{FS}}}{\lozenge}A$ if and only if there exists $t{\in}W$ such that $s{R}t$ and $t{\Vdash_{\mathtt{FS}}}A$.
%
%
%\end{itemize}
%
%
%Wijesekera~\cite{Wijesekera:1990} has defined the satisfiability of $\{\square,\lozenge\}$-formulas as follows:
%
%
%\begin{itemize}
%
%
%\item $s{\Vdash_{\mathtt{FS}}}{\square}A$ if and only if for all $t{\in}W$, if $s{\leq}t$ then for all $u{\in}W$, if $t{R}u$ then $u{\Vdash_{\mathtt{FS}}}A$,
%\item $s{\Vdash_{\mathtt{FS}}}{\square}A$ if and only if for all $t{\in}W$, if $s{\leq}{\circ}{R}t$ then $t{\Vdash_{\mathtt{FS}}}A$,
%
%
%\item $s{\Vdash_{\mathtt{W}}}{\lozenge}A$ if and only if for all $t{\in}W$, if $s{\leq}t$ then there exists $u{\in}W$ such that $t{R}u$ and $u{\Vdash_{\mathtt{W}}}A$.
%
%
%\end{itemize}
%
%
%Our definition of $\Vdash$ requires to restrict the discussion to ${\mathcal C}_{\fdcfra}$ whereas the definition of $\Vdash_{\mathtt{FS}}$ requires to restrict the discussion to ${\mathcal C}_{\fcfra}$, otherwise Lemma~\ref{lemma:HB:monotonicity} would not hold.
%The definition of $\Vdash_{\mathtt{W}}$ does not necessitate to restrict the discussion.
However,
\begin{proposition}\label{proposition:comparing:FS:and:W}
In ${\mathcal C}_{\fdcfra}$, our definition of $\Vdash$ determines the same satisfiability relation as the one determined by the definitions proposed in~\cite{Fischer:Servi:1984} and~\cite{Wijesekera:1990}.
\end{proposition}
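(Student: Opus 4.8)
The plan is to argue by induction on the structure of the formula $A$ that, for every world $x$, the local relation $\Vdash$ of the present paper, the Fischer-Servi relation, and the Wijesekera relation all assign the same truth value to $A$ at $x$. Since these three relations are defined by identical clauses for atoms and for $\top,\bot,\vee,\wedge,\supset$, the induction hypothesis carries them unchanged through all the propositional cases, so the whole argument reduces to the two modal cases. Writing $x\Vdash^{h}\square A$ for ``for all $x'$ with $x\leq x'$ and all $y$ with $Rx'y$ we have $y\Vdash A$'' and $x\Vdash^{h}\lozenge A$ for ``for all $x'$ with $x\leq x'$ there is $y$ with $Rx'y$ and $y\Vdash A$'', it suffices to show that on frames in ${\mathcal C}_{\fdcfra}$ one has $x\Vdash\square A \iff x\Vdash^{h}\square A$ and $x\Vdash\lozenge A \iff x\Vdash^{h}\lozenge A$; the Fischer-Servi relation lies between the local and the Wijesekera ones, so it then coincides with both.

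For the box case, the implication $x\Vdash^{h}\square A \Rightarrow x\Vdash\square A$ is immediate by instantiating the quantifier over $\geq$-successors of $x$ with $x$ itself. For the converse, suppose $x\Vdash\square A$ and take any $x'$ with $x\leq x'$ and any $y$ with $Rx'y$; then $x\,({\leq}{\circ}{R})\,y$, so downward confluence yields some $z$ with $Rxz$ and $z\leq y$. From $x\Vdash\square A$ we get $z\Vdash A$, and the Heredity Property (Lemma~\ref{lemma:HB:monotonicity}) gives $y\Vdash A$; hence $x\Vdash^{h}\square A$. For the diamond case, $x\Vdash^{h}\lozenge A \Rightarrow x\Vdash\lozenge A$ again follows by taking $x'=x$. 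For the converse, suppose $x\Vdash\lozenge A$, so there is $y$ with $Rxy$ and $y\Vdash A$, and let $x'$ be any world with $x\leq x'$; then $x'\,({\geq}{\circ}{R})\,y$, so forward confluence yields some $z$ with $Rx'z$ and $z\geq y$, whence $z\Vdash A$ by Lemma~\ref{lemma:HB:monotonicity}, so $x'$ has an $R$-successor forcing $A$. As $x'$ was arbitrary, $x\Vdash^{h}\lozenge A$.

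Throughout, whenever the induction hypothesis is needed it is applied to the immediate subformula $A$ before invoking Lemma~\ref{lemma:HB:monotonicity}, so that heredity is used only for formulas on which the three relations already agree (alternatively, one can phrase the result as two separate lemmas, local $\square$ $=$ hereditary $\square$ under downward confluence and local $\lozenge$ $=$ hereditary $\lozenge$ under forward confluence, proved by the same induction). The argument is routine; the only point requiring care is to pair each frame condition with the right modality~---~downward confluence transports a forcing fact for $A$ \emph{down} along $\leq$ to a $\square$-successor of $x$, while forward confluence transports a $\lozenge$-witness for $A$ \emph{up} along $\leq$ to an $R$-successor of the larger world $x'$~---~after which Lemma~\ref{lemma:HB:monotonicity} closes the remaining $\leq$-gap. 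I do not expect any genuine obstacle beyond this bookkeeping.
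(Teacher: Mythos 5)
Your proof is correct, and it is the standard argument the paper evidently intends (the paper in fact omits a written proof of this proposition): induction on the formula, with downward confluence plus heredity closing the gap for $\square$ and forward confluence plus heredity closing it for $\lozenge$, the trivial directions following by instantiating $x'=x$. You correctly pair each frame condition with the right modality and correctly restrict the use of Lemma~\ref{lemma:HB:monotonicity} to the subformula on which the induction hypothesis already gives agreement of the three relations, so there is nothing to add.
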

From now on in this section, when we write frame (resp. model), we mean forward and downward confluent frame (resp. model).
\\
\\
Obviously, validity in ${\mathcal C}_{\fdcfra}$ is closed with respect to the following inference rules:
\[
  \AxiomC{$p\supset q, p$}
  \RightLabel{(\rm \textbf{MP})}
  \UnaryInfC{$q$}
  \DisplayProof
  \quad
  \AxiomC{$p$}
  \RightLabel{(\rm \textbf{NEC})}
  \UnaryInfC{$\square p$}
  \DisplayProof
\]
Moreover, the following formulas are valid in ${\mathcal C}_{\fdcfra}$:
\iffalse
\begin{description}
%
%
\item[$\K_{\square}$] $\square(p\supset q)\supset(\square p\supset\square q)$,
%
%
\item[$\K_{\Diamond}$]  $\square(p\supset q)\supset(\Diamond p\supset\Diamond q)$,
%
%
\item[$\NotPosBot$] $\neg\Diamond\bot$,
%
%
\item[$\DP$] $\Diamond(p\vee q)\supset\Diamond p\vee\Diamond q$,
%
%
%
%\item[$\wCD$] $\square(p\vee q)\supset((\Diamond p\supset\square q)\supset\square q)$,
%
%
\item[$\CD$] $\square(p\vee q)\supset\Diamond p\vee\square q$.
%
%
\end{description}
\fi
$$
\begin{array}{ll}
  (\K_{\square}) \ \square(p\supset q)\supset(\square p\supset\square q)\qquad  &  (\K_{\Diamond}) \ \square(p\supset q)\supset(\Diamond p\supset\Diamond q) \\
  (\DP) \ \Diamond(p\vee q)\supset\Diamond p\vee\Diamond q & (\CD) \ \square(p\vee q)\supset\Diamond p\vee\square q\\
  (\NotPosBot) \ \neg\Diamond\bot &
\end{array}
$$
In ${\mathcal C}_{\fdcfra}^{\AxiomSer}$ (resp. ${\mathcal C}_{\fdcfra}^{\AxiomRef}$, ${\mathcal C}_{\fdcfra}^{\AxiomTra}$), the formula $\AxiomSer$ (resp. $\AxiomRef$, $\AxiomTra$) is valid:
%
\iffalse
\begin{description}
%
%
\item[$\AxiomSer$] ${\lozenge}{\top}$,
%and ${\neg}{\square}{\bot}$,
%
%
\item[$\AxiomRef$] $({\square}p{\supset}p)\wedge(p{\supset}{\lozenge}{p})$,
%
%
\item[$\AxiomTra$] $({\square}p{\supset}{\square}{\square}p)\wedge({\lozenge}{\lozenge}p{\supset}{\lozenge}{p})$.
%
%
\end{description}
\fi
$$
\begin{array}{ccc}
  (\AxiomSer)\ \lozenge\top \quad &
  (\AxiomRef)\ (\square p\supset p)\wedge(p\supset\lozenge p) \quad &
  (\AxiomTra)\ (\square p\supset\square\square p)\wedge(\lozenge\lozenge p\supset\lozenge p)
\end{array}
$$
%
%
%The above formulas and inference rules have been already considered in the above-mentioned literature.
Notice that axiom $\CD$ has also been considered in~\cite{Balbiani:et:al:2021} where it was called $\mathbf{CD}$ (for ``constant domain'') because of its relationship with the first-order formula $\forall x.(P(x){\vee}Q(x)){\supset}\exists x.P(x){\vee}\forall x.Q(x)$ which is intuitionistically valid when models with constant domains are considered.
\begin{definition}[Axiom system]
%
%
%Let \dik~be the Hilbert-style axiom system consisting of all standard axioms of \textbf{IPL}, the inference rules $\MP$ and $\NEC$ and the axioms $\K_{\square}$, $\K_{\Diamond}$, $\NotPosBot$, $\DP$ and $\CD$.
For all $X{\subseteq}\{\AxiomSer,\AxiomRef,\AxiomTra\}$, let \dik$X$ be the axiomatic system consisting of all standard axioms of \textbf{IPL}, the inference rules $\MP$ and $\NEC$, the axioms $\K_{\square}$, $\K_{\Diamond}$, $\NotPosBot$, $\DP$ and $\CD$ and containing in addition the axioms from $X$.
We write ``\dik'' instead of ``\dik$\emptyset$''.
Derivations are defined as usual.
We write $\vdash_{\mathbf{LIK}X}A$ when $A$ is \dik$X$-derivable.
The sets of all \dik$X$-derivable formulas will also be denoted \dik$X$.
\end{definition}
From now on in this section, let $X{\subseteq}\{\AxiomSer,\AxiomRef,\AxiomTra\}$.
\begin{lemma}\label{lemma:about:D:and:axioms}
If $\AxiomSer{\in}X$ or $\AxiomRef{\in}X$ then $\square p{\supset}\lozenge p$ and $\neg\square\bot$ are in \dik$X$.
\end{lemma}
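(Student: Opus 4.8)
The plan is to derive both formulas purely syntactically inside \dik$X$, splitting on whether $\AxiomSer\in X$ or $\AxiomRef\in X$, and in each case reducing the target formulas to already-available axioms via standard \textbf{IPL}-reasoning together with $\K_{\square}$, $\K_{\Diamond}$, $\NotPosBot$ and the rules $\MP$, $\NEC$. Throughout I will freely use that \dik$X$ extends \textbf{IPL}, so propositional tautologies and all propositionally derived consequences of the listed axioms are available, and that $\NEC$ lets me internalise any theorem under a $\square$.

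First I would treat $\neg\square\bot$, i.e. $\square\bot\supset\bot$. The key observation is $\K_{\Diamond}$ instantiated with $p:=\bot$, $q:=\bot$: from $\square(\bot\supset\bot)\supset(\lozenge\bot\supset\lozenge\bot)$ this particular instance is not quite what I want; instead I use $\K_{\Diamond}$ in the form $\square(\bot\supset q)\supset(\lozenge\bot\supset\lozenge q)$. Actually the cleanest route: $\bot\supset\top$ is an \textbf{IPL}-theorem, wait — I need to move from $\square\bot$ to $\lozenge\bot$ and then invoke $\NotPosBot$. For that, observe that under either hypothesis $\AxiomSer\in X$ or $\AxiomRef\in X$ we will have shown $\square p\supset\lozenge p$ (see below); instantiating with $p:=\bot$ gives $\square\bot\supset\lozenge\bot$, and chaining with $\NotPosBot=\neg\lozenge\bot=(\lozenge\bot\supset\bot)$ yields $\square\bot\supset\bot$, i.e. $\neg\square\bot$, by propositional transitivity of implication. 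So $\neg\square\bot$ follows from $\square p\supset\lozenge p$ and $\NotPosBot$, and it suffices to establish $\square p\supset\lozenge p$ in each of the two cases.

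Next, the case $\AxiomSer\in X$. Here $\AxiomSer=\lozenge\top$ is an axiom. I want $\square p\supset\lozenge p$. From the \textbf{IPL}-theorem $\square p\supset\square(\top\supset p)$ — this holds because $\vdash_{\textbf{IPL}} p\supset(\top\supset p)$, hence $\vdash p\supset(\top\supset p)$ in \dik$X$, hence by $\NEC$ and $\K_{\square}$ we get $\square p\supset\square(\top\supset p)$. Then $\K_{\Diamond}$ with $p:=\top$, $q:=p$ gives $\square(\top\supset p)\supset(\lozenge\top\supset\lozenge p)$. Combining, $\square p\supset(\lozenge\top\supset\lozenge p)$; discharging the antecedent $\lozenge\top$ via the axiom $\AxiomSer$ yields $\square p\supset\lozenge p$ by propositional reasoning. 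For the case $\AxiomRef\in X$, the second conjunct of $\AxiomRef$ is $p\supset\lozenge p$, and the first conjunct is $\square p\supset p$; composing these two gives $\square p\supset\lozenge p$ immediately by transitivity of implication. In both cases we then get $\neg\square\bot$ as in the previous paragraph.

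The proof is essentially routine Hilbert-style bookkeeping; the only mild subtlety — the "main obstacle", such as it is — is making sure the derivation of $\square p\supset\square(\top\supset p)$ is carried out correctly from $\NEC$ and $\K_{\square}$ rather than assumed, and noticing that $\neg\square\bot$ is most economically obtained as a consequence of $\square p\supset\lozenge p$ (uniformly over the two cases) together with $\NotPosBot$, so that one does not need a separate argument for it. I would present the $\AxiomSer$ derivation and the $\AxiomRef$ derivation as two short displayed chains of implications, then close with the one-line deduction of $\neg\square\bot$.
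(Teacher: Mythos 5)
Your proposal is correct and follows exactly the route the paper indicates (its proof is just the list of ingredients $\K_{\square}$, $\NEC$, $\K_{\Diamond}$ and $\NotPosBot$, which are precisely what you use, together with the relevant axiom from $X$). The derivation of $\square p\supset\lozenge p$ via $\square p\supset\square(\top\supset p)$ and $\K_{\Diamond}$ in the seriality case, the direct composition of the two conjuncts of $\AxiomRef$ in the reflexivity case, and the reduction of $\neg\square\bot$ to $\square\bot\supset\lozenge\bot$ plus $\NotPosBot$ are all sound.
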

%
%
%\begin{theorem}[Soundness]\label{theorem:soundness:dfik}
\begin{theorem}\label{theorem:soundness:dfik}
\dik$X$-derivable formulas are ${\mathcal C}_{\fdcfra}^{X}$-validities.
%$\subseteq\Log({\mathcal C}_{\fdcfra})$, i.e. for all formulas $A$, if $\vdash_{\mathbf{DFIK}}A$ then $A{\in}\Log({\mathcal C}_{\fdcfra})$.
%Moreover, \did\ $\subseteq\Log({\mathcal C}_{\fdcfra}^{\SERIALITY})$, \dit\ $\subseteq\Log({\mathcal C}_{\fdcfra}^{\REFLEXIVITY})$ and \ditra\ $\subseteq\Log({\mathcal C}_{\fdcfra}^{\TRANSITIVITY})$.
%
%
\end{theorem}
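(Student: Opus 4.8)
The plan is to prove soundness by induction on the length of an $\mathbf{LIK}X$-derivation. So I would first establish that every axiom of the system is valid in $\mathcal{C}_{\fdcfra}^{X}$, and then check that the inference rules $\MP$ and $\NEC$ preserve $\mathcal{C}_{\fdcfra}^{X}$-validity; since validity is defined as truth in every model based on a frame in the class, it suffices throughout to fix an arbitrary such model $\mathcal{M}=(W,\leq,R,V)$ and argue about satisfiability at an arbitrary world. For the propositional part, the standard axioms of $\mathbf{IPL}$ are valid because the $\top,\bot,\vee,\wedge,\supset$ clauses of $\Vdash$ are the usual Kripke-intuitionistic ones and $V$ sends atoms to $\leq$-closed sets, so the ordinary soundness argument for intuitionistic propositional logic goes through verbatim; similarly $\MP$ is immediate from the $\supset$-clause, and closure under $\MP$ and $\NEC$ has in fact already been noted in the excerpt just above the theorem. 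The only subtlety worth a remark is that heredity (Lemma~\ref{lemma:HB:monotonicity}) is what guarantees the $\supset$-clause behaves well, but that lemma is already available.

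The substantive work is verifying the modal axioms $\K_{\square}$, $\K_{\Diamond}$, $\NotPosBot$, $\DP$, $\CD$, and — when the relevant member is in $X$ — the axioms $\AxiomSer$, $\AxiomRef$, $\AxiomTra$. Here one simply unfolds the local truth conditions. For $\K_{\square}$: if $x\Vdash\square(p\supset q)$ and $x\Vdash\square p$, then for every $y$ with $Rxy$ we have $y\Vdash p\supset q$ and $y\Vdash p$, hence $y\Vdash q$ (using the $\supset$-clause at $y$ with the reflexive instance $y\leq y$), so $x\Vdash\square q$; and one must also push this through an arbitrary $\leq$-successor $x'$ of $x$ to handle the outermost $\supset$, which is routine. $\K_{\Diamond}$ is analogous, combining the $\square$-premise with a witness for $\Diamond p$. $\NotPosBot$ holds because $x\Vdash\Diamond\bot$ would require a world forcing $\bot$, which is impossible. $\DP$ is immediate: a witness $y$ for $\Diamond(p\vee q)$ forces $p\vee q$, hence forces $p$ or forces $q$, yielding $\Diamond p$ or $\Diamond q$. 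The interesting one is $\CD$: assuming $x'\Vdash\square(p\vee q)$ at some $x'\geq x$, every $R$-successor of $x'$ forces $p$ or forces $q$; if some successor forces $p$ we get $\Diamond p$, otherwise every successor forces $q$ and we get $\square q$ — the point being that the local reading of $\square$ makes this case split legitimate without any confluence acrobatics (the confluence conditions are needed for heredity, not for $\CD$ itself). For the optional axioms: $\AxiomSer$ follows from seriality of $R$ (a successor exists and trivially forces $\top$); $\AxiomRef$ from reflexivity of $R$ ($\square p\supset p$ via $Rxx$, and $p\supset\lozenge p$ via $Rxx$ with $x$ itself as witness); $\AxiomTra$ from transitivity of $R$ in the familiar way, again using that the $\square$ and $\lozenge$ clauses look only at $R$-successors, not at $\leq{\circ}R$.

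Putting these together: each axiom is valid in the appropriate class, the rules preserve validity, hence by induction every $\mathbf{LIK}X$-derivable formula is a $\mathcal{C}_{\fdcfra}^{X}$-validity. I do not expect a genuine obstacle here; the only place demanding slight care is making sure that, for axioms whose main connective or subformulas involve $\supset$, one always evaluates at an arbitrary $\leq$-successor and invokes Lemma~\ref{lemma:HB:monotonicity} to transport the forcing of modal subformulas upward along $\leq$ — this is exactly why the downward and forward confluence conditions were built into the frame class. Everything else is a direct unfolding of the local satisfaction clauses.
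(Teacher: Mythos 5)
Your proposal is correct and follows exactly the route the paper takes: the paper's proof is literally the one-line "By induction on the length of a derivation of $A$," and your write-up is a correct elaboration of that induction, with the axiom verifications (including the case split for $\CD$ under the local reading of $\square$) and rule-preservation arguments all sound.
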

%
%
%\begin{proof}
%
%
%By induction on the length of a derivation of $A$.
%
%
%\end{proof}
%
%
%
%
%\subsection{Completeness}\label{section:about:theories}
%
%
%\begin{theorem}[Completeness of \axiomsystem]\label{theorem:completeness:dfik}
%
%
%For all formulas $A$, if $\Vdash A$ then $\vdash A$.
%
%
%\end{theorem}
%
%
%Based on the canonical model construction,
%Theorem~\ref{theorem:soundness:dfik} can be proved by induction on the length of the derivation of $A$.
%Later, we will prove the converse inclusion (Completeness) saying that all ${\mathcal C}_{\fdcfra}^{X}$-validities are \dik$X$-derivable.
We now prove the converse inclusion (Completeness) saying that every formula valid in ${\mathcal C}_{\fdcfra}^{X}$ is \dik$X$-derivable.
%The proof of Theorem~\ref{theorem:completeness:dfik} will be presented later in this section.
%It will be based on the canonical model construction.
%At the heart of the proof of Theorem~\ref{theorem:completeness:dfik}, there is the concept of theory.
At the heart of our  completeness proof, there is  the concept of theory.
Let $\mathbf{L}{=}$\dik$X$.
\begin{definition}[Theories]
A \textit{theory} is a set of formulas containing $\mathbf{L}$ and closed with respect to \textbf{MP}.
A theory $\Gamma$ is \textit{proper} if $\bot\not\in\Gamma$.
A proper theory $\Gamma$ is \textit{prime} if for all formulas $A,B$, if $A\vee B\in\Gamma$ then either $A\in\Gamma$, or $B\in\Gamma$.
%For all theories $\Gamma$ and for all formulas $A$, let $\Gamma+A=\{B\in\+L:\ A\supset B\in\Gamma\}$.
%and $\square\Gamma=\{A\in\+L:\ \square A\in\Gamma\}$.
%
%
\end{definition}
%
%
%Obviously, $\mathbf{L}$ is the least theory and $\+L$ is the greatest theory.
%Moreover,
%for all families $(\Gamma_{i})_{i\in I}$ of theories, $\bigcap\{\Gamma_{i}:\ i\in I\}$ is a theory and for all nonempty chains $(\Gamma_{i})_{i\in I}$ of theories, $\bigcup\{\Gamma_{i}:\ i\in I\}$ is a theory.
%
%
%\begin{definition}[Proper theories]
%
%
%A theory $\Gamma$ is \textit{proper} if $\bot\not\in\Gamma$.
%
%
%\end{definition}
%
%
%Obviously,
%for all theories $\Gamma$, $\Gamma$ is proper if and only if $\Gamma\not=\+L$ if and only if $\Diamond\bot\not\in\Gamma$.
%
%
%\begin{definition}[Prime proper theories]
%
%
%A proper theory $\Gamma$ is \textit{prime} if for all formulas $A,B$, if $A\vee B\in\Gamma$ then either $A\in\Gamma$, or $B\in\Gamma$.
%
%
%\end{definition}
%
%
\begin{lemma}\label{lemma:another:property:of:LIKD}
If $\AxiomSer{\in}X$ or $\AxiomRef{\in}X$ then for all theories $\Gamma$, $\lozenge\square\Gamma{\subseteq}\Gamma$.
\end{lemma}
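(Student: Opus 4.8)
The plan is to reduce the statement to Lemma~\ref{lemma:about:D:and:axioms} together with the closure of the axiom system $\mathbf{L}$ under uniform substitution. First I would unfold the notation: by definition $\square\Gamma=\{A\in\Fo:\ \square A\in\Gamma\}$, hence $\lozenge\square\Gamma=\{\lozenge A\in\Fo:\ \square A\in\Gamma\}$. So the claim $\lozenge\square\Gamma\subseteq\Gamma$ amounts exactly to: for every formula $A$, if $\square A\in\Gamma$ then $\lozenge A\in\Gamma$.

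Now fix such an $A$ and assume $\square A\in\Gamma$. By Lemma~\ref{lemma:about:D:and:axioms}, since $\AxiomSer\in X$ or $\AxiomRef\in X$, the formula $\square p\supset\lozenge p$ belongs to $\mathbf{L}$. Since $\mathbf{L}$ is the set of formulas derivable in an axiomatic system whose axioms and inference rules ($\MP$, $\NEC$) are closed under uniform substitution, $\mathbf{L}$ is itself closed under uniform substitution; instantiating $p$ by $A$ yields $\square A\supset\lozenge A\in\mathbf{L}$.

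Finally, since $\Gamma$ is a theory, $\mathbf{L}\subseteq\Gamma$, so $\square A\supset\lozenge A\in\Gamma$; and since $\Gamma$ is closed with respect to $\MP$ and $\square A\in\Gamma$, we conclude $\lozenge A\in\Gamma$. As $A$ was an arbitrary formula with $\square A\in\Gamma$, this establishes $\lozenge\square\Gamma\subseteq\Gamma$.

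There is essentially no obstacle here: the whole mathematical content has been pushed into Lemma~\ref{lemma:about:D:and:axioms} (which in turn relies on the axioms $\AxiomSer$/$\AxiomRef$ and a short derivation in $\mathbf{L}$). The only point worth flagging explicitly is the appeal to closure of $\mathbf{L}$ under uniform substitution, which is what allows the schematic theorem $\square p\supset\lozenge p$ to be used at the arbitrary formula $A$; everything else is a direct application of the closure conditions defining a theory.
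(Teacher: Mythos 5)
Your proof is correct and is essentially the paper's own argument: the paper also reduces the claim to ``if $\square A\in\Gamma$ then $\lozenge A\in\Gamma$'', invokes Lemma~\ref{lemma:about:D:and:axioms} to get $\square A\supset\lozenge A\in\mathbf{L}$ (silently using the same closure under substitution you make explicit), and closes with $\MP$; the only cosmetic difference is that the paper phrases it as a proof by contradiction.
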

%
%
%\begin{lemma}\label{lemma:theory:gamma:plus:A}
%
%
%For all theories $\Gamma$ and for all formulas $A$,
%
%
%\begin{enumerate}
%
%
%\item
%(i)~$\Gamma+A$ is the least theory containing $\Gamma$ and $A$;
%
%
%\item
%(ii)~$\Gamma+A$ is proper if and only if $\neg A\not\in\Gamma$;
%
%
%\item
%(iii)~$\square\Gamma$ is a theory.
%
%
%\end{enumerate}
%
%
%\end{lemma}
%
%
%\begin{proof}
%
%
%Use axioms and inference rules of Intuitionistic Propositional Logic.
%
%
%\end{proof}
%
%
%For all theories $\Gamma$, let $\square\Gamma=\{A\in\+L:\ \square A\in\Gamma\}$.
%
%
%\begin{lemma}\label{lemma:theory:square:gamma}
%
%
%For all theories $\Gamma$, $\square\Gamma$ is a theory.
%
%
%\end{lemma}
%
%
%\begin{proof}
%
%
%Use axiom $(\K_{\square})$ and inference rule $(\NEC)$.
%
%
%\end{proof}
%
%
%Lemmas~\ref{lemma:theory:gamma:plus:A} and~\ref{lemma:theory:square:gamma} can be proved by using standard axioms of Intuitionistic Propositional Logic, inference rules $(\MP)$ and $(\NEC)$ and axiom $\K_{\square}$.
%From now on in this paper, we use Lemmas~\ref{lemma:theory:gamma:plus:A} and~\ref{lemma:theory:square:gamma} without explicitly mentioning them.
%Lemma~\ref{lemma:theory:gamma:plus:A} can be proved by using standard axioms of \textbf{IPL}, inference rules $(\MP)$ and $(\NEC)$ and axiom $\K_{\square}$.
%
%
%\begin{lemma}[Lindenbaum's Lemma]\label{lemma:almost:completeness}
%
%
%Let $A$ be a formula and $\Gamma$ be a theory.
%If $A\not\in\Gamma$ then there exists a prime theory $\Delta$ such that $A\not\in\Delta$ and $\Gamma\subseteq\Delta$.
%
%
%\end{lemma}
%
%
\begin{definition}[Canonical model]
The {\em canonical model}\/ is the model $(W_{\mathbf{L}},$
$\leq_{\mathbf{L}},R_{\mathbf{L}},V_{\mathbf{L}})$ where
\begin{itemize}
\item 
$W_{\mathbf{L}}$ is the nonempty set of all prime theories,
\item
$\leq_{\mathbf{L}}$ is the partial order on $W_{\mathbf{L}}$ such that $\Gamma{\leq_{\mathbf{L}}}\Delta$ if and only if $\Gamma{\subseteq}\Delta$,
%$\leq_{\mathbf{L}}$ is the inclusion relation on $W_{\mathbf{L}}$ and
%
%
\item
$R_{\mathbf{L}}$ is the binary relation on $W_{\mathbf{L}}$ such that $R_{\mathbf{L}}\Gamma\Delta$ iff $\square\Gamma{\subseteq}\Delta$ and $\lozenge\Delta{\subseteq}\Gamma$,
\item $V_{\mathbf{L}}$ is the valuation on $W_{\mathbf{L}}$ such that for all $p{\in}\At$, $V_{\mathbf{L}}(p)=\{\Gamma{\in}W_{\mathbf{L}}:\ p{\in}\Gamma\}$.
\end{itemize}
%
%
%The frame $(W_{\mathbf{L}},\leq_{\mathbf{L}},R_{\mathbf{L}})$ is called {\em canonical frame.}
%For all $\Gamma,\Delta\in W_{\mathbf{L}}$, we write ``$\Gamma\geq_{\mathbf{L}}\Delta$'' instead of ``$\Delta\leq_{\mathbf{L}}\Gamma$''.
%
%
%Let $V_{\mathbf{L}}$ be the valuation on $W_{\mathbf{L}}$ such that for all $p{\in}\At$, $V_{\mathbf{L}}(p)=\{\Gamma{\in}W_{\mathbf{L}}:\ p{\in}\Gamma\}$.
%The valuation $V_{\mathbf{L}}$ is called {\em canonical valuation.}
%The model $(W_{\mathbf{L}},\leq_{\mathbf{L}},R_{\mathbf{L}},V_{\mathbf{L}})$ is called {\em canonical model.}
%
%
\end{definition}
\begin{lemma}\label{fc:canonical:frame:is:forward:confluent}
\begin{enumerate}
\item $(W_{\mathbf{L}},\leq_{\mathbf{L}},R_{\mathbf{L}},V_{\mathbf{L}})$ is forward confluent,
\item $(W_{\mathbf{L}},\leq_{\mathbf{L}},R_{\mathbf{L}},V_{\mathbf{L}})$ is downward confluent,
\item if $\AxiomSer{\in}X$ (resp. $\AxiomRef{\in}X$, $\AxiomTra{\in}X$) then $(W_{\mathbf{L}},\leq_{\mathbf{L}},R_{\mathbf{L}},V_{\mathbf{L}})$ is serial (resp. reflexive, transitive).
\end{enumerate}
\end{lemma}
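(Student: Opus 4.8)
The proof establishes three properties of the canonical model. Throughout, the key technical device is the standard "Lindenbaum-style" extension lemma: any proper theory (or more generally any pair $(\Gamma_0,\Delta_0)$ of sets with suitable closure/consistency properties) can be extended to a prime theory. I will assume such a lemma is available from the surrounding development, since it is the workhorse for all canonical-model arguments. Recall that $R_{\mathbf L}\Gamma\Delta$ holds iff $\square\Gamma\subseteq\Delta$ and $\lozenge\Delta\subseteq\Gamma$; the two-sided nature of this definition (coming from the "local" readings of $\square$ and $\lozenge$) is what makes the confluence arguments less routine than in the classical case.

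For \textbf{forward confluence} ($\geq\circ R\subseteq R\circ\geq$): suppose $\Gamma\leq_{\mathbf L}\Gamma'$ (i.e. $\Gamma\subseteq\Gamma'$) and $R_{\mathbf L}\Gamma\Delta$; I must find $\Delta'$ with $R_{\mathbf L}\Gamma'\Delta'$ and $\Delta\subseteq\Delta'$. The natural candidate is to close $\square\Gamma'\cup\Delta$ under the theory operations and show the result is consistent and compatible with $\lozenge\Delta'\subseteq\Gamma'$. The crucial consistency check uses axiom $\K_\square$ (to push $\square\Gamma'$-formulas and $\Delta$-formulas together) together with $\NotPosBot$/$\CD$; one shows that if a contradiction were derivable from $\square\Gamma'\cup\Delta$ then, boxing appropriately and using $R_{\mathbf L}\Gamma\Delta$, a contradiction already arises in $\Gamma'$. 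One then extends to a prime $\Delta'$ taking care that $\lozenge\Delta'\subseteq\Gamma'$ is preserved — this last clause needs the $\lozenge\Delta\subseteq\Gamma\subseteq\Gamma'$ half and axiom $\DP$ to control disjunctions introduced by priming. For \textbf{downward confluence} ($\leq\circ R\subseteq R\circ\leq$): suppose $\Gamma'\leq_{\mathbf L}\Gamma$ — wait, rather $\Gamma\leq_{\mathbf L}\Gamma'$ is not the pattern; here we have $\Delta$ with $R_{\mathbf L}\Gamma\Delta$ and must handle $\Gamma\leq_{\mathbf L}\Gamma'$... more precisely, downward confluence asks: given $R_{\mathbf L}\Gamma\Delta$ and $\Delta\leq_{\mathbf L}\Delta'$, produce $\Gamma'$ with $R_{\mathbf L}\Gamma'\Delta'$ and $\Gamma\leq_{\mathbf L}\Gamma'$. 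Symmetrically, the candidate is (a prime extension of) the closure of $\Gamma\cup\lozenge\Delta'$, with the consistency argument now driven by $\K_\Diamond$ and $\CD$, and the clause $\square\Gamma'\subseteq\Delta'$ maintained using $\square\Gamma\subseteq\Delta\subseteq\Delta'$.

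For the \textbf{frame-condition transfers} in item~3: if $\AxiomSer\in X$, seriality of $R_{\mathbf L}$ follows because for any prime $\Gamma$ the set $\square\Gamma$ (augmented to respect the $\lozenge$-clause) is consistent — here one invokes $\neg\square\bot\in\mathbf L$ from Lemma~\ref{lemma:about:D:and:axioms}, so $\bot\notin\square\Gamma$, and extends. If $\AxiomRef\in X$, one checks directly that $R_{\mathbf L}\Gamma\Gamma$: $\square\Gamma\subseteq\Gamma$ is immediate from axiom $\square p\supset p$, and $\lozenge\Gamma\subseteq\Gamma$ follows from $p\supset\lozenge p$. If $\AxiomTra\in X$, given $R_{\mathbf L}\Gamma\Delta$ and $R_{\mathbf L}\Delta\Theta$ one verifies $R_{\mathbf L}\Gamma\Theta$ using $\square p\supset\square\square p$ for the $\square$-inclusion and $\lozenge\lozenge p\supset\lozenge p$ for the $\lozenge$-inclusion.

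\textbf{Main obstacle.} The delicate point is not the algebra with the axioms but the \emph{simultaneous} management of the two inclusions defining $R_{\mathbf L}$ during the Lindenbaum extension: when we prime the newly-built theory we may introduce disjuncts that threaten either $\square\Gamma'\subseteq\Delta'$ or $\lozenge\Delta'\subseteq\Gamma'$. Handling this cleanly requires building the right auxiliary pair and exploiting $\DP$ (for the $\lozenge$-side) and $\CD$ (to coordinate $\square$ and $\lozenge$) so that primeness does not break either clause. I expect this bookkeeping — rather than any single axiom application — to be where the real work lies, and it is likely isolated into a preparatory lemma of the form "if $\square\Gamma\subseteq\Delta$ and $\lozenge\Delta\subseteq\Gamma$ and $B\in\Delta$-consistently, then the pair can be primed preserving $R_{\mathbf L}$," which the three parts above then invoke.
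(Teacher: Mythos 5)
Your outlines for items (1) and (3) are broadly in line with the paper, but the plan for item (2) — the heart of the lemma — is aimed at the wrong statement. Downward confluence is ${\leq}\circ R\subseteq R\circ{\leq}$, read as: if $\Gamma\leq_{\mathbf{L}}\Delta$ and $R_{\mathbf{L}}\Delta\Lambda$, then there exists $\Theta$ with $R_{\mathbf{L}}\Gamma\Theta$ and $\Theta\leq_{\mathbf{L}}\Lambda$ (this is exactly what heredity of $\square$ needs, and it is what the paper proves). You instead settle on: given $R_{\mathbf{L}}\Gamma\Delta$ and $\Delta\leq_{\mathbf{L}}\Delta'$, find $\Gamma'\supseteq\Gamma$ with $R_{\mathbf{L}}\Gamma'\Delta'$. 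That is the converse inclusion $R\circ{\leq}\subseteq{\leq}\circ R$, a different frame condition, and your candidate construction (a prime extension of $\Gamma\cup\lozenge\Delta'$) grows the $\leq$-smaller world upward, which cannot produce the witness that downward confluence requires.

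The real difficulty, which your plan therefore never confronts, is that the witness $\Theta$ must be a prime theory \emph{contained in} the given prime theory $\Lambda$, while also containing $\square\Gamma$ and satisfying $\lozenge\Theta\subseteq\Gamma$. A standard Lindenbaum extension only grows theories, so it cannot be invoked directly. The paper instead applies Zorn's Lemma to the family of theories $\Theta$ with $\square\Gamma\subseteq\Theta\subseteq\Lambda$ satisfying the auxiliary invariant ``for all $A,B$, if $\lozenge A\notin\Gamma$ and $A\vee B\in\Theta$ then $B\in\Lambda$''. Axiom $(\CD)$ is used exactly once, to show that $\square\Gamma$ itself satisfies this invariant (from $\square(A\vee B)\in\Gamma$ one gets $\lozenge A\vee\square B\in\Gamma$, hence $\square B\in\Gamma$ when $\lozenge A\notin\Gamma$); $(\DP)$ enters in the primeness argument for the maximal element, in the case where both disjuncts fail through $\lozenge$-witnesses; and the same invariant with $B=\bot$ yields $R_{\mathbf{L}}\Gamma\Theta$. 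None of this machinery appears in your sketch, so even after correcting the direction of the confluence condition, the key construction is missing. (Minor additional point: your suggestion that forward confluence needs $\CD$ is off — the paper's footnote notes that $\CD$ is used only in the downward-confluence step.)
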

%
%
%\begin{definition}[Canonical valuation and canonical model]
%
%
%Let $V_{\mathbf{L}}$ be the valuation on $W_{\mathbf{L}}$ such that for all $p{\in}\At$, $V_{\mathbf{L}}(p)=\{\Gamma{\in}W_{\mathbf{L}}:\ p{\in}\Gamma\}$.
%The valuation $V_{\mathbf{L}}$ is called {\em canonical valuation.}
%The model $(W_{\mathbf{L}},\leq_{\mathbf{L}},R_{\mathbf{L}},V_{\mathbf{L}})$ is called {\em canonical model.}
%
%
%\end{definition}
%
%
%By Theorem~\ref{theorem:soundness:dfik}, $\bot\not\in$ \dik.
%Hence, by Lemma~\ref{lemma:almost:completeness}, $W_{\mathbf{L}}$ is nonempty.
%
%
%Our proof of Theorem~\ref{theorem:completeness:dfik} will be based on the canonical model construction.
%The proof of Theorem~\ref{theorem:completeness:dfik} will be based on the following lemmas.
The proof of the completeness will be based on the following lemmas.
\begin{lemma}[Existence Lemma]\label{lemma:prime:proper:for:implication}
%
%
%Let $\Gamma$ be a prime theory and $B,C$ be formulas.
Let $\Gamma$ be a prime theory.
\begin{enumerate}
\item If $B\supset C\not\in\Gamma$ then there exists a prime theory $\Delta$ such that $\Gamma\subseteq\Delta$, $B\in\Delta$ and $C\not\in\Delta$,
\item if $\square B\not\in\Gamma$ then there exists a prime theory $\Delta$ such that $R_{\mathbf{L}}\Gamma\Delta$ and $B\not\in\Delta$,
\item if $\Diamond B\in\Gamma$ then there exists a prime theory $\Delta$ such that $R_{\mathbf{L}}\Gamma\Delta$ and $B\in\Delta$.
\end{enumerate}
\end{lemma}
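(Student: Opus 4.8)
The plan is to derive all three items from one auxiliary lemma of Lindenbaum type. For a set $S$ of formulas write $\langle S\rangle$ for the least theory containing $S$, i.e.\ the set of $D$ with $S\vdash_{\mathbf L}D$. Call a set $I$ of formulas an \emph{ideal} if $\bot\in I$, $I$ is closed downwards under $\vdash_{\mathbf L}$-implication (if $D\in I$ and $\vdash_{\mathbf L}D'\supset D$ then $D'\in I$), and $I$ is closed under binary disjunction. The auxiliary lemma reads: if $T$ is a theory, $I$ an ideal, and $T\cap I=\emptyset$, then there is a prime theory $\Delta$ with $T\subseteq\Delta$ and $\Delta\cap I=\emptyset$. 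It is proved by Zorn's Lemma: take $\Delta$ maximal among theories extending $T$ and disjoint from $I$ (the union of a chain of such theories is again one); were $\Delta$ not prime, say $A_1\vee A_2\in\Delta$ with $A_1,A_2\notin\Delta$, then each $\langle\Delta\cup\{A_i\}\rangle$ would meet $I$, hence $A_i\supset D_i\in\Delta$ for some $D_i\in I$, so $D_1\vee D_2\in\Delta$ (using $A_1\vee A_2\in\Delta$) while $D_1\vee D_2\in I$, a contradiction; disjointness from $I$ makes $\Delta$ proper since $\bot\in I$. For each item it then remains to choose $T$ and $I$ and to verify $T\cap I=\emptyset$.

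For item~1 take $T=\langle\Gamma\cup\{B\}\rangle$ and $I=\{D:\vdash_{\mathbf L}D\supset C\}$; here $I$ is obviously an ideal, and $T\cap I=\emptyset$ because $C\in T$ would mean $\Gamma\vdash_{\mathbf L}B\supset C$, i.e.\ $B\supset C\in\Gamma$, against the hypothesis; the resulting $\Delta$ has $\Gamma\subseteq\Delta$, $B\in\Delta$, $C\notin\Delta$. For item~3 take $T=\langle\square\Gamma\cup\{B\}\rangle$ and $I=\{D:\lozenge D\notin\Gamma\}$. That $I$ is downward closed uses $\K_{\Diamond}$ and $\NEC$; that $I$ is closed under disjunction uses $\DP$ and the primeness of $\Gamma$; $\bot\in I$ by $\NotPosBot$. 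For $T\cap I=\emptyset$: if $D\in T$ then $\vdash_{\mathbf L}(\psi\wedge B)\supset D$ for some finite conjunction $\psi$ of members of $\square\Gamma$ (so $\square\psi\in\Gamma$ by $\K_{\square}$), and from $\lozenge B\in\Gamma$ one derives first $\lozenge(\psi\wedge B)\in\Gamma$ and then $\lozenge D\in\Gamma$, using $\K_{\square}$, $\K_{\Diamond}$ and $\NEC$; hence $D\notin I$. The resulting $\Delta$ satisfies $\square\Gamma\subseteq\Delta$, $\lozenge\Delta\subseteq\Gamma$ (this last being exactly $\Delta\cap I=\emptyset$) and $B\in\Delta$, hence $R_{\mathbf L}\Gamma\Delta$ and $B\in\Delta$.

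Item~2 is the delicate one and the main obstacle. One wants a prime $\Delta$ with $\square\Gamma\subseteq\Delta$, $\lozenge\Delta\subseteq\Gamma$ and $B\notin\Delta$, so $\Delta$ must simultaneously avoid $B$ and avoid every $E$ with $\lozenge E\notin\Gamma$. The naive bad set $\{B\}\cup\{E:\lozenge E\notin\Gamma\}$ is \emph{not} closed under disjunction, since $\lozenge(B\vee E)\in\Gamma$ may hold when $\lozenge E\notin\Gamma$, so it cannot be fed to the auxiliary lemma. The right choice is $T=\langle\square\Gamma\rangle$ and $I=\{D:\vdash_{\mathbf L}D\supset(B\vee E)\text{ for some }E\text{ with }\lozenge E\notin\Gamma\}$: this $I$ contains $B$ (take $E=\bot$, using $\NotPosBot$) and every such $E$, is downward closed, and is closed under disjunction because $(B\vee E_1)\vee(B\vee E_2)\equiv B\vee(E_1\vee E_2)$ and $\{E:\lozenge E\notin\Gamma\}$ is disjunction-closed (by $\DP$ and primeness of $\Gamma$). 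The crux is $T\cap I=\emptyset$, and this is exactly where axiom $\CD$ is needed: if $D\in T\cap I$ then $\square\psi\in\Gamma$ for a finite conjunction $\psi$ of members of $\square\Gamma$ with $\vdash_{\mathbf L}\psi\supset(B\vee E)$ and $\lozenge E\notin\Gamma$; by $\NEC$ and $\K_{\square}$ we get $\square(B\vee E)\in\Gamma$, whence the instance $\square(E\vee B)\supset(\lozenge E\vee\square B)$ of $\CD$, together with primeness of $\Gamma$, forces $\lozenge E\in\Gamma$ or $\square B\in\Gamma$ — both contradicting our assumptions. The auxiliary lemma then yields a prime $\Delta$ with $\square\Gamma\subseteq\Delta$, $\lozenge\Delta\subseteq\Gamma$ and $B\notin\Delta$ (as $B\in I$), i.e.\ $R_{\mathbf L}\Gamma\Delta$ and $B\notin\Delta$. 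Thus the only genuinely new ingredient, beyond the $\K$-axioms, $\DP$ and $\NotPosBot$ already used for items~1 and~3, is the ``constant-domain'' axiom $\CD$, which is precisely what keeps the single disjunction-closed ideal merging $B$ with the $\lozenge$-forbidden formulas disjoint from $\langle\square\Gamma\rangle$.
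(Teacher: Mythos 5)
Your proof is correct, but it takes a genuinely different route from the paper's. The paper proves item~2 indirectly: it first invokes the existence lemma of the \fik\ completeness proof (which, for $\square B\notin\Gamma$, only yields prime theories $\Lambda,\Delta$ with $\Gamma\subseteq\Lambda$, $R_{\mathbf{L}}\Lambda\Delta$ and $B\notin\Delta$, matching the ``global'' reading of $\square$), and then pulls the $R$-successor back down to $\Gamma$ by appealing to the already-established downward confluence of the canonical frame (Lemma~\ref{fc:canonical:frame:is:forward:confluent}(2)), whose proof is where the paper's single use of $(\CD)$, together with $(\DP)$ and Zorn's Lemma, is concentrated. You instead give a direct, self-contained construction via one Lindenbaum-style separation lemma (prime theory extending a theory $T$ and avoiding a disjunction-closed ideal $I$), and the whole difficulty of item~2 is absorbed into the choice of the ideal $I=\{D:\ \vdash_{\mathbf L}D\supset(B\vee E)\ \text{for some}\ E\ \text{with}\ \lozenge E\notin\Gamma\}$, with $(\CD)$ entering exactly once to show $\langle\square\Gamma\rangle\cap I=\emptyset$; I checked the disjointness argument and the verification that $I$ is an ideal (using $(\DP)$, $(\NotPosBot)$, $\K_\square$, $\K_\Diamond$ and $\NEC$), and they are sound. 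Your ideal is, in effect, a repackaging of the auxiliary set $\mathcal{S}$ used in the paper's proof of downward confluence, so the two arguments use the same combinatorial core; what your version buys is independence from the \fik\ lemma and from Lemma~\ref{fc:canonical:frame:is:forward:confluent}(2) (which of course remains necessary elsewhere, to place the canonical model in ${\mathcal C}_{\fdcfra}^{X}$), plus a uniform treatment of all three items from one separation lemma; what the paper's version buys is brevity and maximal reuse of already-proved material.
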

%
%
%\begin{lemma}[Existence Lemma for $\square$]\label{lemma:prime:proper:for:square}
%
%
%Let $\Gamma$ be a prime proper theory.
%Let $B$ be a formula.
%If $\square B\not\in\Gamma$ then there exists prime proper theories $\Delta,\Lambda$ such that $\Gamma\subseteq\Delta$, $\Delta\bowtie\Lambda$ and $B\not\in\Lambda$.
%
%
%\end{lemma}
%
%
%\begin{lemma}[Existence Lemma for $\Diamond$]\label{lemma:prime:proper:for:lozenge}
%
%
%Let $\Gamma$ be a prime proper theory.
%Let $B$ be a formula.
%If $\Diamond B\in\Gamma$ then there exists a prime proper theory $\Delta$ such that $\Gamma\bowtie\Delta$ and $B\in\Delta$.
%
%
%\end{lemma}
%
%
\begin{lemma}[Truth Lemma]\label{lemma:truth:lemma}
For all formulas $A$ and for all $\Gamma\in W_{\mathbf{L}}$, $A\in\Gamma$ if and only if $(W_{\mathbf{L}},\leq_{\mathbf{L}},R_{\mathbf{L}},V_{\mathbf{L}}),\Gamma\Vdash A$.
\end{lemma}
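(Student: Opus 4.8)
The plan is to prove the Truth Lemma by induction on the structure of the formula $A$, in the usual style of canonical-model arguments.

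For the base case, when $A$ is an atom $p$, the equivalence $p \in \Gamma$ iff $\Gamma \in V_{\mathbf{L}}(p)$ iff $(W_{\mathbf{L}},\leq_{\mathbf{L}},R_{\mathbf{L}},V_{\mathbf{L}}),\Gamma\Vdash p$ is immediate from the definition of $V_{\mathbf{L}}$. The cases $A = \top$ and $A = \bot$ follow because $\Gamma$ is a proper theory. For $A = B \wedge C$ and $A = B \vee C$, the left-to-right and right-to-left directions use, respectively, that $\Gamma$ is a theory closed under \textbf{MP} (so it is closed under provable consequences, in particular under conjunction introduction/elimination) and that $\Gamma$ is prime (for the disjunction case). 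For $A = B \supset C$: if $B \supset C \in \Gamma$ then by closure under \textbf{MP} any $\Delta \supseteq \Gamma$ with $B \in \Delta$ has $C \in \Delta$, and by the induction hypothesis this gives forcing; conversely, if $B \supset C \notin \Gamma$, Lemma~\ref{lemma:prime:proper:for:implication}(1) supplies a prime $\Delta$ with $\Gamma \subseteq \Delta$, $B \in \Delta$, $C \notin \Delta$, and the induction hypothesis then witnesses $\Gamma \not\Vdash B \supset C$.

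The modal cases are the core of the argument and use the specific definition of $R_{\mathbf{L}}$, namely $R_{\mathbf{L}}\Gamma\Delta$ iff $\square\Gamma \subseteq \Delta$ and $\lozenge\Delta \subseteq \Gamma$. For $A = \square B$: if $\square B \in \Gamma$, then for every $\Delta$ with $R_{\mathbf{L}}\Gamma\Delta$ we have $B \in \square\Gamma \subseteq \Delta$, so by the induction hypothesis $\Delta \Vdash B$, hence $\Gamma \Vdash \square B$; the converse is exactly the contrapositive of Lemma~\ref{lemma:prime:proper:for:implication}(2). For $A = \lozenge B$: if $\lozenge B \in \Gamma$, Lemma~\ref{lemma:prime:proper:for:implication}(3) gives a prime $\Delta$ with $R_{\mathbf{L}}\Gamma\Delta$ and $B \in \Delta$, so by the induction hypothesis $\Delta \Vdash B$ and thus $\Gamma \Vdash \lozenge B$; conversely, if $\Gamma \Vdash \lozenge B$ there is $\Delta$ with $R_{\mathbf{L}}\Gamma\Delta$ and $\Delta \Vdash B$, so by the induction hypothesis $B \in \Delta$, and since $R_{\mathbf{L}}\Gamma\Delta$ gives $\lozenge\Delta \subseteq \Gamma$, we get $\lozenge B \in \Gamma$.

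The main obstacle is not in the Truth Lemma proof itself — which, modulo the Existence Lemma, is a routine induction — but is entirely packaged into Lemma~\ref{lemma:prime:proper:for:implication}, whose three clauses require constructing prime theories with prescribed membership/non-membership properties (via a Lindenbaum-style saturation) while respecting the two-sided constraint $\square\Gamma\subseteq\Delta$ and $\lozenge\Delta\subseteq\Gamma$ built into $R_{\mathbf{L}}$. In particular, clauses (2) and (3) must juggle both halves of the definition of $R_{\mathbf{L}}$ simultaneously, and this is where the axioms $\K_\square$, $\K_\Diamond$, $\NotPosBot$, $\DP$ and $\CD$ are used; since that lemma is assumed here, the remaining proof is a direct structural induction as sketched above. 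One point to be careful about is the application of the induction hypothesis inside the scope of a modality: it is applied to the immediate subformula $B$ at a different world $\Delta$, which is legitimate because the induction is on formula structure, not on worlds.
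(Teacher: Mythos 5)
Your proof is correct and follows exactly the route the paper takes: a structural induction on $A$ in which the atomic case is immediate from $V_{\mathbf{L}}$, the propositional cases use that $\Gamma$ is a prime theory, and the cases $B\supset C$, $\square B$, $\lozenge B$ are discharged by the three clauses of the Existence Lemma together with the two-sided definition of $R_{\mathbf{L}}$. The paper's own proof is just a one-line sketch of this same argument, so your write-up is simply a more detailed rendering of it.
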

%
%
%\begin{proof}
%
%
From Lemma~\ref{lemma:truth:lemma}, we conclude.
\begin{theorem}[Completeness]\label{theorem:completeness:dfik}
All ${\mathcal C}_{\fdcfra}^{X}$-validities are \dik$X$-derivable.
\end{theorem}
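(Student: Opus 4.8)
The plan is to derive completeness from the Truth Lemma (Lemma~\ref{lemma:truth:lemma}) together with the confluence/seriality properties of the canonical model (Lemma~\ref{fc:canonical:frame:is:forward:confluent}) in the standard Henkin style, via contraposition. Suppose $A$ is not \dik$X$-derivable; we must produce a model in ${\mathcal C}_{\fdcfra}^{X}$ refuting $A$. First I would form the least theory $\Gamma_0$ containing $A\supset\bot$ is \emph{not} the route — rather, I want a prime theory containing $\neg$ of the failure: since $\not\vdash_{\mathbf L}A$, the deductive closure of $\{\}$ (i.e.\ $\mathbf L$ itself) does not contain $A$, so by a Lindenbaum-style argument there is a prime theory $\Gamma$ with $A\notin\Gamma$. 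Here the Lindenbaum argument needs the usual ingredient: if $\Delta$ is a theory with $A\notin\Delta$, one enumerates all disjunctions and greedily decides them while preserving the property $A\notin\Delta$, using that $\mathbf L$ is closed under $\MP$ and contains all $\IPL$-axioms (in particular the proof-by-cases principle) — this is exactly the content implicit in the statements preceding Lemma~\ref{lemma:prime:proper:for:implication}, and I would either cite it or spell out the one-line enumeration.

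\textbf{Second}, having fixed such a prime theory $\Gamma$ with $A\notin\Gamma$, I invoke the Truth Lemma (Lemma~\ref{lemma:truth:lemma}) to conclude $(W_{\mathbf L},\leq_{\mathbf L},R_{\mathbf L},V_{\mathbf L}),\Gamma\not\Vdash A$. **Third**, I observe that the canonical model is a legitimate member of ${\mathcal C}_{\fdcfra}^{X}$: by parts (1) and (2) of Lemma~\ref{fc:canonical:frame:is:forward:confluent} it is forward and downward confluent, and by part (3), whenever $\AxiomSer\in X$ (resp.\ $\AxiomRef\in X$, $\AxiomTra\in X$) the relation $R_{\mathbf L}$ is serial (resp.\ reflexive, transitive); moreover $\leq_{\mathbf L}$ is a preorder and each $V_{\mathbf L}(p)$ is $\leq_{\mathbf L}$-closed, so $(W_{\mathbf L},\leq_{\mathbf L},R_{\mathbf L},V_{\mathbf L})$ is a model in the required class. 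Hence $A$ is not valid in ${\mathcal C}_{\fdcfra}^{X}$, which is the contrapositive of the theorem.

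\textbf{The main obstacle} is not in this final assembly, which is routine, but in the lemmas it rests on — principally the Truth Lemma and the Existence Lemma (Lemma~\ref{lemma:prime:proper:for:implication}), which we are entitled to assume here. Within the proof of the theorem itself, the only delicate point is making sure the Lindenbaum extension is carried out so that the extended set remains a \emph{theory} (closed under $\MP$ and containing $\mathbf L$) and stays \emph{proper} — note $A\notin\Gamma$ forces $\bot\notin\Gamma$ since $\bot\supset A$ is an $\IPL$-theorem — and \emph{prime}; the non-derivability hypothesis $\not\vdash_{\mathbf L}A$ is exactly what keeps $A$ out at every stage. I would therefore present the proof in three crisp steps: (i) from $\not\vdash_{\mathbf L}A$ obtain a prime theory $\Gamma$ avoiding $A$ (Lindenbaum); (ii) apply Lemma~\ref{lemma:truth:lemma} to get $\Gamma\not\Vdash A$ in the canonical model; (iii) apply Lemma~\ref{fc:canonical:frame:is:forward:confluent} to see the canonical model lies in ${\mathcal C}_{\fdcfra}^{X}$, so $A$ is not a ${\mathcal C}_{\fdcfra}^{X}$-validity.
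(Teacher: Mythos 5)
Your proposal is correct and follows exactly the route the paper intends: the paper's proof of this theorem is just the one-line remark that it follows from the Truth Lemma, and your three-step assembly (Lindenbaum extension to a prime theory avoiding $A$, Truth Lemma to refute $A$ at that theory, Lemma~\ref{fc:canonical:frame:is:forward:confluent} to place the canonical model in ${\mathcal C}_{\fdcfra}^{X}$) is the standard argument being alluded to. The only blemish is a garbled sentence at the start of your first paragraph, but the intended Lindenbaum construction is clear and sound.
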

%
%
%In Chapter~$3$ of his doctoral thesis~\cite{Simpson:1994}, Simpson discusses the formal features that might be expected of an intuitionistic modal logic.
Now, we show that \dik$X$ possesses the formal features that might be expected of an intuitionistic modal logic~\cite[Chapter~$3$]{Simpson:1994}.
%say ${\mathbf L}$:
%
%
%\begin{itemize}
%
%
%\item $\mathbf{L}$ is conservative over Intuitionistic Propositional Logic,
%
%
%\item $\mathbf{L}$ contains all substitution instances of Intuitionistic Propositional Logic and is closed with respect to modus ponens,
%
%
%\item for all formulas $A,B$, if $A{\vee}B$ is in $\mathbf{L}$ then either $A$ is in $\mathbf{L}$, or $B$ is in $\mathbf{L}$,
%
%
%\item the addition of the law of excluded middle to $\mathbf{L}$ yields modal logic $\K$,
%
%
%\item ${\square}$ and ${\lozenge}$ are independent in $\mathbf{L}$.
%
%
%\end{itemize}
%
%
\begin{proposition}\label{lemma:l:min:C1}
\begin{enumerate}
\item \dik$X$ is conservative over $\IPL$,
\item \dik$X$~contains all substitution instances of $\IPL$ and is closed with respect to modus ponens,
\item \dik$X$~has the disjunction property if and only if $\AxiomSer{\in}X$ or $\AxiomRef{\in}X$,
\item the addition of the law of excluded middle to \dik$X$~yields modal logic $\K$,
\item $\square$ and $\lozenge$ are independent in \dik$X$.
\end{enumerate}
\end{proposition}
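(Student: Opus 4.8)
The plan is to treat the five items mostly independently, leaning on the completeness theorem (Theorem~\ref{theorem:completeness:dfik}) and its canonical-model machinery for the semantic directions, and on straightforward syntactic arguments for the rest. For item~(2), closure under modus ponens is immediate from the definition of the axiom system, and since all standard axioms of \textbf{IPL} are among the axioms of \dik$X$, every substitution instance of an \textbf{IPL}-theorem is \dik$X$-derivable; no real work is needed here. For item~(1), conservativity over \textbf{IPL}: one direction is item~(2). For the other, suppose $A$ is a modal-free formula with $\vdash_{\mathbf{LIK}X}A$; by soundness (Theorem~\ref{theorem:soundness:dfik}) $A$ is valid in ${\mathcal C}_{\fdcfra}^{X}$, and one checks that any Kripke model for \textbf{IPL} can be turned into a forward- and downward-confluent $X$-model (e.g.\ by taking $R$ to be the relevant relation making the $X$-conditions hold — the empty relation works when $X=\emptyset$ or contains only $\axiomt$-incompatible constraints; when $\axiomt\in X$ take $R={\leq}$, when $\axiomd\in X$ without $\axiomt$ take $R$ = the total relation on each point's up-set, etc.) without disturbing the forcing of modal-free formulas; hence $A$ is \textbf{IPL}-valid and so \textbf{IPL}-derivable by completeness of \textbf{IPL}.

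For item~(4), adding excluded middle: with LEM the propositional base collapses to classical logic, so $\supset$ becomes material implication and ${\leq}$ can be taken trivial in the canonical model; the forcing clauses for $\square$ and $\lozenge$ are then exactly the classical ones, $\K_{\square}$, $\K_{\Diamond}$, $\DP$, $\CD$, $\NotPosBot$ become derivable in \K\ (indeed $\DP$, $\CD$ are classically valid), and conversely every \K-theorem is derivable; so the resulting logic is exactly \K. A clean way is semantic: a classical (LEM-closed) theory has ${\leq}$-maximal extensions only, so the canonical frame is a classical Kripke frame for \K, and the forcing conditions coincide with the standard modal ones. For item~(5), independence of $\square$ and $\lozenge$: exhibit two models (or two theories in the canonical model) showing that neither modality is definable from the other — e.g.\ a model where $\lozenge p$ holds but no formula built from $\square$ and the propositional connectives holds at the same point with the same truth-value across the class, and symmetrically; concretely one shows $\square\bot$ can be satisfied (a dead end) while $\neg\lozenge\top$ holds there, and a point with $R$-successors where $\lozenge\top$ holds but $\square A$ may fail, so that no translation can mimic both behaviours uniformly.

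Item~(3) is the substantial one and the expected main obstacle. The right-to-left direction (if $\axiomd\in X$ or $\axiomt\in X$ then the disjunction property holds) is, per the introduction, to be proved syntactically via the bi-nested calculi developed later in the paper: from a cut-free derivation of $\vdash A\vee B$ under the appropriate strategy one reads off a derivation of $\vdash A$ or of $\vdash B$, using that in the presence of (\axiomd) or (\axiomt) the $R$-nesting always provides an accessible world so that the critical $\lozenge$/$\square$ rule applications do not block the disjunction splitting; Lemmas~\ref{lemma:about:D:and:axioms} and~\ref{lemma:another:property:of:LIKD} ($\square p\supset\lozenge p$, $\neg\square\bot$, and $\lozenge\square\Gamma\subseteq\Gamma$) are exactly the extra closure facts that make this go through. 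The left-to-right direction (if $\axiomd\notin X$ and $\axiomt\notin X$ then the disjunction property fails) is by a counterexample: the theorem $\CD$ together with the absence of seriality lets one derive, or directly check valid, a disjunction such as $\lozenge\top\vee\neg\lozenge\top$-flavoured instance — more precisely, $\square\bot\vee\neg\square\bot$ is a theorem by LEM-free reasoning only if\dots{} actually the clean witness is that $\neg\square\bot$ is \emph{not} derivable when $\axiomd,\axiomt\notin X$ (Lemma~\ref{lemma:about:D:and:axioms} shows it \emph{is} derivable when one of them is, and soundness against a model with a dead end shows it is not otherwise), while a suitable disjunction built around a dead-end/non-dead-end case split is derivable; one must produce the exact disjunction $C\vee D$ with $\vdash_{\mathbf{LIK}X}C\vee D$ but $\nvdash C$ and $\nvdash D$, checking non-derivability by exhibiting countermodels in ${\mathcal C}_{\fdcfra}^{X}$ and appealing to soundness. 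Assembling item~(3) therefore requires forward references to the proof-theoretic results, and the delicate point is verifying that the strategy-restricted cut-free calculus genuinely yields the single-disjunct derivation in the (\axiomd)/(\axiomt) cases.
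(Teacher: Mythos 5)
The substantial problems are in item~(3); the other items are essentially on track, modulo one remark below. For the direction ``DP implies $\AxiomSer{\in}X$ or $\AxiomRef{\in}X$'' you correctly sense that the witness must separate dead ends from points with successors, but you never produce it, and the candidates you float ($\square\bot\vee\neg\square\bot$, a ``$\lozenge\top\vee\neg\lozenge\top$-flavoured instance'') are instances of excluded middle and are not derivable. The witness is $\lozenge\top\vee\square\bot$: apply $\NEC$ to the $\IPL$-theorem $\top\vee\bot$ and detach the $\CD$-instance $\square(\top\vee\bot)\supset\lozenge\top\vee\square\bot$. Neither disjunct is derivable even in \ditra\ (a one-point model with empty $R$ refutes $\lozenge\top$, a two-point model with an $R$-successor refutes $\square\bot$, and both lie in ${\mathcal C}_{\fdcfra}^{\{\AxiomTra\}}$), which disposes of $X=\emptyset$ and $X=\{\AxiomTra\}$ at once. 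For the converse direction your plan routes through the bi-nested calculi; this is a genuinely different decomposition from the paper's, and it has a coverage gap: Proposition~\ref{prop:dp} concerns only \cdidd\ and \cditt, i.e.\ $X=\{\AxiomSer\}$ and $X=\{\AxiomRef\}$, and no calculus is given for any $X$ containing $\AxiomTra$, so the syntactic route cannot establish item~(3) for, say, $X=\{\AxiomSer,\AxiomTra\}$ or $X=\{\AxiomRef,\AxiomTra\}$. The paper instead argues semantically and uniformly: if $A_{1}\vee A_{2}$ is derivable but neither $A_{i}$ is, completeness (Theorem~\ref{theorem:completeness:dfik}) yields two $X$-models refuting $A_{1}$ and $A_{2}$; glue them under a fresh root $s$ placed $\leq$-below everything, extending $R$ only by the loop $(s,s)$. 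The hypothesis $\AxiomSer{\in}X$ or $\AxiomRef{\in}X$ is used precisely here: seriality of the component relations is what makes the glued model forward confluent (each point above $s$ needs an $R$-successor to match the successor $s$ of $s$); downward confluence and the $X$-conditions are immediate, and $s$ refutes $A_{1}\vee A_{2}$. If you keep the calculus route, you must either restrict the claim or supply calculi (and DP lemmas) for the transitive extensions. Also, the closure facts you invoke (Lemmas~\ref{lemma:about:D:and:axioms} and~\ref{lemma:another:property:of:LIKD}) belong to the canonical-model construction, not to the proof of Lemma~\ref{lemma:dp-cdid}, which instead uses the (\axiomd) and ($\axiomt_\Diamond$) rules to discharge empty modal blocks.

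Two smaller remarks. For item~(1), the paper avoids your case split on $X$ by the single uniform choice $R=\mathbf{Id}$ on an arbitrary intuitionistic countermodel: the identity relation is serial, reflexive and transitive and makes both confluences trivial. Your suggestion $R={\leq}$ for the reflexive case is unsound in general, since forward confluence for $R={\leq}$ amounts to upward directedness of the preorder, which already fails on a branching tree. Items~(2), (4) and~(5) are fine; the paper simply refers to the analogous arguments for \fik\ for (4) and (5), and your sketches are consistent with that.
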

\section{Bi-nested sequent calculi}\label{sec:calculi}
 
In this section, we present bi-nested calculi \cdikk~for \dik \ and its extensions for \did~ and \dit.
The calculi make use of two kinds of nesting representing  $\leq$-upper worlds and $R$-successors in the semantics, similarly to the calculus for \fik~presented in \cite{FIK-csl2024}.
The calculi contain two rules encoding forward and downward confluence.
However, we will show that the latter rule called $(\text{inter}_\downarrow)$ is admissible in \cdikk, so that by dropping this rule we still have a complete calculus for \dik.
But as we will see,  $(\text{inter}_\downarrow)$ rule~is  needed to prove the semantic completeness of the calculus and obtain counter-model extraction.
We also prove the disjunction property for the calculi for \did~and \dit.

In order to define the calculi we need some preliminary notions. 

\begin{definition}[Bi-nested sequent]\label{def-nested-sequent}
  A bi-nested sequent $S$ is defined as:\\
   - \ $\Rightarrow$ is a bi-nested sequent (the empty sequent);\\
   - \ $\Gamma\Rightarrow B_1,\ldots, B_k,[S_1],\ldots,[S_m],\langle T_1\rangle,\ldots,\langle T_n\rangle$ is a bi-nested sequent if all the 
    $B_1,\ldots, B_k$ are formulas, all the $S_1,\ldots,S_m$, $T_1,\ldots, T_n$ are bi-nested sequents where $k,m,n\geq 0$, 
    and $\Gamma$ is a finite (possibly empty) multi-set of formulas.
\end{definition}
We use $S$ and $T$ to denote a  bi-nested sequent and we call it simply a ``sequent" in the rest of this paper. The antecedent and succedent of a sequent $S$ are denoted by $Ant(S)$ and $Suc(S)$ respectively.

The notion of modal degree can be extended from a formula to a sequent. 

\begin{definition}[Modal degree]
  Modal degree for a formula $F$, denoted as $\textit{md}(F)$, is defined as usual. Further, let $\Gamma$ be a finite set of formulas, define $\textit{md}(\Gamma)=\textit{md}(\bigwedge \Gamma)$. 
  For a sequent $S=\Gamma\Rightarrow\Delta,[S_1],\ldots,[S_m],\langle T_1\rangle,\ldots,\langle T_n\rangle$, $\textit{md}(S)=\max\{\textit{md}(\Gamma),\textit{md}(\Delta),\textit{md}(S_1)+1,\ldots,\textit{md}(S_m)+1,\textit{md}(T_1),\ldots,\textit{md}(T_n)\}$.
\end{definition}

\begin{definition}[Context]\label{def-context}
  A context $G\{\}$ is inductively defined as follows:\\
  -  $\{\}$ is a context (the empty context). \\
- if $\Gamma\Rightarrow\Delta$ is a sequent and  $G'\{\}$ is a context then $\Gamma\Rightarrow\Delta, \langle G'\{\}\rangle$ is a context.\\
- if $\Gamma\Rightarrow\Delta$ is a sequent and  $G'\{\}$ is a context then $\Gamma\Rightarrow\Delta, [G'\{\}]$ is a context.
\end{definition}

\begin{example}\label{ex:context}
  Given a context $G\{\}=p\wedge q, \square r\Rightarrow \Diamond p,\langle \square p\Rightarrow [\Rightarrow q]\rangle,[\{\}]$ and a sequent $S=p\Rightarrow q\vee r,[r\Rightarrow s]$, we have $G\{S\}=p\wedge q, \square r\Rightarrow \Diamond p,\langle \square p\Rightarrow [\Rightarrow q]\rangle,[p\Rightarrow q\vee r,[r\Rightarrow s]]$. 
\end{example}

\begin{definition}[$\in^{\langle\cdot\rangle}, \in^{[\cdot]}, \in^+$-relation]
  Let $\Gamma_1\Rightarrow\Delta_1,\Gamma_2\Rightarrow\Delta_2$ be two sequents. We denote $\Gamma_1\Rightarrow\Delta_1 \in^{\langle\cdot\rangle}_0\Gamma_2\Rightarrow\Delta_2$ if $\langle \Gamma_1\Rightarrow\Delta_1 \rangle\in\Delta_2$ and let $\in^{\langle\cdot\rangle}$ be the transitive closure of $\in^{\langle\cdot\rangle}_0$. Relations $\in^{[\cdot]}_0$ and $\in^{[\cdot]}$ for modal blocks are  defined similarly. 
  Besides, let $\in^+_0 =~\in^{\langle\cdot\rangle}_0 \cup \in^{[\cdot]}_0$ and finally let $\in^+$ be the reflexive-transitive closure of $\in^+_0$. 
\end{definition}
Observe that when we say $S'\in^+ S$, it is equivalent to say that for some context $G$, $S = G\{S'\}$.

Some rules of the calculus propagate formulas in the antecedent (``positive part'') or the consequent (``negative part'') of sequents in a modal block. The two operators in the next definition single out the formulas that will be propagated. 

\begin{definition}[$\flat$-operator and $\#$-operator]\label{def:star-operator}
  Let $\Lambda\Rightarrow\Theta$ be a sequent and $Fm(\Theta)$ the multiset of formulas directly belonging to $\Theta$. 
  Let $\Theta^\flat=\emptyset$ if $\Theta$ is $[\cdot]$-free; 
  $\Theta^\flat= [\Phi_1\Rightarrow\Psi_1^\flat], \ldots, [\Phi_k\Rightarrow\Psi_k^\flat]$, if $\Theta=\Theta_0,[\Phi_1\Rightarrow\Psi_1], \ldots, [\Phi_k\Rightarrow\Psi_k]$ and $\Theta_0$ is $[\cdot]$-free.
  
  Dually let $\Rightarrow\Theta^\#= \ \Rightarrow Fm(\Theta)$ if $\Theta$ is $[\cdot]$-free; 
  $\Rightarrow\Theta^\#= \ \Rightarrow Fm(\Theta_0), [\Rightarrow\Psi_1^\#], \ldots, [\Rightarrow\Psi_k^\#]$ 
  if $\Theta=\Theta_0,[\Phi_1\Rightarrow\Psi_1], \ldots, [\Phi_k\Rightarrow\Psi_k]$ and $\Theta_0$ is $[\cdot]$-free.
\end{definition}

%From the definition, given a sequent $\Lambda\Rightarrow\Theta$, we can see $\Theta^\flat$ is a multi-set of modal blocks which keeps the local ``positive" part of $\Theta$, i.e. formulas on the antecedent of nested modal blocks in $\Theta$. Dually, the $\#$-operator only keeps the ``negative" part of a sequent: formulas in its consequent and in the consequent of its modal blocks. 

\begin{example}
  Consider the  sequent $G\{S\}=p\wedge q, \square r\Rightarrow \Diamond p,\langle \square p\Rightarrow [\Rightarrow q]\rangle,[p\Rightarrow q\vee r,[r\Rightarrow s]]$ of Example \ref{ex:context}, denote $Ant(G\{S\})$ and $Suc(G\{S\})$ by $\Lambda$ and $\Theta$ respectively, we can see by definition, 
  $\Lambda\Rightarrow\Theta^\flat=p\wedge q, \square r\Rightarrow [p\Rightarrow [r\Rightarrow ]]$ while 
  $\Rightarrow\Theta^\#= \ \Rightarrow\Diamond p,[\Rightarrow q\vee r,[\Rightarrow s]]$.
\end{example}

%Next, we present the calculi for all considered logics. 

\begin{definition}
  Rules for the basic logic \dik~and its modal extensions are given in Figure \ref{cdik}, which consists of the basic calculus \cdikk~and modal rules corresponding to axioms (\axiomd), ($\axiomt_\Diamond$) and ($\axiomt_\square$). 
  We define \cdid~=~\cdikk~+~(\axiomd) and \cdit~=~\cdikk~+~($\axiomt_\square$)~+~($\axiomt_\Diamond$).
\end{definition}

\begin{figure}[!htb]
    \centering
    \noindent
    {\noindent \tiny \bf The basic calculus \cdikk:}
    \vspace{0.1cm}
    
    {\tiny
  $$
  \begin{array}{ccc}
      \AxiomC{}
      \RightLabel{\rm ($\bot_L$)}
      \UnaryInfC{$G\{\Gamma,\bot\Rightarrow\Delta\}$}
      \DisplayProof
      \qquad\qquad
      &
      \AxiomC{}
      \RightLabel{\rm ($\top_R$)}
      \UnaryInfC{$G\{\Gamma\Rightarrow\top,\Delta\}$}
      \DisplayProof
      \qquad
      &
      \AxiomC{}
      \RightLabel{\rm ($\text{id}$)}
      \UnaryInfC{$G\{\Gamma,p\Rightarrow\Delta,p\}$}
      \DisplayProof
  \end{array}
  $$
  \vspace{-.4cm}
  $$
  \begin{array}{cc}
    \AxiomC{$G\{A,B,\Gamma\Rightarrow\Delta\}$}
    \RightLabel{\rm ($\wedge_L$)}
    \UnaryInfC{$G\{A\wedge B,\Gamma\Rightarrow\Delta\}$}
    \DisplayProof
    \vspace{0.2cm}
    &
    \AxiomC{$G\{\Gamma\Rightarrow\Delta,A\}$}
    \AxiomC{$G\{\Gamma\Rightarrow\Delta,B\}$}
    \RightLabel{\rm ($\wedge_R$)}
    \BinaryInfC{$G\{\Gamma\Rightarrow\Delta,A\wedge B\}$}
    \DisplayProof
    \\
    \AxiomC{$G\{\Gamma,A\Rightarrow\Delta\}$}
    \AxiomC{$G\{\Gamma,B\Rightarrow\Delta\}$}
    \RightLabel{\rm ($\vee_L$)}
    \BinaryInfC{$G\{\Gamma,A\vee B\Rightarrow\Delta\}$}
    \DisplayProof
    \vspace{0.2cm}
    &
    \AxiomC{$G\{\Gamma\Rightarrow \Delta,A,B\}$}
    \RightLabel{\rm ($\vee_R$)}
    \UnaryInfC{$G\{\Gamma\Rightarrow \Delta,A\vee B\}$}
    \DisplayProof
    \\
    \AxiomC{$G\{\Gamma,A\supset B\Rightarrow A,\Delta\}$}
    \AxiomC{$G\{\Gamma,B\Rightarrow\Delta\}$}
    \RightLabel{\rm ($\supset_L$)}
    \BinaryInfC{$G\{\Gamma,A\supset B\Rightarrow\Delta\}$}
    \DisplayProof
    \vspace{0.2cm}
    &
    \AxiomC{$G\{\Gamma\Rightarrow \Delta, \langle A\Rightarrow B\rangle\}$}
    \RightLabel{\rm ($\supset_R$)}
    \UnaryInfC{$G\{\Gamma\Rightarrow \Delta, A\supset B\}$}
    \DisplayProof
    \\
    \AxiomC{$G\{\Gamma,\square A\Rightarrow\Delta,[\Sigma,A\Rightarrow \Pi]\}$}
    \RightLabel{\rm ($\square_L$)}
    \UnaryInfC{$G\{\Gamma,\square A\Rightarrow\Delta,[\Sigma\Rightarrow \Pi]\}$}
    \DisplayProof
    \vspace{0.2cm}
    &
    \AxiomC{$G\{\Gamma\Rightarrow\Delta,[\Rightarrow A]\}$}
    \RightLabel{($\square_{R}$)}
    \UnaryInfC{$G\{\Gamma\Rightarrow\Delta,\square A\}$}
    \DisplayProof
    \\  
    \AxiomC{$G\{\Gamma\Rightarrow\Delta,[A\Rightarrow]\}$}
    \RightLabel{\rm ($\Diamond_L$)}
    \UnaryInfC{$G\{\Gamma,\Diamond A\Rightarrow\Delta\}$}
    \DisplayProof
    %\vspace{0.3cm}
    &
    \AxiomC{$G\{\Gamma\Rightarrow\Delta,\Diamond A,[\Sigma\Rightarrow\Pi,A]\}$}
    \RightLabel{\rm ($\Diamond_R$)}
    \UnaryInfC{$G\{\Gamma\Rightarrow\Delta,\Diamond A,[\Sigma\Rightarrow\Pi]\}$}
    \DisplayProof
  \end{array}
  $$
  \vspace{-.1cm}
  $$
  \begin{array}{c}
    \AxiomC{$G\{\Gamma,\Gamma'\Rightarrow\Delta,\langle\Gamma',\Sigma\Rightarrow\Pi\rangle\}$}
    \RightLabel{\rm (\text{trans})}
    \UnaryInfC{$G\{\Gamma,\Gamma'\Rightarrow\Delta,\langle\Sigma\Rightarrow\Pi\rangle\}$}
    \DisplayProof
    \vspace{0.2cm}
    \\
    \AxiomC{$G\{\Gamma\Rightarrow\Delta,\langle\Sigma\Rightarrow\Pi,[\Lambda\Rightarrow\Theta^\flat]\rangle,[\Lambda\Rightarrow\Theta]\}$}
    \RightLabel{$(\text{inter}_{\rightarrow})$}
    \UnaryInfC{$G\{\Gamma\Rightarrow\Delta,\langle\Sigma\Rightarrow\Pi\rangle,[\Lambda\Rightarrow\Theta]\}$}
    \DisplayProof
    \vspace{0.2cm}
    \\
    \AxiomC{$G\{\Gamma\Rightarrow\Delta,\langle\Sigma\Rightarrow\Pi,[\Lambda\Rightarrow\Theta]\rangle,[\Rightarrow\Theta^\#]\}$}
    \RightLabel{($\text{inter}_{\downarrow}$)}
    \UnaryInfC{$G\{\Gamma\Rightarrow\Delta,\langle\Sigma\Rightarrow\Pi, [\Lambda\Rightarrow\Theta]\rangle\}$}
    \DisplayProof
  \end{array}  
  $$
    }
  \vspace{0.1cm}
  
  {
  \tiny
  {\bf \noindent Modal rules for the extensions:}
  $$
    \begin{array}{ccc}
      \AxiomC{$G\{\Gamma\Rightarrow\Delta,[\Rightarrow]\}$}
      \RightLabel{(\axiomd)}
      \UnaryInfC{$G\{\Gamma\Rightarrow\Delta\}$}
      \DisplayProof
      \quad
      \AxiomC{$G\{\Gamma,\square A, A\Rightarrow\Delta\}$}
      \RightLabel{($\axiomt_\square$)}
      \UnaryInfC{$G\{\Gamma,\square A\Rightarrow\Delta\}$}
      \DisplayProof
      \quad
      \AxiomC{$G\{\Gamma\Rightarrow\Delta,\Diamond A, A\}$}
      \RightLabel{($\axiomt_\Diamond$)}
      \UnaryInfC{$G\{\Gamma\Rightarrow\Delta,\Diamond A\}$}
      \DisplayProof
    \end{array}
  $$
  }
  \caption{Bi-nested rules for local intuitionistic modal logics}\label{cdik}
\end{figure}

Here are some remarks on the rules. 
Reading the rule upwards, the rule ($\supset_R$) introduces an implication block $\langle \cdot \rangle$ while the rules ($\Diamond_L$) and ($\Box_R$) introduce a modal block $[\cdot]$. 
Observe that the ($\Box_R$) rule corresponds to the ``local" interpretation of $\Box$. The rule $(\text{inter}_{\rightarrow})$ is intended to capture  Forward Confluence, whereas the rule ($\text{inter}_{\downarrow}$)   Downward Confluence. Finally the (trans) rule captures the Hereditary Property. All the rules of \cdikk,  except ($\Box_R$) and ($\text{inter}_{\downarrow}$)  belong to the calculus  \cfik~for the logic \fik~\cite{FIK-csl2024}, we will discuss the relation between the two calculi later in the section. 
	
We can verify that each axiom  of   Section \ref{sec:logic} is provable in \cdikk. We  give the example of  axiom (\textbf{RV}) here. 

\begin{example}
  %Axiom (\textbf{RV}): $\square(p\vee q)\supset\Diamond p\vee\square q$ is provable in \cdikk. 
  We show $\square(p\vee q)\Rightarrow\Diamond p\vee\square q$ is provable. 
    \[
    \tiny
      \AxiomC{}
      \RightLabel{(id)}
      \UnaryInfC{$\square(p\vee q)\Rightarrow\Diamond p,[p\Rightarrow q,p]$}
      \AxiomC{}
      \RightLabel{(id)}
      \UnaryInfC{$\square(p\vee q)\Rightarrow\Diamond p,[q\Rightarrow q,p]$}
      \RightLabel{($\vee_L$)}
      \BinaryInfC{$\square(p\vee q)\Rightarrow\Diamond p,[p\vee q\Rightarrow q,p]$}
      \RightLabel{($\square_L$)}
      \UnaryInfC{$\square(p\vee q)\Rightarrow\Diamond p,[\Rightarrow q,p]$}
      \RightLabel{($\Diamond_R$)}
      \UnaryInfC{$\square(p\vee q)\Rightarrow\Diamond p,[\Rightarrow q]$}
      \RightLabel{($\square_R$)}
      \UnaryInfC{$\square(p\vee q)\Rightarrow\Diamond p,\square q$}
      \RightLabel{($\vee_R$)}
      \UnaryInfC{$\square(p\vee q)\Rightarrow\Diamond p\vee\square q$}
      \DisplayProof
    \]
\end{example}

We now show that \cdikk~is sound with respect to the semantics.
First we extend the forcing relation $\Vdash$ to sequents and blocks therein. 

\begin{definition}
Let $\mathcal{M}=(W,\leq,R,V)$ be a model and $x\in W$. The relation $\Vdash$ is extended to sequents as follows:
\begin{quote}
$\mathcal{M}, x\not\Vdash \emptyset$\\
$\mathcal{M}, x \Vdash [T]$ if for every $y$ with $Rxy$, $\mathcal{M}, y \Vdash T$\\
$\mathcal{M}, x \Vdash \langle T \rangle$ if for every $x'$ with $x\leq x'$, $\mathcal{M}, x' \Vdash T$\\
$\mathcal{M}, x \Vdash \Gamma \Rightarrow \Delta$ if either $\mathcal{M}, x \not \Vdash A$ for some $A\in \Gamma$ or $\mathcal{M}, x \Vdash {\cal O}$ for some ${\cal O} \in \Delta$, where ${\cal O}$ is a formula or a block. 
\end{quote}
We say $S$ is {\rm valid} in $\+M$ iff $\forall w\in W$, we have $\+M,w\Vdash S$.
We say $S$ is {\rm valid} iff it is valid in every model.
\end{definition}

%Whenever the model $\mathcal{M}$ is clear, we omit it and simply write $x \Vdash {\cal O}$ for an object ${\cal O}$, which can be a formula, a sequent or a block. Moreover, given a sequent $S = \Gamma \Rightarrow \Delta$, we write $x \Vdash \Delta$ if there is an ${\cal O} \in \Delta$ s.t. $x \Vdash {\cal O}$ and write $x \not\Vdash \Delta$ if the condition does not hold. 

\begin{definition}
  For a rule ($r$) of the form $\frac{G\{S_1\} \quad G\{S_2\}}{G\{S\}}$ or $\frac{G\{S_1\}}{G\{S\}}$, 
  we say ($r$) is valid if $x\Vdash G\{S_i\}$ implies $x\Vdash G\{S\}$. 
\end{definition}

We obtain the soundness of \cdikk\ by verifying the validity of each rule.
The soundness of \cdid~and \cdit~can be proven similarly. 
%By verifying the validity of each rule in \cdikk, we obtain the soundness of it and soundness of \cdid~and \cdit~can be proven similarly. 

\begin{theorem}[Soundness of \cdikk]\label{thm:soundness-cdik}
  If a sequent $S$ is provable in \cdikk, then it is a validity in \dik.
\end{theorem}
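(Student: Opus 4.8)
## Proof Plan for Soundness of \cdikk

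The plan is to prove soundness by induction on the height of the derivation. The base cases are the zero-premise rules $(\bot_L)$, $(\top_R)$, and $(\text{id})$, for which validity of the conclusion in every model is immediate from the truth conditions (e.g.\ no world forces $\bot$, every world forces $\top$, and no world can force $p$ in the antecedent without forcing $p$ in the succedent). For the inductive step, it suffices to show that each inference rule is \emph{validity-preserving} in the sense of the last definition: if every premise $G\{S_i\}$ is valid in a model $\+M$, then so is the conclusion $G\{S\}$. Given the contextual formulation of the rules, the actual work reduces to a \emph{local} claim: for each rule, and for each world $x$, if $x \Vdash G\{S_i\}$ for all premises then $x \Vdash G\{S\}$. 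This in turn is proved by a secondary induction on the structure of the context $G\{\}$ — the empty-context case being the heart of the matter, and the $\langle G'\{\}\rangle$ and $[G'\{\}]$ cases following routinely by unfolding the definition of $\Vdash$ on blocks (quantifying over $\leq$-successors or $R$-successors respectively) and applying the induction hypothesis on $G'$.

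So the core of the proof is to check, for the empty context, that each rule schema is sound when read as a statement about a single world $x$. The propositional rules ($\wedge_L$, $\wedge_R$, $\vee_L$, $\vee_R$, $\supset_L$, $\supset_R$) and the structural behavior are essentially as in any nested sequent calculus for intuitionistic logic; for $(\supset_R)$ one uses that $x \Vdash A \supset B$ iff for all $x' \geq x$, $x' \Vdash A$ implies $x' \Vdash B$, which is exactly $x \Vdash \langle A \Rightarrow B\rangle$. For the modal rules, $(\square_R)$ uses the local clause: $x \Vdash \square A$ iff every $R$-successor forces $A$, i.e.\ $x \Vdash [\Rightarrow A]$; $(\square_L)$ uses that if $x \Vdash \square A$ and $Rxy$ with $y \Vdash \Sigma$, then $y \Vdash A$, so $A$ may be added to that block's antecedent; $(\Diamond_L)$ and $(\Diamond_R)$ are handled dually via the existential clause for $\lozenge$. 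The $(\text{trans})$ rule is justified by the Heredity Property (Lemma~\ref{lemma:HB:monotonicity}): if $x \Vdash \Gamma'$ and $x \leq x'$ then $x' \Vdash \Gamma'$, so the copy of $\Gamma'$ inside the $\langle\cdot\rangle$-block is sound to insert. The rule $(\text{inter}_{\rightarrow})$ requires forward confluence: given $x \leq x'$ and $Rxz$, forward confluence yields $z'$ with $Rx'z'$ and $z \leq z'$, and one transports the content $\Lambda \Rightarrow \Theta^\flat$ of the outer $[\cdot]$-block along $\leq$ into the inner one using Heredity on the $\flat$-selected antecedent formulas; symmetrically $(\text{inter}_{\downarrow})$ uses downward confluence together with the $\#$-operator, which selects exactly the succedent formulas that persist.

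I expect the main obstacle to be the two interaction rules $(\text{inter}_{\rightarrow})$ and $(\text{inter}_{\downarrow})$: their soundness is where the frame conditions (forward/downward confluence) genuinely enter, and one must carefully match the syntactic $\flat$- and $\#$-operators — which recursively strip, respectively keep, the appropriate polarity of material inside nested $[\cdot]$-blocks — against the semantic diagram chase produced by confluence, using Heredity at each nesting level. Verifying that $\Theta^\flat$ (resp.\ $\Theta^\#$) captures precisely the information that survives transport along $\geq$ (resp.\ $\leq$) through an arbitrary stack of modal blocks is the delicate point; everything else is a routine unfolding of the forcing definition. The arguments for \cdid\ and \cdit\ are then obtained by additionally checking the rules $(\axiomd)$, $(\axiomt_\square)$, $(\axiomt_\Diamond)$ against seriality and reflexivity of $R$, which is straightforward.
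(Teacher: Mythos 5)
Your plan coincides with the paper's proof: soundness is obtained exactly as you propose, by verifying each rule valid (reducing to the empty-context case via an induction on the structure of $G\{\}$), with the only non-routine cases being $(\text{inter}_{\rightarrow})$ and $(\text{inter}_{\downarrow})$, which the paper handles precisely through forward/downward confluence combined with a preservation lemma for the $\flat$- and $\#$-operators (if $x\nVdash\Theta$ then $x\nVdash\Theta^\flat$, and if $x\Vdash\ \Rightarrow\Theta^\#$ then $x\Vdash\ \Rightarrow\Theta$), plus Heredity for $(\text{trans})$. No gaps; this is essentially the same argument.
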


Next, we show that the rule ($\text{inter}_{\downarrow}$) is admissible in the calculus  \local = \cdikk$\backslash\{(\text{inter}_{\downarrow})\}$;  the proof can be extended to the modal extensions. 
In order to prove this, we need some preliminary facts. 
First, weakening and contraction rules $(w_L)(w_R)(c_L)(c_L)$ defined as usual are \emph{height-preserving} (hp) admissible in \local, not only applied to formulas but also to blocks. Moreover, extended weakening rules $\frac{S}{G\{S\}}$, $\frac{G\{\Gamma\Rightarrow\Delta^\flat\}}{G\{\Gamma\Rightarrow\Delta\}}$, $\frac{G\{\Gamma\Rightarrow\Delta^\#\}}{G\{\Gamma\Rightarrow\Delta\}}$ are hp-admissible as well.

\iffalse
\begin{lemma}\label{prop:context-provability}
  Weakening rules with respect to context, $\flat$ or $\#$ operators are admissible in \local 
  \[
      \AxiomC{$S$}
      \RightLabel{($C_w$)}
      \UnaryInfC{$G\{S\}$}
      \DisplayProof
      \quad
      \AxiomC{$G\{\Gamma\Rightarrow\Delta^\flat\}$}
      \RightLabel{($\flat_w$)}
      \UnaryInfC{$G\{\Gamma\Rightarrow\Delta\}$}
      \DisplayProof
      \quad
      \AxiomC{$G\{\Gamma\Rightarrow\Delta^\#\}$}
      \RightLabel{($\#_w$)}
      \UnaryInfC{$G\{\Gamma\Rightarrow\Delta\}$}
      \DisplayProof
    \]
\end{lemma}
\fi

\begin{proposition}\label{prop:lik-eq-fragment}
  The $(\text{inter}_{\downarrow})$ rule is admissible in \local. Consequently, a sequent $S$ is provable in  \cdikk \ if and only if  $S$ is provable in \local.
\end{proposition}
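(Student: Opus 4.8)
The plan is to show that any application of $(\text{inter}_{\downarrow})$ can be eliminated in favor of derivations in \local, proceeding by induction on the height of the derivation of the premise, with a suitable inner induction to push the admissible rule upward through the other rules of \local. Concretely, I would prove the following admissibility claim: if $G\{\Gamma\Rightarrow\Delta,\langle\Sigma\Rightarrow\Pi,[\Lambda\Rightarrow\Theta]\rangle,[\Rightarrow\Theta^\#]\}$ is derivable in \local\ with height $n$, then $G\{\Gamma\Rightarrow\Delta,\langle\Sigma\Rightarrow\Pi,[\Lambda\Rightarrow\Theta]\rangle\}$ is derivable in \local\ (with no height bound needed). The key auxiliary facts are the hp-admissibility of weakening and contraction (for formulas and for blocks) and the hp-admissibility of the extended weakenings $\frac{S}{G\{S\}}$, $\frac{G\{\Gamma\Rightarrow\Delta^\flat\}}{G\{\Gamma\Rightarrow\Delta\}}$, $\frac{G\{\Gamma\Rightarrow\Delta^\#\}}{G\{\Gamma\Rightarrow\Delta\}}$, all already granted in the text just before the statement.

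The heart of the argument is a case analysis on the last rule applied in the derivation of the premise of $(\text{inter}_{\downarrow})$. In most cases the last rule acts in a part of the context disjoint from the two blocks $\langle\Sigma\Rightarrow\Pi,[\Lambda\Rightarrow\Theta]\rangle$ and $[\Rightarrow\Theta^\#]$ singled out by $(\text{inter}_{\downarrow})$, or inside $\Gamma\Rightarrow\Delta$, or inside $\Sigma\Rightarrow\Pi$, or strictly below $\Lambda\Rightarrow\Theta$; there one simply permutes: apply the induction hypothesis to the premise(s) and then reapply the rule. The delicate cases are those where the last rule touches either the $\flat$-copy $[\Lambda\Rightarrow\Theta]$ living inside $\langle\cdot\rangle$, or the $\#$-copy $[\Rightarrow\Theta^\#]$ that $(\text{inter}_{\downarrow})$ is meant to discard, or the original block $[\Lambda\Rightarrow\Theta]$ that generated $\Theta^\#$. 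When the last rule operated on $[\Rightarrow\Theta^\#]$ (e.g.\ a right rule decomposing a formula of $Fm(\Theta)$, or a $(\square_R)$/$(\Diamond_R)$ acting on a nested block inside $\Theta^\#$), the corresponding formula or block is present — up to the $\flat/\#$-projection — already in $\langle\Sigma\Rightarrow\Pi,[\Lambda\Rightarrow\Theta]\rangle$, so after applying the induction hypothesis the discarded copy is recovered from $[\Lambda\Rightarrow\Theta]$ inside the diamond block by the extended weakening rules; when the last rule was $(\text{inter}_{\rightarrow})$ or $(\text{inter}_{\downarrow})$ itself interacting with these blocks, one uses that the newly produced inner block is again a $\flat/\#$-projection of $\Theta$, so the invariant is maintained and the inductive hypothesis applies to a smaller derivation.

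I expect the main obstacle to be the bookkeeping in exactly these interaction cases: one must check that the formula/block created by a right rule inside $[\Rightarrow\Theta^\#]$ is always ``shadowed'' by a corresponding item inside $[\Lambda\Rightarrow\Theta]$ (which sits inside the surviving $\langle\cdot\rangle$), so that after the induction hypothesis removes the $\#$-copy, the extended weakening $\frac{G\{\Gamma\Rightarrow\Delta^\#\}}{G\{\Gamma\Rightarrow\Delta\}}$ re-derives precisely what was needed; and symmetrically that rules like $(\square_L)$ applied to $\Lambda$ or to formulas propagated into $\Theta$ do not break the matching between $\Theta$ and $\Theta^\#$. Once the admissibility claim is established, the ``consequently'' part is immediate: \local\ is a subsystem of \cdikk, so any \local-proof is a \cdikk-proof, and conversely any \cdikk-proof is turned into a \local-proof by eliminating its topmost applications of $(\text{inter}_{\downarrow})$ one at a time using the claim.
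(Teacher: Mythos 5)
Your plan matches the paper's proof: induction on the structure of the \local-derivation of the premise, straightforward permutation in all cases where the last rule does not touch the two distinguished blocks, and hp-admissible weakening/contraction (plus a separate little argument, via derivability-preservation of empty modal blocks, for the $(\text{inter}_{\rightarrow})$ interaction between $\langle\Sigma\Rightarrow\Pi,[\Lambda\Rightarrow\Theta]\rangle$ and $[\Rightarrow\Theta^\#]$) to handle the delicate cases you single out. The one refinement the paper makes in exactly those cases is to first weaken \emph{both} the original block and the $\#$-copy with the decomposed material, so that the premise is again literally of the shape $\ldots[\Lambda'\Rightarrow\Theta']\ldots[\Rightarrow\Theta'^{\#}]$ demanded by the induction hypothesis, and only afterwards reapply the rule inside $[\Lambda'\Rightarrow\Theta']$ and contract~--- rather than recovering the discarded copy by extended weakening after the induction hypothesis, which as literally stated would not produce the required conclusion.
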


As mentioned above, all the rules in \cdikk, except ($\Box_R$) and ($\text{inter}_{\downarrow}$) belong to the calculus \cfik~for the logic \fik~\cite{FIK-csl2024}. As a difference with \dik, the logic \fik \ adopts the global forcing condition for $\Box$ as in \cite{Fischer:Servi:1984,Simpson:1994,Wijesekera:1990} 
and only forward confluence on the frame. The ``global'' ($\Box_R$) rule in \cfik \ is $\frac{G\{\Gamma\Rightarrow\Delta,\langle\Rightarrow[\Rightarrow A]\rangle\}}{G\{\Gamma\Rightarrow\Delta,\square A\}}$. 
It can be proved that this rule is admissible in \local and on the opposite direction, the ``local" rule for $(\Box_R)$ of  \cdikk \ is admissible in \cfik + ($\text{inter}_{\downarrow}$). Thus    an equivalent calculus for \dik \ can be obtained in\emph{ a modular way} from the one for \fik.

We end this section by considering the disjunction property. 
For simplicity, we only work in \local~and its extensions. Let \cdidd~=~\local~+~(\axiomd) and \cditt~=~\local~+~($\axiomt_\square$)~+~($\axiomt_\Diamond$). 
Consider the formula $\square \bot\vee \Diamond\top$ which is provable in \local, but it is easy to see that neither  $\square\bot$, nor  $\Diamond \top$, are provable\footnote{We thank Tiziano Dalmonte for having suggested this counterexample.}. However, this counterexample does not hold for in \did~and for \dit~since $\Diamond \top$ is provable in both calculi. 
We  show indeed that the disjunction property holds for both \cdidd \ and \cditt.
The key fact is expressed by the following lemma: 

\begin{lemma}\label{lemma:dp-cdid}
  Suppose $S =\ \Rightarrow A_1, \ldots, A_m, \langle G_1\rangle, \ldots, \langle G_n\rangle,[H_1],\ldots,[H_l]$ is provable in \cdidd (resp. \cditt), where $A_i$'s are formulas, $G_j$ and $H_k$'s are sequents. 
  Furthermore, we assume that each $H_k$ is of the form $\Rightarrow\Theta_k$ and for each sequent $T\in^{[\cdot]} H_k$, $T$ has an empty antecedent. Then either $\Rightarrow A_i$, or $\Rightarrow \langle G_j\rangle$, or $\Rightarrow[H_k]$  is provable in \cdidd (resp. \cditt) for some $i\leq m, j\leq n, k\leq l$ .
\end{lemma}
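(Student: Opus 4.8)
The plan is to proceed by induction on the height of the derivation of $S$ in \cdidd\ (resp. \cditt), inspecting the last rule applied. The hypothesis on $S$ is tailored so that $S$ has a very restricted shape: an empty antecedent, the succedent consisting only of formulas $A_i$, implication blocks $\langle G_j\rangle$, and modal blocks $[H_k]$ each of which is itself $\Rightarrow\Theta_k$ with \emph{all} nested modal blocks antecedent-free. The crucial observation is that, given this shape, the last rule of the derivation can only be one that acts on the top-level sequent; a rule acting strictly inside some $G_j$ or $H_k$ falls directly under the induction hypothesis applied to that sub-derivation, and a rule acting on the (empty) top-level antecedent is impossible. So I would first argue that the only possible last rules are: a right rule ($\vee_R$, $\wedge_R$, $\supset_R$, $\square_R$, $\Diamond_R$), the structural rules (trans), $(\text{inter}_\rightarrow)$, or $(\text{inter}_\downarrow)$ — wait, we are in \cdidd\ which has no $(\text{inter}_\downarrow)$ — plus the modal rule $(\axiomd)$ (resp. $(\axiomt_\square),(\axiomt_\Diamond)$), and the axioms $(\top_R)$, $(\text{id})$, $(\bot_L)$.

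The axiom cases are immediate: $(\bot_L)$ and $(\text{id})$ cannot close $S$ since the antecedent is empty, and if $S$ is an instance of $(\top_R)$ then $\top$ is one of the $A_i$ and $\Rightarrow\top$ is provable. For each right rule, the premise(s) still satisfy the shape hypothesis — here one must check e.g. that $(\square_R)$ turns $\square A$ (some $A_i$) into a block $[\Rightarrow A]$, which is antecedent-free and whose nested blocks are those of nothing, fine; that $(\Diamond_R)$ operating on some $[H_k]=[\Rightarrow\Theta_k]$ adds a formula $A$ to a nested block $[\Sigma\Rightarrow\Pi]$ inside $\Theta_k$, which is still antecedent-free because $(\Diamond_R)$ does not touch antecedents; and that $(\supset_R)$ turns $A_i\supset B_i$ into $\langle A_i\Rightarrow B_i\rangle$, a legitimate $\langle G\rangle$. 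Apply the IH to the premise, obtaining provability of $\Rightarrow A'$ or $\Rightarrow\langle G'\rangle$ or $\Rightarrow[H']$ for one of the premise's top-level components, and then re-apply the same rule (or a weakening) to get the desired conclusion about $S$'s components. For instance in the $(\vee_R)$ case the IH gives provability of one of $\Rightarrow A$, $\Rightarrow B$, or an untouched component; in the first two subcases apply $(\vee_R)$ together with $(w_R)$ to derive $\Rightarrow A\vee B$. The (trans) and $(\text{inter}_\rightarrow)$ cases require care because these rules \emph{do} introduce formulas into antecedents of nested blocks, so I must check that when they fire at top level they only modify components of the form $\langle G_j\rangle$; indeed (trans) rewrites $\langle\Sigma\Rightarrow\Pi\rangle$ using the top-level antecedent $\Gamma,\Gamma'$ which is empty here, so $\Gamma'=\emptyset$ and the rule is vacuous (premise $=$ conclusion up to the shape), and similarly $(\text{inter}_\rightarrow)$ needs a block $[\Lambda\Rightarrow\Theta]$ sitting at the \emph{same level} as $\langle\Sigma\Rightarrow\Pi\rangle$; at top level that would be some $[H_k]=[\Rightarrow\Theta_k]$ so $\Lambda=\emptyset$ and $\Theta=\Theta_k$, and the premise adds $[\Rightarrow\Theta_k^\flat]$ inside $\langle G_j\rangle$ — the hypothesis on $H_k$ guarantees $\Theta_k^\flat$ has empty antecedents everywhere, so the premise still satisfies the shape restriction on its $\langle\cdot\rangle$-component, and the IH applies.

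The main obstacle — and the reason the shape hypothesis on the $H_k$ is stated so precisely — is exactly the interplay of the modal/interaction rules with the antecedent-freeness condition: I must verify that \emph{every} rule preserves, in its premises, the property ``each top-level modal block is antecedent-free, hereditarily in its nested modal blocks''. The rules $(\square_L)$, $(\Diamond_L)$, $(\axiomt_\square)$ would violate this by inserting a formula into an antecedent, but $(\square_L)$ and $(\Diamond_L)$ and $(\axiomt_\square)$ require a $\square$-formula or $\Diamond$-formula in an antecedent to fire at the relevant place, and our top-level and (hereditarily) nested-modal antecedents are empty, so these rules simply cannot have been the last rule acting at a problematic position — they can only have acted inside some $\langle G_j\rangle$, which is handled by the IH. The delicate bookkeeping is thus a case analysis showing each rule either (a) acts inside a $\langle G_j\rangle$ and is absorbed by the IH, (b) acts inside an $[H_k]$ but, by antecedent-freeness, is one of the harmless right rules whose premises keep the invariant, or (c) acts at top level on the empty antecedent and is therefore vacuous or an axiom. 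Once this case analysis is complete the disjunction property for \cdidd\ and \cditt\ follows: instantiate the lemma with $S =\ \Rightarrow A\vee B$ to conclude that $\Rightarrow A$ or $\Rightarrow B$ is provable.
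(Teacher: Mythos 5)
There is a genuine gap, and it sits exactly at the heart of the lemma: the $(\Diamond_R)$ case. Your generic recipe for right rules — ``apply the IH to the premise and then re-apply the same rule (or a weakening)'' — does not work here. If the last rule is $(\Diamond_R)$ acting on $\Diamond B$ (one of the $A_i$) and on $[H_1]=[\Rightarrow\Theta_1]$, the premise contains the augmented block $[\Rightarrow\Theta_1,B]$, and the IH may return that it is precisely this component $\Rightarrow[\Rightarrow\Theta_1,B]$ that is provable. Re-applying $(\Diamond_R)$ then gives $\Rightarrow\Diamond B,[\Rightarrow\Theta_1]$, which is a disjunction of \emph{two} of $S$'s components, not one of them. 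To finish, one must split $\Rightarrow\Theta_1,B$ itself, i.e.\ apply the lemma again to the \emph{inner} sequent — but its derivation (obtained from the IH) need not be shorter, so an induction on height alone cannot justify this step. The paper uses a double induction on (proof height, modal degree) and invokes the second component here, since $\textit{md}(\Rightarrow\Theta_1,B)<\textit{md}(S)$. Moreover, in the sub-case where $\Rightarrow B$ turns out to be provable, deriving $\Rightarrow\Diamond B$ requires an explicit detour through the $(\axiomd)$ rule (from $\Rightarrow[\Rightarrow B]$ via $(w_R)$, $(\Diamond_R)$ and $(\axiomd)$) or through $(\axiomt_\Diamond)$ in \cditt. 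This is the one place where the extra modal rules are genuinely used, and it is why the lemma fails for \local\ (cf.\ $\square\bot\vee\Diamond\top$); your write-up never identifies this step, which is the mathematical content of the lemma rather than bookkeeping.

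A second, smaller gap is the $(\text{inter}_\rightarrow)$ case. You correctly observe that the premise's implication component becomes $\langle\Sigma\Rightarrow\Pi,[\Rightarrow\Theta_k^\flat]\rangle$ and that the shape hypothesis is preserved, but you stop at ``the IH applies''. The IH then yields provability of $\Rightarrow\langle\Sigma\Rightarrow\Pi,[\Rightarrow\Theta_k^\flat]\rangle$, which is not yet $\Rightarrow\langle G_j\rangle$: one still has to delete the added block. This works only because the antecedent-freeness hypothesis makes $[\Rightarrow\Theta_k^\flat]$ an \emph{empty structure} of nested blocks, removable by finitely many applications of the $(\axiomd)$ rule (the paper isolates this as a separate proposition). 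So both remaining obstacles are resolved by the same two ingredients your plan omits: the auxiliary induction on modal degree, and the constructive use of $(\axiomd)$/$(\axiomt_\Diamond)$.
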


We obtain the disjunction property by an obvious application of the  lemma. 

\begin{proposition}[Disjunction property for \cdidd and \cditt]\label{prop:dp}
  For any formulas $A,B$, if $\Rightarrow A \lor B$ is provable in \cdidd (resp. \cditt), then either $\Rightarrow A$ or $\Rightarrow  B$ is provable \cdidd (resp. \cditt).
\end{proposition}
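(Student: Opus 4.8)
The plan is to derive Proposition~\ref{prop:dp} as a direct corollary of Lemma~\ref{lemma:dp-cdid}. Suppose $\Rightarrow A\lor B$ is provable in \cdidd\ (resp. \cditt). The idea is to peel off the top of the derivation: since $A\lor B$ is a formula, the last rule applied that is not a purely propositional manipulation of $A\lor B$ can be taken (by invertibility/admissibility of the relevant structural rules, or simply by inspecting the shape of the end-sequent) to be $(\vee_R)$, so the premise $\Rightarrow A,B$ is provable. I would then apply Lemma~\ref{lemma:dp-cdid} to $S =\ \Rightarrow A, B$, which is of the required form ($m=2$, $n=l=0$, and the vacuous condition on the $H_k$'s is trivially satisfied since there are no modal blocks). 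The lemma yields that either $\Rightarrow A$ or $\Rightarrow B$ is provable, which is exactly the disjunction property.

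A small subtlety to address is why we may assume the derivation of $\Rightarrow A\lor B$ ends with $(\vee_R)$. Here I would invoke invertibility of $(\vee_R)$, which in these calculi is an easy consequence of hp-admissibility of weakening (already stated in the excerpt) together with a standard induction on derivation height: from a proof of $G\{\Gamma\Rightarrow\Delta,A\vee B\}$ one obtains a proof of $G\{\Gamma\Rightarrow\Delta,A,B\}$ of no greater height. Applied to the root context $G\{\}=\{\}$ with $\Gamma=\Delta=\emptyset$, this gives provability of $\Rightarrow A,B$ directly, bypassing any case analysis on the last rule.

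I do not expect any genuine obstacle in this proof: all the real work is already encapsulated in Lemma~\ref{lemma:dp-cdid}, and the reduction is routine once invertibility of $(\vee_R)$ is in hand. If one wished to avoid even citing invertibility, an alternative is to note that $\Rightarrow A,B$ and $\Rightarrow A\lor B$ are interderivable via $(\vee_R)$ and (admissible) cut or via the general fact that $\Rightarrow \langle A\Rightarrow B\rangle$-style manipulations preserve provability; but the invertibility route is cleanest. The proof is therefore a two-line argument: invert $(\vee_R)$, then apply Lemma~\ref{lemma:dp-cdid}.
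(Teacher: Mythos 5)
Your proposal is correct and matches the paper's intent: the paper proves Proposition~\ref{prop:dp} as "an obvious application" of Lemma~\ref{lemma:dp-cdid}, which is precisely your reduction — invert $(\vee_R)$ to pass from $\Rightarrow A\vee B$ to $\Rightarrow A,B$, then apply the lemma with $m=2$, $n=l=0$. Your explicit justification of $(\vee_R)$-invertibility by height induction is a reasonable way to fill in the step the paper leaves implicit.
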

\section{Termination}\label{sec:termination}
In this section we define decision procedures for \dik~as well as its extensions \did~and \dit~based on the previous calculi. We treat first \dik, then at the end of the section we briefly describe how to adopt the the procedure to the extensions. 
The terminating proof-search procedure is needed to prove the semantic completeness of the calculus and we will show how to build a (finite) countermodel of the sequent at the root of a derivation, whenever a derivation fails. 
%arg1 The general strategy is well-known: by  a suitable procedure the rules are applied backwards generating a finite derivation, if all leaves are axioms the formula/sequent at the root is provable, otherwise  any non-axiomatic  leaf (sequent) is  "saturated" and we can build a countermodel out of it. Intuitively, a sequent is saturated when it is "closed" with respect to all rule applications. Saturation is also the key notion to ensure the truth conditions in the countemodel.

We have introduced two calculi for \dik, namely \cdikk \ and \local. For \local, we can obtain a terminating proof-procedure by adapting the one in \cite{FIK-csl2024} for the calculus of \fik. Actually, the decision procedure for \local is remarkably simpler than the one for \fik, as ``blocking'' is not needed to prevent loops. For \cdikk \ we need some extra work. Although the calculi \cdikk~and \local~are equivalent for provability, \local~does not allow a countermodel extraction so we have to consider the latter for proving semantic completeness.  

\remove{
\begin{example}
	Let $A = \Box a \supset (\Box b \supset \Box c)$ where $a,b,c$ are atoms. Then in \local we generate the following saturated sequent $S_0$ \footnote{To simplify we do not use here the cumulative calculus in Definition \ref{cumulcalc}. }:
		$\Box a \Rightarrow  \langle \Box a, \Box b \Rightarrow \Box c, [a, b \Rightarrow c]\rangle$. 
We will identify ``worlds" with sequent and blocks therein, in this case in addition to $S_0$, we further have $S_1 = \Box a, \Box b \Rightarrow \Box c, [a, b \Rightarrow c]$ and $S_2 = a,b \Rightarrow c$. And let $S_0 \leq S_1$ and $R S_1 S_2$. However this model not satisfy the downward confluence, and it is not clear how to repair it. On the other hand, working in \cdikk, we get the saturated sequent:
	$\Box a \Rightarrow \langle \Box a, \Box b \Rightarrow \Box c, [a, b \Rightarrow c]\rangle, [a \Rightarrow c]$
Let additionally $S_3 = a \Rightarrow c$;  the "world" $S_3$ is created by $(\text{inter}_\downarrow)$ and further propagation. We also have $S_3\leq S_2$ and $RS_0 S_3$ so that the model satisfies both downward and forward confluence.
\end{example}
}

Our ultimate aim is to build a countermodel from a failed derivation, the main ingredient is the pre-order relation $\leq$ in the model construction. This relation is specified by the following notion of \emph{structural inclusion} between sequents, which is also used in defining the saturation conditions needed for termination. 

\begin{definition}[Structural inclusion $\subseteq^\-S$]\label{def:saturation-inclusion}
	Let $S_1=\Gamma_1\Rightarrow\Delta_1,S_2=\Gamma_2\Rightarrow\Delta_2$ be two sequents. We say that $S_1$ is \textit{structurally included} in $S_2$, denoted by  $S_1\subseteq^{\-S} S_2$, whenever:  
	\vspace{-0.2cm}
	
	\begin{enumerate}
		\item[(i).] $\Gamma_1\subseteq\Gamma_2$; 
		\item[(ii).] for each $[\Lambda_1\Rightarrow\Theta_1]\in \Delta_1$, there exists $[\Lambda_2\Rightarrow\Theta_2]\in\Delta_2$ such that $\Lambda_1\Rightarrow\Theta_1\subseteq^{\-S} \Lambda_2\Rightarrow\Theta_2$;
		\item[(iii).] for each $[\Lambda_2\Rightarrow\Theta_2]\in\Delta_2$, there exists $[\Lambda_1\Rightarrow\Theta_1]\in \Delta_1$ such that $\Lambda_1\Rightarrow\Theta_1\subseteq^{\-S} \Lambda_2\Rightarrow\Theta_2$. 
	\end{enumerate}
\end{definition}
	\vspace{-0.2cm}

It is easy to see $\subseteq^\-S$ is both reflexive and transitive. 
%This notion is already used implicitly in the previous example. 

We now define an equivalent variant \textbf{C}\cdikk \ of \cdikk \ which adopts a cumulative version of the rules and some bookkeeping. Moreover the rule of implication $(\supset_{R})$ is modified to prevent loops. This is the calculus that will be used as a base for the decision procedure and then semantic completeness. First we reformulate the $\#$-operator in the following way, annotating the $\#$-sequents by the full sequent where it comes from. 

\begin{definition}
  Let $Fm(\Theta)$ be the multiset of formulas directly belonging to $\Theta$. We define the $\#$-operator with annotation as follows:
  	\vspace{-0.2cm}
  
  \begin{itemize}
    \item $\Rightarrow_{\Lambda\Rightarrow\Theta}\Theta^\#= \ \Rightarrow Fm(\Theta)$ if $\Theta$ is $[\cdot]$-free;
    \item $\Rightarrow_{\Lambda\Rightarrow\Theta}\Theta^\#= \ \Rightarrow Fm(\Theta_0), [\Rightarrow_{\Phi_1\Rightarrow\Psi_1}\Psi_1^\#], \ldots, [\Rightarrow_{\Phi_k\Rightarrow\Psi_k}\Psi_k^\#]$ 
    if $\Theta=\Theta_0,[\Phi_1\Rightarrow\Psi_1], \ldots, [\Phi_k\Rightarrow\Psi_k]$ and $\Theta_0$ is $[\cdot]$-free.
  \end{itemize}
\end{definition}

The $\#$-sequents are generated only by ($\text{inter}_{\downarrow}$), and we use the annotation "track" the implication block from which a $\#$-sequent is generated. We omit the annotation and  write simply $\Rightarrow\Theta^\#$ whenever we do not need to track an application of ($\text{inter}_{\downarrow}$). 

\begin{definition}[The $\#$-annotated cumulative calculus \textbf{C}\cdikk]\label{cumulcalc}
  The cumulative calculus {\rm \textbf{C}\cdikk}~acts on set-based sequents, where a set-based sequent $S = \Gamma \Rightarrow \Delta$ is defined as in definition \ref{def-nested-sequent}, but $\Gamma$ is a {\em set} of formulas and $\Delta$ is a {\em set} of formulas and/or blocks (containing set-based sequents).
  The rules are as follows: 
  \begin{itemize}
    \item $(\bot_L), \ (\top_R), \ (\text{id}),\ (\square_L),\ (\Diamond_R)$, (trans) and $(\text{inter}_{\rightarrow})$ as in  \cdikk. 
    \item  $(\supset_R)$ is replaced by  $(\supset_{R_1}')$ and $(\supset_{R_2}')$ for $A\in\Gamma$ or $A\notin\Gamma$ respectively:
    {
    \[
    \AxiomC{$G\{\Gamma\Rightarrow\Delta,A\supset B,B\}$}
    \RightLabel{ ($A\in \Gamma$)}
    \UnaryInfC{$G\{\Gamma\Rightarrow\Delta,A\supset B\}$}
    \DisplayProof
    \quad
    \AxiomC{$G\{\Gamma\Rightarrow \Delta,A\supset B,\langle A\Rightarrow B\rangle\}$}
    \RightLabel{($A\notin\Gamma$)}
    \UnaryInfC{$G\{\Gamma\Rightarrow\Delta,A\supset B\}$}
    \DisplayProof
    \]
    }
    \item ($\text{inter}_{\downarrow}$) is replaced by the following annotated rule:
    \[
    \AxiomC{$G\{\Gamma\Rightarrow\Delta,\langle\Sigma\Rightarrow\Pi,[\Lambda\Rightarrow\Theta]\rangle,[\Rightarrow_{\Lambda\Rightarrow\Theta}\Theta^\#]\}$}
    \RightLabel{($\text{inter}'_{\downarrow}$)}
    \UnaryInfC{$G\{\Gamma\Rightarrow\Delta,\langle\Sigma\Rightarrow\Pi, [\Lambda\Rightarrow\Theta]\rangle\}$}
    \DisplayProof
    \]
    \item The other rules in \cdikk~are modified by keeping the principal formula in the premises. For example, the cumulative versions of $(\wedge_L),\ (\square_R)$ are:
    \[
      \AxiomC{$G\{A,B,A\wedge B,\Gamma\Rightarrow\Delta\}$}
    \RightLabel{\rm ($\wedge_L'$)}
    \UnaryInfC{$G\{A\wedge B,\Gamma\Rightarrow\Delta\}$}
    \DisplayProof
    \quad
    \AxiomC{$G\{\Gamma\Rightarrow\Delta,\square A,[\Rightarrow A]\}$}
    \RightLabel{\rm ($\square_R'$)}
    \UnaryInfC{$G\{\Gamma\Rightarrow\Delta,\square A\}$}
    \DisplayProof
    \]
  \end{itemize}
\end{definition}

Given the admissibility of weakening and contraction in \cdikk, the following proposition is a direct consequence.

\begin{proposition}
  A sequent $S$ is provable in \cdikk~ iff $S$ is provable in {\rm \textbf{C}\cdikk}.
\end{proposition}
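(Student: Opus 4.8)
The plan is to prove, by induction on the height of a derivation, that provability in \cdikk\ and provability in \textbf{C}\cdikk\ coincide, treating the two directions separately and exploiting only the structural facts already recorded for \cdikk: height-preserving admissibility of weakening and contraction (for formulas and for $[\cdot]$- and $\langle\cdot\rangle$-blocks), the extended weakening rules $\frac{S}{G\{S\}}$, $\frac{G\{\Gamma\Rightarrow\Delta^\flat\}}{G\{\Gamma\Rightarrow\Delta\}}$ and $\frac{G\{\Gamma\Rightarrow\Delta^\#\}}{G\{\Gamma\Rightarrow\Delta\}}$, and the invertibility of the propositional and modal rules, which follows from these admissibilities by the standard argument. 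As a preliminary remark, \textbf{C}\cdikk\ departs from \cdikk\ only in that (i) its rules retain the principal formula(s) in every premise, (ii) $(\supset_R)$ is replaced by the case-split rules $(\supset_{R_1}')$ and $(\supset_{R_2}')$, and (iii) $(\text{inter}_{\downarrow})$ is annotated. Point (iii) is immaterial: erasing all subscripts sends an instance of $(\text{inter}'_{\downarrow})$ to one of $(\text{inter}_{\downarrow})$, and any instance of $(\text{inter}_{\downarrow})$ can be annotated in the only possible way.

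For the implication from \cdikk\ to \textbf{C}\cdikk\ I would replay the \cdikk-derivation bottom-up. For every rule other than $(\supset_R)$, the induction hypothesis supplies \textbf{C}\cdikk-proofs of the \cdikk-premises; weakening in the principal formula(s)~---~and, for $(\text{inter}_{\rightarrow})$ and $(\text{inter}_{\downarrow})$, the extended weakenings on $\flat$ and $\#$~---~turns these into the premises of the corresponding cumulative rule, whose conclusion is exactly the required one. For $(\supset_R)$ with conclusion $G\{\Gamma\Rightarrow\Delta,A\supset B\}$: if $A\notin\Gamma$, weaken in $A\supset B$ and apply $(\supset_{R_2}')$; if $A\in\Gamma$, apply $(\supset_{R_1}')$, which additionally requires turning the \textbf{C}\cdikk-proof of $G\{\Gamma\Rightarrow\Delta,\langle A\Rightarrow B\rangle\}$ into one of $G\{\Gamma\Rightarrow\Delta,A\supset B,B\}$~---~the only non-mechanical point, isolated in the auxiliary lemma below. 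The converse implication is symmetric: a cumulative rule instance becomes the corresponding \cdikk\ instance once the retained copy of the principal formula is discarded from each premise, which is legitimate because the premise is already decomposed, so invertibility of the \cdikk\ rule followed by contraction recovers the \cdikk-premise; and $(\supset_{R_1}')$, $(\supset_{R_2}')$ are translated through $(\supset_R)$, again via the auxiliary lemma.

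The auxiliary lemma states that, when $A\in\Gamma$, the sequents $G\{\Gamma\Rightarrow\Delta,\langle A\Rightarrow B\rangle\}$ and $G\{\Gamma\Rightarrow\Delta,B\}$ are interderivable in \cdikk\ (equivalently in \textbf{C}\cdikk). One direction rests on a reflexivity property of the $\langle\cdot\rangle$-nesting: as the current world lies $\leq$-below itself, the block $\langle A\Rightarrow B\rangle$ may be instantiated there, where $A$ is supplied by $\Gamma$, yielding $B$; syntactically this is carried out by inverting the implication-right rule and feeding $A$ into the block through $(\text{trans})$. The other direction rests on a monotonicity property: a formula $B$ in the local succedent persists to every $\leq$-successor, hence can be absorbed into a sibling $\langle\cdot\rangle$-block whose succedent already contains $B$, which amounts to a restricted contraction across a block boundary, again provable from admissible weakening and contraction. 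I expect the precise formulation and proof of this interderivability~---~in particular threading it correctly through an arbitrary context $G\{\}$~---~to be the main obstacle; everything else is the routine bookkeeping of matching a cumulative calculus against its non-cumulative presentation.
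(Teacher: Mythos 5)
Your overall architecture is the right one and is essentially what the paper's one-line justification (``a direct consequence of the admissibility of weakening and contraction'') silently presupposes: every cumulative rule other than the split implication rules differs from its \cdikk\ counterpart only by retained copies of principal formulas, so weakening and contraction (for formulas and blocks) handle those cases in both directions, and the annotation on $(\text{inter}_{\downarrow})$ is indeed immaterial. You have also correctly located the only non-routine point, namely the mismatch between $(\supset_R)$ and the pair $(\supset_{R_1}')$, $(\supset_{R_2}')$, and the auxiliary interderivability you formulate~---~for $A\in\Gamma$, the equiprovability of $G\{\Gamma\Rightarrow\Delta,\langle A\Rightarrow B\rangle\}$ and $G\{\Gamma\Rightarrow\Delta,B\}$~---~is exactly the right statement and is semantically valid (one direction by reflexivity of $\leq$, the other by heredity).

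The gap is that this auxiliary lemma is precisely where the entire content of the proposition lives, and your sketch of it does not go through as described. Neither direction is an instance of the admissible weakening or contraction rules. Going from $G\{\Gamma\Rightarrow\Delta,B\}$ to $G\{\Gamma\Rightarrow\Delta,\langle A\Rightarrow B\rangle\}$ \emph{strengthens} the sequent (a formula is removed from the outer succedent and buried inside a block), so ``restricted contraction across a block boundary'' cannot be obtained from the stated structural admissibilities; it needs its own induction over derivations. In the converse direction there is no ``implication-right rule to invert'': $\langle A\Rightarrow B\rangle$ is already a block in the premise, so what is required is the admissibility of a \emph{collapse} rule, from $G\{\Gamma\Rightarrow\Delta,\langle\Sigma\Rightarrow\Pi\rangle\}$ with $\Sigma\subseteq\Gamma$ to $G\{\Gamma\Rightarrow\Delta,\Pi\}$, proved by induction on the derivation. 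The delicate cases of that induction are not $(\text{trans})$ but $(\text{inter}_{\rightarrow})$ and $(\text{inter}_{\downarrow})$ acting on the block being collapsed: there the induction hypothesis leaves you with a sibling pair such as $[\Lambda\Rightarrow\Theta^\flat],[\Lambda\Rightarrow\Theta]$ (or $[\Rightarrow\Theta^\#]$ alongside the material extracted from the block), and absorbing one into the other requires a strengthened contraction that merges a block with a structural weakening of itself, not just with an identical copy. None of this is in your proposal, and since it is exactly the part that cannot be dismissed as ``routine bookkeeping,'' the proof is incomplete until the collapse lemma and the generalized block contraction it relies on are stated and established.
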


We introduce  saturation conditions for each rule in \textbf{C}\cdikk. They are needed for both termination and counter-model extraction. 

\begin{definition}[Saturation conditions]
  Let $S=\Gamma\Rightarrow\Delta$ be a sequent. We say $S$ satisfies the saturation condition on the top level with respect to 
  \begin{itemize}
   \item ($\supset_R$) If $A\supset B\in\Delta$, then either $A\in\Gamma$ and $B\in\Delta$, or there is $\langle\Sigma\Rightarrow\Pi\rangle\in\Delta$ with $A\in\Sigma$ and $B\in\Pi$.
    \item ($\Diamond_R$): If $\Diamond A\in \Delta$ and $[\Sigma\Rightarrow\Pi]\in\Delta$, then $A\in\Pi$.
    \item ($\Diamond_L$): If $\Diamond A\in\Gamma$, then there is $[\Sigma\Rightarrow\Pi]\in\Delta$ with $A\in\Sigma$.
    \item ($\square_{R}$): if $\square A\in \Delta$, then there is $[\Lambda\Rightarrow\Theta]\in\Delta$ with $A\in\Theta$.
    \item ($\square_L$): If $\square A\in\Gamma$ and $[\Sigma\Rightarrow\Pi]\in\Delta$, then $A\in \Sigma$.
    \item ($\text{inter}_{\downarrow}$): if $\langle\Sigma\Rightarrow\Pi,[\Lambda\Rightarrow\Theta]\rangle\in \Delta$, then there is $[\Phi\Rightarrow\Psi]\in\Delta$ s.t. $\Phi\Rightarrow\Psi\subseteq^{\-S}\Lambda\Rightarrow\Theta$.
    \item ($\text{inter}_{\rightarrow}$): if $\langle\Sigma\Rightarrow\Pi\rangle,[\Lambda\Rightarrow\Theta]\in\Delta$, then there is $[\Phi\Rightarrow\Psi]\in\Pi$ s.t. $\Lambda\Rightarrow\Theta\subseteq^{\-S}\Phi\Rightarrow\Psi$.
    \item ($\text{trans}$): if $\langle\Sigma\Rightarrow\Pi\rangle\in\Delta$, then $\Gamma\subseteq\Sigma$.
  \end{itemize}
  Saturation conditions for the other propositional rules are defined as usual.
\end{definition}

\begin{proposition}\label{sequent-inclusion}
  Let $S=\Gamma\Rightarrow\Delta$. If $S$ is saturated with respect to $(\text{trans})$, $(\text{inter}_{\rightarrow})$ and ($\text{inter}_{\downarrow}$), then for $\langle\Sigma\Rightarrow\Pi\rangle\in\Delta$, we have $\Gamma\Rightarrow\Delta\subseteq^\-S\Sigma\Rightarrow\Pi$.
\end{proposition}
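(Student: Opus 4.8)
The plan is to verify directly the three defining clauses of structural inclusion (Definition~\ref{def:saturation-inclusion}) for the pair $\Gamma\Rightarrow\Delta$ and $\Sigma\Rightarrow\Pi$, reading each clause off the matching saturation condition. The hypothesis $\langle\Sigma\Rightarrow\Pi\rangle\in\Delta$ is exactly what makes the three relevant conditions ``fire'', so that no iteration or fixed-point reasoning is required.

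Concretely, clause~(i), namely $\Gamma\subseteq\Sigma$, is immediate from saturation with respect to $(\text{trans})$ applied to $\langle\Sigma\Rightarrow\Pi\rangle\in\Delta$. For clause~(ii), given any $[\Lambda\Rightarrow\Theta]\in\Delta$, the simultaneous presence of $\langle\Sigma\Rightarrow\Pi\rangle$ and $[\Lambda\Rightarrow\Theta]$ in $\Delta$ lets the saturation condition for $(\text{inter}_{\rightarrow})$ supply a block $[\Phi\Rightarrow\Psi]\in\Pi$ with $\Lambda\Rightarrow\Theta\subseteq^{\-S}\Phi\Rightarrow\Psi$, which is precisely what clause~(ii) demands. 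For clause~(iii), given any $[\Lambda\Rightarrow\Theta]\in\Pi$, we rewrite $\Pi=\Pi_0,[\Lambda\Rightarrow\Theta]$ so that $\langle\Sigma\Rightarrow\Pi_0,[\Lambda\Rightarrow\Theta]\rangle\in\Delta$; then the saturation condition for $(\text{inter}_{\downarrow})$ yields a block $[\Phi\Rightarrow\Psi]\in\Delta$ with $\Phi\Rightarrow\Psi\subseteq^{\-S}\Lambda\Rightarrow\Theta$, as required. Putting the three clauses together gives $\Gamma\Rightarrow\Delta\subseteq^{\-S}\Sigma\Rightarrow\Pi$.

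Thus the proof is essentially a pure unfolding of definitions: no induction is needed, and one need not even invoke reflexivity or transitivity of $\subseteq^{\-S}$, since the sub-inclusions demanded by clauses~(ii) and~(iii) are handed over verbatim by the saturation conditions themselves. The only point asking for a moment's care is the purely syntactic bookkeeping: for clause~(iii) one must observe that a $[\cdot]$-block occurring inside the succedent $\Pi$ of the $\langle\cdot\rangle$-block can be singled out so as to fit the schematic shape $\langle\Sigma\Rightarrow\Pi_0,[\Lambda\Rightarrow\Theta]\rangle$ that triggers the $(\text{inter}_{\downarrow})$ saturation condition, and dually, for clause~(ii), that the hypothesis $\langle\Sigma\Rightarrow\Pi\rangle\in\Delta$ already furnishes the $\langle\cdot\rangle$-block required to trigger the $(\text{inter}_{\rightarrow})$ condition. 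Since the saturation conditions are stated at the top level, they are applied here only to $S$ itself.
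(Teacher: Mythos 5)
Your proof is correct and is precisely the argument the paper intends: the paper's own proof is a one-line appeal to Definition~\ref{def:saturation-inclusion} and the saturation conditions for $(\text{trans})$, $(\text{inter}_{\rightarrow})$ and $(\text{inter}_{\downarrow})$, and your write-up simply unfolds that matching of the three clauses against the three conditions. Your observation that the saturation conditions already deliver the nested structural inclusions verbatim, so that no induction is needed, is exactly why the paper can dismiss the proof as immediate.
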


We say that a backward application of a rule ($r$) to a sequent $S$ is {\em redundant} if $S$ already satisfies the corresponding saturation condition associated with ($r$). In order to define a terminating proof-search strategy based on \textbf{C}\cdikk, we first impose the following constraints: 

\indent (i)  {\em No rule is  applied to an axiom} and (ii) {\em No rule is applied redundantly.}

\noindent However there is a problem: Backward proof search only respecting these constraints does not necessarily ensure that \emph{any} leaf of a derivation, to which no rule can be applied non-redundantly, satisfies \emph{all} the saturation conditions for rules in \textbf{C}\cdikk. This is a significant difference from the calculus of \fik \  in \cite{FIK-csl2024}. 
The problematic case is the saturation condition for the ($\text{inter}_{\downarrow}$) rule. 

\begin{example}\label{ex:5}
	Let us consider the sequent $\square(p\vee q)\Rightarrow \square r\supset\square s$. After some preliminary steps, we get two sequents: \\
(i). $\square(p\vee q)\Rightarrow \square r\supset\square s,\langle\square(p\vee q),\square r\Rightarrow \square s,[p\vee q,p,r\Rightarrow s]\rangle$.\\
(ii). $\square(p\vee q)\Rightarrow \square r\supset\square s,\langle\square(p\vee q),\square r\Rightarrow \square s,[p\vee q,q,r\Rightarrow s]\rangle$.\\
  Suppose we select (i) and then apply $(\text{inter}_\downarrow)$  obtaining (i'): $\square(p\vee q)\Rightarrow \square r\supset\square s,\langle\square(p\vee q),\square r\Rightarrow \square s,[p\vee q,p,r\Rightarrow s]\rangle,[\Rightarrow s]$. 
	After applying $(\Box_L)$, $(\lor_L)$ and $(\text{inter}_{\rightarrow}$), we obtain: \\
(iii).$\square(p\vee q)\Rightarrow \square r\supset\square s,\langle\square(p\vee q),\square r\Rightarrow \square s,[p\vee q,p,r\Rightarrow s]\rangle,[p\vee q,p\Rightarrow s]$.\\
(iv) $\square(p\vee q)\Rightarrow \square r\supset\square s,\langle\square(p\vee q),\square r\Rightarrow \square s,[p\vee q,p,r\Rightarrow s], [p\vee q,q\Rightarrow]\rangle$, \\ $[p\vee q,q\Rightarrow s]$.\\
We can see that (iii) satisfies the saturation condition for ($\text{inter}_{\downarrow}$), as $p\vee q,p\Rightarrow s\subseteq^\-S p\vee q,p,r\Rightarrow s$ but  (iv) does not, since there is no $[\Phi\Rightarrow\Psi]$ s.t. $\Phi\Rightarrow\Psi\subseteq^{\-S} p\vee q,p,r\Rightarrow s$. Sequent (iv) would not give a model satisfying  Downward Confluence\footnote{Observe that a disallowed \emph{redundant} application of ($\text{inter}_{\downarrow}$)  to the block $[p\vee q,q\Rightarrow]$ would not help, as it would reproduce the branching. }. This example also shows the inadequacy of \local \ for semantic completeness, as sequent expansion  in \local terminates with (i) and (ii), which would not define a model satisfying Downward Confluence. 
\end{example}
This means that not all "branches" in a derivation lead to sequents that produce a correct model. In order to get  a "correct"  counter-model, we need a mechanism that selects the branch that ensures the saturation condition for ($\text{inter}_{\downarrow}$). This is provided by the  tracking mechanism and realization procedure defined below.
%In the previous example, expanding (i) we should then choose the $(\lor_L)$-branch leading to (iii) and not the one leading to (iv). (the situation would be reversed expanding (ii) ). In order to get a countermodel, we need a mechanism that chooses the correct branching in order to ensure  ($\text{inter}_{\downarrow}$) saturation. This is provided by the tracking mechanism and realisation defined next. Notice that when we apply ($\text{inter}_{\downarrow}$) it is necessary to keep track  which implication block generates $\#$-sequent  involved in the rule, which explains  the  use of annotation in $\#$-sequent  introduced before.
 
\begin{definition}[Tracking record based on $\in^{[\cdot]}$]
  Let $S$ be a set-based sequent which is saturated with respect to all the left rules in {\rm \textbf{C}}\cdikk. Take an arbitrary set of formulas, denoted as $\Gamma$. Let $\Omega=\{T~|~T=S~or~T\in^{[\cdot]}S\}$. 
  For each $T\in \Omega$, we define $\mathfrak{G}_S(T,\Gamma)$, the \textit{$\in^{[\cdot]}$-based tracking record} of $\Gamma$ in $S$, which is a subset of $Ant(T)$ as follows: 
  
  \vspace{-0.2cm}
  \begin{itemize}
    \item $\mathfrak{G}_S(S,\Gamma)=\Gamma\cap Ant(S)$;
    \item If $T\in^{[\cdot]}_0 T'$ for some $T'\in\Omega$, let $\mathfrak{G}_S(T,\Gamma)$ be the minimal set such that
    \begin{itemize}
      \item if $\square A\in \mathfrak{G}_S(T',\Gamma)$, then $A\in\mathfrak{G}_S(T,\Gamma)$;
      \item if $\Diamond A\in \mathfrak{G}_S(T',\Gamma)$ and $A\in Ant(T)$, then $A\in\mathfrak{G}_S(T,\Gamma)$;
      \item if $A\wedge B\in \mathfrak{G}_S(T,\Gamma)$, then $A,B\in \mathfrak{G}_S(T,\Gamma)$;
      \item if $A\vee B\in \mathfrak{G}_S(T,\Gamma)$ and $A\in Ant(T)$, then $A\in \mathfrak{G}_S(T,\Gamma)$;
      \item if $A\supset B\in \mathfrak{G}_S(T,\Gamma)$ and $B\in Ant(T)$, then $B\in \mathfrak{G}_S(T,\Gamma)$.
    \end{itemize}
  \end{itemize}  
\end{definition}

Tracking record is used to control rule applications to and within a block created by ($\text{inter}_{\downarrow}$), preserving the saturation condition associated to it. 

\begin{definition}[Realization]\label{def:realization}
  Let $S=\Gamma\Rightarrow\Delta,\langle S_1\rangle,[S_2]$, where $S_1=\Sigma\Rightarrow\Pi,[\Lambda\Rightarrow\Theta]$, $S_2= \ \Rightarrow_{\Lambda\Rightarrow\Theta}\Theta^\#$ and $\Gamma\subseteq\Sigma$. 
  Moreover, we assume that $S_1$ is saturated with respect to all the left rules in {\rm \textbf{C}\cdikk}. 
  Using  the $\in^{[\cdot]}$-based tracking record of $\Gamma$ in $S_1$, we define the \textit{realization} of the block $[S_2]$ in $S$ as follows: 
  \begin{enumerate}
    \item[(i).] First for each $T\in^+ S_2$, define the realization function $f_{S_1}(T)$.\\
    %or $f(T)$ for short when $S_1$ is clear. 
    \noindent By definition, $T$ is of the form $\Rightarrow_{\Phi\Rightarrow\Psi}\Psi^\#$ for some $\Phi\Rightarrow\Psi\in^+\Lambda\Rightarrow\Theta$. $f_{S_1}(T)$ is defined inductively on the structure of $\Psi^\#$ as follows: 
%\vspace{-0.2cm}
    \begin{itemize}
      \item if $\Psi^\#$ is block-free, then $f_{S_1}(T)=\mathfrak{G}(\Phi\Rightarrow\Psi,\Gamma)\Rightarrow\Psi^\#$.
      \item otherwise $\Psi^\#= \Psi_0, [T_1], \ldots, [T_k]$ where $\Psi_0$ is a set of formulas, then 
      $f_{S_1}(T)=\mathfrak{G}(\Phi\Rightarrow\Psi,\Gamma)\Rightarrow \Psi_0, [f_{S_1}(T_1)], \ldots, [f_{S_1}(T_k)]$.
    \end{itemize}
    \item[(ii).] With $f_{S_1}(S_2)$, the realization of $[S_2]$ in $S$ is $\Gamma\Rightarrow\Delta,\langle S_1\rangle,[f_{S_1}(S_2)]$. 
  \end{enumerate}
\end{definition}

As the next proposition shows (iii), the expansion produced by a realization procedure is not an extra logical step: it  can be obtained by applying the rules of the calculus by choosing the right branch.

\begin{proposition}\label{prop:realization}
  Let $S=\Gamma\Rightarrow\Delta,\langle S_1\rangle,[S_2]$, where $S_1=\Sigma\Rightarrow\Pi,[\Lambda\Rightarrow\Theta]$ and $S_2= \ \Rightarrow_{\Lambda\Rightarrow\Theta}\Theta^\#$ and $\Gamma\subseteq\Sigma$. 
  If $S_1$ is saturated with respect to all the left rules in {\rm \textbf{C}}\cdikk, then for the sequent $S'=\Gamma\Rightarrow\Delta,\langle S_1\rangle,[f_{S_1}(S_2)]$ which is obtained by the realization procedure in Definition \ref{def:realization}, we have
  \begin{enumerate}
    \item[(i).] $S'$ is saturated with respect to all the left rules applied to or within $[f_{S_1}(S_2)]$;
    \item[(ii).] $f_{S_1}(S_2)\subseteq^\-S \Lambda\Rightarrow\Theta$;
    \item[(iii).] $S'$ can be obtained by applying left rules of  {\rm \textbf{C}}\cdikk~to $[S_2]$ in $S$.
  \end{enumerate}
\end{proposition}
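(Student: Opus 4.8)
The plan is to establish the three items by a single induction on the $\in^{[\cdot]}$-nesting depth of $\Theta$ (equivalently, of $S_2$), resting on the following reformulation of the realization. Since the $\#$-operator keeps the $[\cdot]$-block skeleton of $\Theta$ intact — turning each block $[\Phi_i\Rightarrow\Psi_i]$ of $\Theta$ into $[\Rightarrow\Psi_i^\#]$ — and $f_{S_1}$ recurses block by block without adding or deleting any block, the sequent $f_{S_1}(S_2)$ is nothing but $\Lambda\Rightarrow\Theta$ with two uniform edits: \emph{(a)} in every descendant block $\Phi\Rightarrow\Psi$ the antecedent $\Phi$ is replaced by the tracking record $\mathfrak{G}_{S_1}(\Phi\Rightarrow\Psi,\Gamma)$, which by construction is a subset of $\Phi$; \emph{(b)} from every succedent all formulas outside $Fm(\cdot)$ and all $\langle\cdot\rangle$-blocks are deleted. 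The point is that the closure clauses defining $\mathfrak{G}_{S_1}$ are precisely the images under (a) of the left-rule saturation conditions, which is what makes all three items work.

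For item (ii) I would show, by the same induction, that every realized block is structurally included in its original. Condition (i) of Definition~\ref{def:saturation-inclusion} is immediate from (a), as $\mathfrak{G}_{S_1}(\Phi\Rightarrow\Psi,\Gamma)\subseteq\Phi$. Conditions (ii) and (iii) are witnessed by the identity on the block skeleton: the $i$-th $[\cdot]$-block of $f_{S_1}(S_2)$ is the realization of the $i$-th $[\cdot]$-block of $\Theta$, and these two are in the $\subseteq^{\-S}$-relation by the induction hypothesis. In the base case, where $\Theta$ is $[\cdot]$-free and the realized block is simply $\mathfrak{G}_{S_1}(\Lambda\Rightarrow\Theta,\Gamma)\Rightarrow Fm(\Theta)$, only the antecedent inclusion is needed.

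For item (i) I would traverse $[f_{S_1}(S_2)]$ and each of its descendant blocks and verify the saturation condition one rule at a time, the antecedent of such a block being some $\mathfrak{G}_{S_1}(\Phi\Rightarrow\Psi,\Gamma)$. Saturation with respect to $(\wedge_L)$, $(\square_L)$ — including the instances \emph{directed into} $[f_{S_1}(S_2)]$ from the $\square$-formulas of $\Gamma$, which are covered since $\mathfrak{G}_{S_1}(S_1,\Gamma)=\Gamma$ — and, for the extensions, $(\axiomt_\square)$, holds because these are literal instances of closure clauses of $\mathfrak{G}_{S_1}$. Saturation with respect to $(\vee_L)$ and $(\supset_L)$ holds because $S_1$ is already saturated, so that for $A\vee B$ (resp.\ $A\supset B$) in $\Phi$ one of the disjuncts (resp.\ the consequent) already lies in $\Phi$, and the ``$\in Ant(T)$'' side-conditions of the matching $\mathfrak{G}_{S_1}$-clauses pick up exactly that one. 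Finally, $(\Diamond_L)$-saturation is what the $\Diamond$-clause of $\mathfrak{G}_{S_1}$ exists for: if $\Diamond A\in\mathfrak{G}_{S_1}(\Phi\Rightarrow\Psi,\Gamma)$ then, $S_1$ being $(\Diamond_L)$-saturated, some block $[\Phi_i\Rightarrow\Psi_i]$ of $\Psi$ has $A\in\Phi_i$, and the $\Diamond$-clause then forces $A$ into $\mathfrak{G}_{S_1}(\Phi_i\Rightarrow\Psi_i,\Gamma)$, i.e.\ into the antecedent of the corresponding realized block. The conditions for $(\text{trans})$, $(\text{inter}_{\rightarrow})$, $(\text{inter}_{\downarrow})$ and the right rules are not at issue, since $[f_{S_1}(S_2)]$ and all its descendants are $\langle\cdot\rangle$-free.

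For item (iii) I would exhibit explicitly a derivation with $S$ at the root and $S'$ at the leaf whose steps are all left rules of {\rm \textbf{C}}\cdikk\ directed into $[S_2]$ or its descendants, staged top-down along the block skeleton: at each block, starting from the outermost one, one first feeds in, via $(\square_L)$, the (already computed) antecedent of the parent block — for the top block, $(\square_L)$ from the $\square$-formulas of $\Gamma$ — and then applies the propositional left rules $(\wedge_L)$, $(\vee_L)$, $(\supset_L)$ inside the block, \emph{choosing} for the branching rules the premise whose new antecedent formula lies in the relevant tracking record (a legitimate choice, by the argument of item~(i)), until the antecedent equals $\mathfrak{G}_{S_1}(\cdot,\Gamma)$; the inner $[\cdot]$-blocks required are already present, being exactly those of $\Theta^\#$. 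I expect this reconstruction to be the main obstacle: one must check that the declaratively-defined $f_{S_1}(S_2)$ is actually reached by a \emph{legitimate} and correctly \emph{ordered} sequence of rule instances, that the branch choices made in different blocks are mutually consistent — they are, since all of them are read off from the single fixed saturated sequent $S_1$ through $\mathfrak{G}_{S_1}(\cdot,\Gamma)$ — and that the formulas contributed to deep antecedents by the $\Diamond$-clause of $\mathfrak{G}_{S_1}$ are supplied without producing any block absent from $f_{S_1}(S_2)$. Items (i) and (ii), by contrast, are routine once the reformulation (a)--(b) is in place.
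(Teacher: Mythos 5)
Your proposal is correct and follows essentially the same route as the paper: item (ii) by structural induction using $\mathfrak{G}_{S_1}(\cdot,\Gamma)\subseteq Ant(\cdot)$ and the preserved block skeleton, item (i) by matching the closure clauses of the tracking record against the left-rule saturation conditions of $S_1$, and item (iii) by reading off the realization as a sequence of left-rule applications. The paper dispatches (iii) with a one-line appeal to (i); your more explicit staging of that reconstruction (and the observation that all branch choices are determined by the single saturated sequent $S_1$) is a faithful elaboration of the same argument, not a different one.
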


\begin{example}
  We go back to sequent (i') in Example \ref{ex:5}. 
  Let 
  $S=\square(p\vee q)\Rightarrow \square r\supset\square s,\langle\square(p\vee q),\square r\Rightarrow \square s,[p\vee q,p,r\Rightarrow s]\rangle,[\Rightarrow s]$ and $S_1= \square(p\vee q),\square r\Rightarrow \square s,[p\vee q,p,r\Rightarrow s]$, $S_2= \ \Rightarrow s$, $T= p\vee q,p,r\Rightarrow s$. 
  Since $[S_2]$ is produced by ($\text{inter}_{\downarrow}$) from $T$, we have $S_2=\ \Rightarrow_T s$. We are intended to realize the block $[S_2]$ in $S$ by the tracking record of $Ant(S)$ in $S_1$. By definition, we have $\mathfrak{G}_{S_1}(S_1,Ant(S))=Ant(S)=\{\square(p\vee q)\}$ and $\mathfrak{G}_{S_1}(T,Ant(S))=\{p\vee q,p\}$.
  According to realization, by applying  $f_{S_1}(\cdot)$ to  $S_2$, we get $f_{S_1}(\Rightarrow_{T} s)=p\vee q,p \Rightarrow s$. 
  Thus, the entire output sequent is $\square(p\vee q)\Rightarrow \square r\supset\square s,\langle\square(p\vee q),\square r\Rightarrow \square s,[p\vee q,p,r\Rightarrow s]\rangle,[p\vee q,p\Rightarrow s]$. 
  And this is just (iii) in Example \ref{ex:5}, which is the right expansion of (i').
\end{example}

In order to define the proof-search procedure, we first divide all the rules of \textbf{C}\cdikk~into four groups as (R1):  all propositional and modal rules except $(\supset_R)$; (R2): (trans) and $(\text{inter}_{\rightarrow})$; (R3): $(\supset_R)$; (R4): $(\text{inter}_{\downarrow})$.

%Compared with the classification on rules in $\textbf{CC}_{\textbf{FIK}}$, we see the local right rule for $\square$ is moved to the R1 group, and R4, which consists of $\text{(inter)}_{\downarrow}$ only, will be treated specially in the following proof-search procedures we propose.

Let $S=\Gamma\Rightarrow\Delta$, we denote by  $\bar{\Delta}$ the sequent obtained by removing all (nested) occurrences of $\langle\cdot\rangle$-blocks in $\Delta$. \footnote{E.g., let  $\Delta = B, \langle \Sigma \Rightarrow \Pi\rangle, [\Lambda \Rightarrow [D\Rightarrow  E, \langle P\Rightarrow Q\rangle]$, then $\bar{\Delta} =  B, [\Lambda \Rightarrow [D\Rightarrow  E]]$.}

\begin{definition}[Saturation]\label{3-saturation}
  Let $S =\Gamma\Rightarrow\Delta$ be a sequent and not an axiom. $S$ is called: 
  \begin{itemize}
    \item {\rm R1-saturated} if $\Gamma\Rightarrow\bar{\Delta}$ satisfies all the saturation conditions of R1 rules;
    \item {\rm R2-saturated} if $S$ is R1-saturated and $S$ satisfies saturation conditions of R2 rules for blocks $\langle S_1\rangle,[S_2]$ s.t. $S_1 \in^{\langle\cdot\rangle}_0 S$ and $S_2 \in^{[\cdot]}_0 S$;
    \item {\rm R3-saturated} if $S$ is R2-saturated and $S$ satisfies saturation conditions of R3 rules for formulas $A\supset B\in \Delta$;
    \item {\rm R4-saturated} $S$ is R3-saturated and $S$ satisfies saturation conditions of R4 rule for each implication block $\langle\Sigma\Rightarrow\Pi,[S_1]\rangle$ s.t. $\Sigma\Rightarrow\Pi,[S_1]\in^{\langle\cdot\rangle}_0 S$.
  \end{itemize}
\end{definition}

\begin{definition}[Global saturation]
  Let $S$ be a sequent and not an axiom. $S$ is called {\rm global-R$i$-saturated} if for each $T\in^+S$, $T$ is Ri-saturated where $i\in\{1,2,3\}$; {\rm global-saturated} if for each $T\in^+S$, $T$ is R4-saturated.
\end{definition}

In order to specify the proof-search procedure, we make use of the following four macro-steps that extend a given derivation $\+D$ by expanding a leaf $S$. Each procedure applies rules {\em non-redundantly} to some $T=\Gamma\Rightarrow\Delta\in^+S$. 
%that we recall it means that $S = G\{T\}$ for some context $G$.
\begin{itemize}
  \item $\textbf{EXP1}(\+D, S, T)= \+D'$ where $\+D'$ is the extension of $\+D$  obtained by applying  R1-rules to every formula in $\Gamma\Rightarrow\bar{\Delta}$. 
  \item $\textbf{EXP2}(\+D, S, T)= \+D'$ where $\+D'$ is the extension of $\+D$  obtained by applying R2-rules to blocks $\langle T_i\rangle,[T_j]\in\Delta$. 
  \item $\textbf{EXP3}(\+D, S, T)= \+D'$ where $\+D'$ is the extension of $\+D$  obtained by applying R3-rules to formulas $A\supset B\in\Delta$. 
  \item  $\textbf{EXP4}(\+D, S)= \+D'$ where $\+D'$ is the extension of $\+D$  obtained by applying (i) R4-rule to each implication block $T'\in^+S$ and (ii) realization procedures to modal blocks produced in (i). This step extends $\+D$ by a \emph{single} branch whose leaf is denoted by $S'$. 
\end{itemize}

It can be proved that each of these four macro-steps terminates. The claim is almost obvious except for $\textbf{EXP1}$. (for $\textbf{EXP2}$ and $\textbf{EXP3}$, see \cite[Proposition 46]{FIK-csl2024}).

\begin{proposition}\label{prop:expi-terminates}
  Given a finite derivation $\+D$, a finite leaf $S$ of $\+D$ and $T\in^+S$, then for $i\in\{1,2,3,4\}$, each $\textbf{EXPi}(\+D, S, T)$ terminates by producing a finite expansion of $\+D$ where all sequents are finite. 
\end{proposition}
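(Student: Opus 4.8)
The plan is to prove Proposition~\ref{prop:expi-terminates} by treating each of the four macro-steps in turn, with $\textbf{EXP1}$ being the only one requiring real work; for $\textbf{EXP2}, \textbf{EXP3}, \textbf{EXP4}$ I would invoke the analogous arguments from \cite[Proposition~46]{FIK-csl2024} and the termination of the realization procedure (Proposition~\ref{prop:realization}), noting that $\textbf{EXP4}$ applies $(\text{inter}_\downarrow)$ and then realization finitely often, producing a single finite branch. So the bulk of the proof concerns $\textbf{EXP1}$: showing that exhaustively applying the R1-rules (all propositional and modal rules except $(\supset_R)$, in their cumulative form) to every formula in $\Gamma\Rightarrow\bar\Delta$ halts.

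First I would observe that R1-rules never create new $\langle\cdot\rangle$-blocks (that is the job of $(\supset_R)$, an R3-rule), so $\textbf{EXP1}$ operates entirely inside the ``modal tree'' obtained by ignoring implication blocks; the rules $(\Diamond_L)$ and $(\square_R)$ (in its cumulative form $(\square_R')$) do create new $[\cdot]$-blocks, so the argument must bound both the branching/depth of $[\cdot]$-nesting and the number of formulas that can accumulate in each node. The standard device is a subformula-property/modal-degree argument: every formula introduced into a sequent node by an R1-rule is a subformula of a formula already present (for the propositional rules and $(\square_L),(\Diamond_R)$) or generates a fresh $[\cdot]$-block whose contents have strictly smaller modal degree (for $(\Diamond_L),(\square_R')$). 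Hence along any branch of $[\cdot]$-nesting the modal degree strictly decreases, bounding the nesting depth by $\textit{md}(S)$; and within each node the set of formulas is a subset of the (finite) set of subformulas of formulas occurring in $S$, so by the cumulative/set-based formulation each node stabilizes after finitely many non-redundant rule applications. Combining: finitely many nodes, each finite, each reached after finitely many steps, so $\textbf{EXP1}$ terminates with all sequents finite.

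I would make the ``number of new $[\cdot]$-blocks is bounded'' point precise by noting that $(\Diamond_L)$ fires (non-redundantly) on $\Diamond A\in\Gamma$ only if there is no existing $[\Sigma\Rightarrow\Pi]\in\Delta$ with $A\in\Sigma$, and since the $\Diamond$-formulas in a node come from a finite subformula set, only finitely many such blocks are ever created at a given node; likewise $(\square_R')$ on $\square A\in\Delta$. Each newly created block is of the form $[A\Rightarrow]$ or $[\Rightarrow A]$ with $\textit{md}(A)<\textit{md}$ of the parent node, and $(\square_L),(\Diamond_R)$ only copy formulas of no greater modal degree downward into existing blocks — so no infinite regress of nested blocks. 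For the finiteness within a node, the key is exactly that we work with set-based sequents and forbid redundant applications: the antecedent and succedent grow monotonically but are confined to the finite set $\mathrm{Sub}(S)$ of subformulas (plus the blocks), so they reach a fixed point.

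The main obstacle I anticipate is being careful about the interaction between $\textbf{EXP1}$ and the cumulative/annotated blocks produced earlier by $(\text{inter}_\downarrow)$ (the $\#$-sequents) and by realization: one must check that applying R1-rules inside such a block still respects the modal-degree bound, i.e.\ that the $\#$-operator and the realization function do not increase modal degree or reintroduce larger-degree formulas — which follows because $\Theta^\#$ and $f_{S_1}(\cdot)$ only retain/relocate formulas already present (cf.\ Definition~\ref{def:star-operator} and Definition~\ref{def:realization}, together with Proposition~\ref{prop:realization}(ii)). Once that is settled, the termination of $\textbf{EXP1}$ is a routine well-foundedness argument on the lexicographic measure (modal-nesting depth, then number of ``unsaturated'' formula occurrences in the node).
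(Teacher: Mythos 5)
Your proposal is correct, and in fact it supplies more than the paper does: the paper states this proposition without proof, merely remarking that the claim is ``almost obvious except for $\textbf{EXP1}$'' and citing \cite[Proposition~46]{FIK-csl2024} for $\textbf{EXP2}$ and $\textbf{EXP3}$. Your argument for $\textbf{EXP1}$ --- subformula closure of the R1-rules, the bound on $[\cdot]$-nesting depth via the modal degree (which the rules preserve, cf.\ Proposition~\ref{prop:md-preserve} and the $+1$ for modal blocks in the definition of $\textit{md}$), the bound on block creation via non-redundancy of $(\Diamond_L)$ and $(\square_R')$, and stabilization of each set-based node inside a finite subformula set --- is exactly the standard completion consistent with the machinery the paper deploys elsewhere, and your observation that $\Theta^\#$ and the realization function only retain or relocate formulas already present correctly discharges the one interaction worth checking.
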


We first give the procedure $\text{PROCEDURE}_0(S_0)$ which builds a derivation with root $S_0$ and only uses the macro-steps $\textbf{EXP1}(\cdot)$ to $\textbf{EXP3}(\cdot)$, thus only the rule of  \local. Therefore  $\text{PROCEDURE}_0(\Rightarrow A)$ decides whether a formula $A$ is valid in \dik. In addition the procedure $\text{PROCEDURE}_0(\cdot)$ is then used as a subroutine of $\text{PROCEDURE}(\Rightarrow A)$ to obtain either a proof of $A$ or a global-saturated sequent, see Algorithm \ref{PROCEDURE}.

\begin{proposition}\label{proc0term}
	Given a sequent $S_0$, ${\rm PROC_0}(S_0)$ produces a finite derivation with all the leaves axiomatic or at least one global-R3-saturated leaf.
\end{proposition}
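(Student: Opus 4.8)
The plan is to prove Proposition~\ref{proc0term} by analyzing the structure of $\mathrm{PROC}_0(S_0)$ as a loop that repeatedly applies the macro-steps $\textbf{EXP1}$, $\textbf{EXP2}$, $\textbf{EXP3}$ in a fair way to the sequents occurring along each branch, and showing (a) that the loop terminates, and (b) that upon termination every leaf is either an axiom or carries enough saturation to be global-R3-saturated. First I would spell out the control structure: $\mathrm{PROC}_0$ maintains the current derivation $\+D$, picks an open (non-axiomatic) leaf $S$, picks a not-yet-saturated $T\in^+S$, and fires the appropriate $\textbf{EXPi}$; by Proposition~\ref{prop:expi-terminates} each such macro-step terminates and keeps all sequents finite, so the only thing to rule out is an infinite sequence of macro-steps. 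A branch is closed as soon as its leaf is an axiom; otherwise it is extended. Since the procedure uses only $\textbf{EXP1}$–$\textbf{EXP3}$, the rules applied are exactly those of \local\ (by the remark that all rules of \cdikk\ except $(\Box_R)$ and $(\mathrm{inter}_\downarrow)$ — and here also using the cumulative variant — belong to \cfik), so I can import the termination argument for the \fik-calculus from \cite{FIK-csl2024}.

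The key termination step is a well-founded measure on sequents along a branch. I would argue that the modal degree $\textit{md}(S)$ is non-increasing and bounds the nesting depth of newly created blocks: $(\supset_R)$, $(\Diamond_L)$, $(\Box_R)$ create a fresh block whose modal degree strictly drops (for the $R$-blocks) or stays within the subformula budget (for the $\langle\cdot\rangle$-blocks introduced by $(\supset_R)$, controlled by the modified rule $(\supset'_{R_1})/(\supset'_{R_2})$ which only introduces a $\langle A\Rightarrow B\rangle$ block when $A\notin\Gamma$, preventing the obvious loop). Because the calculus is cumulative and set-based, the multiset of formulas in any antecedent/succedent at a fixed "address" in the nested structure can only grow and is bounded by the finite set of subformulas of $S_0$ (closed under the block-forming operations up to the fixed modal degree); hence each individual sequent stabilizes after finitely many non-redundant rule applications. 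The subtle point, already handled for \fik\ in \cite{FIK-csl2024}, is that the $(\mathrm{trans})$ and $(\mathrm{inter}_\rightarrow)$ rules propagate formulas across blocks, so one must check that this propagation also terminates: here $(\mathrm{trans})$ only copies $\Gamma$ into an immediately nested $\langle\cdot\rangle$-block, and $\textbf{EXP2}$ applies it non-redundantly, so again the target antecedent is bounded by the subformula set and the process saturates. Thus every branch is finite, and by König's Lemma (the tree is finitely branching, as each rule has at most two premises) the whole derivation is finite.

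For part (b), upon termination $\mathrm{PROC}_0$ has no applicable non-redundant macro-step, which by the definition of the macro-steps means: for the leaf $S$ under consideration, every $T\in^+S$ is R1-saturated (no $\textbf{EXP1}$ applies), R2-saturated (no $\textbf{EXP2}$ applies), and R3-saturated (no $\textbf{EXP3}$ applies) — precisely global-R3-saturation. If instead a leaf is an axiom, the branch was closed. So either all leaves are axiomatic or there is at least one global-R3-saturated leaf. I would make the bookkeeping explicit by noting that $\mathrm{PROC}_0$ processes the "to-do" set of (leaf, $T$, rule-group) triples in a round-robin/fair manner, so that no required saturation is postponed forever; this is the standard fairness-of-scheduling argument and is routine once the per-step termination (Proposition~\ref{prop:expi-terminates}) and the global measure are in place.

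The main obstacle, and the only part that needs genuine care rather than bookkeeping, is the global termination measure in the presence of the propagation rules $(\mathrm{trans})$ and $(\mathrm{inter}_\rightarrow)$ combined with the block-creating rules: one must verify that the interleaving of "create a new block of smaller modal degree" with "copy formulas into existing blocks" does not generate infinitely many distinct blocks. The resolution is that (i) the modal degree strictly decreases into $R$-blocks and is bounded, (ii) the number of distinct blocks at each level is bounded because sequents are set-based and live over the fixed finite closure of $\mathrm{subf}(S_0)$, and (iii) the modified $(\supset_R)$ rule (splitting on $A\in\Gamma$ vs.\ $A\notin\Gamma$) blocks the only non-modal source of unbounded $\langle\cdot\rangle$-nesting. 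With these three facts, a lexicographic measure (modal degree of the sequent, then number of unsaturated formulas/blocks in the set-based closure) decreases at every non-redundant macro-step, yielding termination; everything else follows the pattern of \cite[Section on termination]{FIK-csl2024}.
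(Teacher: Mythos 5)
Your overall strategy coincides with the paper's: per-macro-step termination (Proposition~\ref{prop:expi-terminates}), a combinatorial bound on how much nested structure can ever be created, a finitely-branching-tree argument, and, for the second half of the statement, the observation that a non-axiomatic leaf which were not global-R3-saturated would be selected at line 8 and expanded further, contradicting its being a leaf. That last part is exactly the paper's argument and is fine as you state it.

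The gap is in the termination core. Your proposed lexicographic measure (modal degree of the sequent, then number of unsaturated formulas/blocks) does not decrease at an $\textbf{EXP3}$ step: firing $(\supset_{R_2}')$ on $C\supset D$ creates a fresh block $\langle C\Rightarrow D\rangle$ whose formulas are themselves unsaturated, and by the paper's definition of modal degree an implication block contributes $md(T_i)$ \emph{without} increment, so the first component is unchanged while the second increases. Modal degree only controls $[\cdot]$-nesting; the actual danger is unbounded $\langle\cdot\rangle$-nesting, and your item~(iii) --- that the modified implication rule ``blocks the only non-modal source of unbounded $\langle\cdot\rangle$-nesting'' --- is precisely the claim that needs proof rather than assertion. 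The side condition $A\in\Gamma$ by itself only prevents the same implication from re-firing \emph{at the same node}; a priori $C\supset D$ could fire again inside a descendant block and generate an infinite $\in^{\langle\cdot\rangle}_0$-chain. The missing step (the paper's Lemma~\ref{lem:boundimp}) combines the two ingredients you mention separately: when $C\supset D$ creates a $\langle\cdot\rangle$-child, $C$ enters that child's antecedent, and R2-saturation (the $(\text{trans})$ condition) forces antecedents to be inherited along the entire chain, so $C\supset D$ can never again create a $\langle\cdot\rangle$-child of any descendant. Hence the set of implication subformulas still able to fire strictly decreases along every $\in^{\langle\cdot\rangle}_0$-chain, bounding chain length by $O(|S_0|)$ and the total number of implication blocks by $O(|S_0|^{|S_0|})$; an infinite branch would require infinitely many non-redundant $\textbf{EXP3}$ phases, hence unboundedly many implication blocks --- contradiction. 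With the lexicographic measure replaced by this chain-length bound, your argument goes through and matches the paper's.
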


\vspace{-.5cm}
\begin{algorithm}[!h]
  \caption{$\text{PROCEDURE}_0(S_0)$}\label{PROCEDURE0}
  \KwIn{$S_0$}
  initialization $\+D= \ \Rightarrow S_0$\;
    \Repeat{FALSE}{
      \uIf{all the leaves of $\+D$ are axiomatic}{
        return ``PROVABLE" and $\+D$ }
      \uElseIf{there is a non-axiomatic leaf of $\+D$ which is global-R3-saturated}{return $\+D$} 
    \Else{{\bf select} one non-axiomatic leaf $S$ of $\+D$ that is not global-R3-saturated\\
         { \uIf{ $S$ is global-R2-saturated}{{\bf for} all $T\in^+S$ that is not R3-saturated, let $\+D = \textbf{EXP3}(\+D, S, T)$}
        \uElseIf{$S$ is global-R1-saturated}
            {{\bf for} all  $T\in^+ S$ that is not R2-saturated, let $\+D = \textbf{EXP2}(\+D, S, T)$}
         \uElse{{\bf for} all  $T\in^+ S$ that is not R1-saturated, let $\+D = \textbf{EXP1}(\+D, S, T)$}
          }
        }
    }
\end{algorithm}

%\vspace{-.3cm}

%begin{algorithm}[!htb]
 % \caption{$\text{PROCEDURE}(A)$}\label{PROCEDURE}
 % \KwIn{$A$}
%  initialization $\+D= \text{PROCEDURE}_0 (\Rightarrow A)$\;
%  \uIf{all the leaves of $\+D$ are axiomatic}{
%        return ``PROVABLE'' and $\+D$ }
 %       \uElse{
 %       \Repeat{FALSE}{
 %        \uIf{there is a non-axiomatic leaf of $\+D$ which is global-saturated}{return  ``UNPROVABLE'' and $\+D$}
 %          \uElse{{\bf select} one non-axiomatic leaf $S$ of $\+D$, extend $\+D$ by applying $\text{PROCEDURE}_0(S)$\\
         %{{\bf select} one non-axiomatic leaf $S$ of $\+D$ which is global-R3-saturated\\
%         let $\+D =\textbf{EXP4}(\+D, S)$\\
%         }
 %         }
 %       }
 %   }
%\end{algorithm}

\begin{algorithm}[H]
	\caption{$\text{PROCEDURE}(A)$}\label{PROCEDURE}
	\KwIn{$A$}
	initialization $\+D= \text{PROCEDURE}_0 (\Rightarrow A)$\;
	\uIf{all the leaves of $\+D$ are axiomatic}{
		return ``PROVABLE'' and $\+D$ }
	\uElse{
		\While{(No global saturated leaf of  $\+D$ is found)}{
		{\bf select} one  global-R3-saturated leaf $S$ of $\+D$   \\
		let $\+D =\textbf{EXP4}(\+D, S)$\\
		let $S'$ be the leaf of the unique branch of $\+D$  expanded by $\textbf{EXP4}(\+D, S)$
	 extend $\+D$ by applying $\text{PROCEDURE}_0(S')$\\
		}
		return  ``UNPROVABLE" and $\+D$
		}
\end{algorithm}

Lastly, we show that $\text{PROCEDURE}(A)$ terminates. 

\begin{theorem}[Termination for \textbf{C}\cdikk]\label{termination}
  Let $A$ be a formula. Proof-search for $\Rightarrow A$ in {\rm \textbf{C}}\cdikk~terminates with a finite derivation in which either all the leaves are axiomatic or there is at least one global-saturated leaf.
\end{theorem}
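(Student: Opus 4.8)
\textbf{Proof plan for Theorem \ref{termination} (Termination for \textbf{C}\cdikk).}

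The plan is to show that $\text{PROCEDURE}(A)$ halts, by bounding the whole construction in terms of the modal degree $\textit{md}(A)$ and the finitely many subformulas of $A$. First I would recall that each macro-step $\textbf{EXPi}$ individually terminates (Proposition \ref{prop:expi-terminates}) and that, within a fixed ``world'' $T$, only finitely many formulas and blocks can ever be introduced: all antecedent/succedent formulas occurring anywhere in the derivation are (sub)formulas of $A$ or of formulas appearing in already-created blocks, and by the shape of the rules the multiset of formulas attached to any node is drawn from the finite set of subformulas of $A$. Since the calculus is cumulative and rules are applied only non-redundantly, each $T\in^+S$ reaches an R4-saturated state after finitely many expansions; hence $\text{PROCEDURE}_0$ terminates on any input sequent, giving Proposition \ref{proc0term}, and the remaining issue is the outer \textbf{while}-loop of $\text{PROCEDURE}$ that alternates $\textbf{EXP4}$ with fresh calls to $\text{PROCEDURE}_0$.

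Next I would set up the key measure controlling the outer loop. Each application of $\textbf{EXP4}$ (the ($\text{inter}_\downarrow$) rule plus realization) creates new \emph{modal} blocks $[\cdot]$ that sit directly under the root and whose contents are obtained from $\Theta^\#$; by Definition \ref{def:star-operator}, $\#$ strips one layer of $\square$/$\lozenge$, so the modal degree of any newly created $\#$-block is strictly smaller than that of the implication block that triggered it. Moreover $\text{PROCEDURE}_0$ applied afterwards to $S'$ only expands \emph{inside implication blocks} newly created within the realized $\#$-block, and — this is the crucial invariant, essentially the content of the remark following Algorithm \ref{PROCEDURE} — the $\langle\cdot\rangle$-free part of a realized block is already left-saturated (Proposition \ref{prop:realization}(i)), and structural inclusion $f_{S_1}(S_2)\subseteq^{\-S}\Lambda\Rightarrow\Theta$ (Proposition \ref{prop:realization}(ii)) is preserved under all further expansion. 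Consequently a ($\text{inter}_\downarrow$)-saturation obligation, once fulfilled by realization, is never re-opened, and no implication block that has become R3-saturated is ever modified again. Thus I would argue by a nested induction: outer induction on $\textit{md}(A)$ bounding the depth of nested $\#$-blocks that can be spawned, and for fixed depth, an inner argument that only finitely many implication blocks can be created (bounded by the finitely many antecedent-sets and by non-redundancy of $(\supset_R)$, which because of the modified rule $(\supset_{R_1}')/(\supset_{R_2}')$ cannot loop on the same formula).

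Putting these together: the outer loop can select a global-R3-saturated-but-not-global-saturated leaf only finitely often, because each selection strictly decreases a well-founded measure — e.g. the number of implication blocks in $\+D$ that still violate the ($\text{inter}_\downarrow$) condition, which $\textbf{EXP4}$ together with the ensuing $\text{PROCEDURE}_0$ call strictly reduces without ever increasing it (the newly added material lives at strictly lower modal degree and introduces no new violations at the current level, by the invariant). When the measure reaches zero, every $T\in^+S$ of the surviving leaf is R4-saturated, i.e.\ the leaf is global-saturated; otherwise all leaves became axiomatic earlier. Hence the derivation is finite with the stated alternative. The main obstacle I anticipate is making the ``preservation'' invariant fully rigorous: one must verify that the realization procedure and subsequent $\text{PROCEDURE}_0$ calls genuinely never destroy an already-satisfied ($\text{inter}_\downarrow$) saturation condition nor an instance of structural inclusion, and that the interaction between $(\text{inter}_\downarrow)$ and $(\text{inter}_\rightarrow)$ does not generate an infinite regress of mutually-triggering blocks — this is exactly the delicate point flagged by Example \ref{ex:5}, and handling it is where the tracking-record/realization machinery must be shown to do its job.
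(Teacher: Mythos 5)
Your proposal follows essentially the same route as the paper: termination of each macro-step and of $\text{PROCEDURE}_0$ is taken as given, and the outer loop is controlled by the observation that every block spawned by $\textbf{EXP4}$ lives at strictly smaller modal degree, so fresh $(\supset_R)$-obligations (and hence new $\text{PROCEDURE}_0$ rounds) can only arise finitely often --- exactly the paper's argument via Propositions \ref{prop:md-preserve} and \ref{prop:realization}(iii). One small correction: the degree drop is not because $\#$ ``strips a layer of $\square/\lozenge$'' (it only strips antecedents), but because $[\Rightarrow\Theta^\#]$ is a modal block copying the contents of $[\Lambda\Rightarrow\Theta]$, which sits one $[\cdot]$-nesting level deeper inside the triggering implication block, and each $[\cdot]$-nesting contributes $+1$ to the modal degree of a sequent.
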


We can obtain decision procedures  for  \cdid~and \cdit~too in a similar way: we consider a cumulative version {\rm \textbf{C}\cdid}~and {\rm \textbf{C}\cdit} of the respective calculi and we define suitable saturation conditions, for  a sequent $S$: \\
(\axiomd): if $\Gamma^\square\cup\Delta^\Diamond$ is non-empty.  then $\Delta$ is not $[\cdot]$-free. \\
($\axiomt_\square$/$\axiomt_\Diamond$): if $\square A\in \Gamma$ (resp. $\Diamond A\in\Delta$), then $A\in\Gamma$ (resp. $A\in\Delta$).

The saturation condition for (\axiomd) prevents  a useless generation of infinitely nested  empty blocks $[ \Rightarrow [\ldots  \Rightarrow [ \Rightarrow ]\ldots]]$ (we call it an empty structure) by the backward application of the (\axiomd)-rule. 

The procedure $\mathrm{PROCEDURE}_0$ integrates  the rules for (\axiomd) or (\axiomt)'s according to the logic: the rule (\axiomd) is applied immediately after each round of $\textbf{EXP2}(\cdot)$ while the two (\axiomt) rules integrated in  $\textbf{EXP1}(\cdot)$.
We can obtain: 

\remove{
As for the procedure, we treat the rules of (\axiomd) and (\axiomt)'s differently. As extra steps in $\mathrm{PROCEDURE}_0$ (see Algorithm \ref{PROCEDURE0}), the rule (\axiomd) is applied immediately after each round of $\textbf{EXP2}(\cdot)$ while the two (\axiomt) rules are regarded as part of the basic rules and applied within $\textbf{EXP1}(\cdot)$ together with other logical rules. Different levels of saturation in \textbf{C}\cdid~and \textbf{C}\cdit~are then adjusted accordingly. Termination for the procedures of both \textbf{C}\cdid~and \textbf{C}\cdit~are direct. Following the same proof for \textbf{C}\cdikk, we can obtain termination of proof-search in these two calculi as well.
}
\begin{theorem}[Termination for {\rm \textbf{C}\cdid}~and {\rm \textbf{C}\cdit}]
  Let $A$ be a formula. Proof-search for the sequent $\Rightarrow A$ in {\rm \textbf{C}\cdid}~and {\rm \textbf{C}\cdit}~terminates with a finite derivation in which either all the leaves are axiomatic or there is at least one global-saturated leaf.
\end{theorem}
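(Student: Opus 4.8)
The plan is to reduce this to the already-established Termination theorem for $\mathbf{C}\cdikk$ (Theorem~\ref{termination}) by checking that the two modifications needed for the extensions — folding $(\axiomt_\square)$ and $(\axiomt_\Diamond)$ into the macro-step $\textbf{EXP1}$, and applying $(\axiomd)$ immediately after each round of $\textbf{EXP2}$ — preserve the three bounds on which that proof rests: the bound on the $[\cdot]$-nesting depth of a derived sequent given by the modal degree of the root formula $A$; the bound on the formula content of every node given by the fixed finite subformula closure of $A$; and the bound on the width (number of blocks at each level) obtained from the saturation conditions together with structural inclusion $\subseteq^{\-S}$. Since $(\axiomt_\square)$ and $(\axiomt_\Diamond)$ create no block, and $(\axiomd)$ creates blocks only in a strictly modal-degree-decreasing way (see below), all three bounds survive, so the well-founded measure used in the proof of Theorem~\ref{termination} still strictly decreases along every macro-step of the extended $\text{PROCEDURE}$, and the procedure halts.

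First I would deal with $(\axiomt_\square)$ and $(\axiomt_\Diamond)$. Read upwards, $(\axiomt_\square)$ adds to the antecedent the immediate subformula $A$ of a $\square A$ already present, and $(\axiomt_\Diamond)$ adds to the succedent the immediate subformula $A$ of a $\Diamond A$ already present; in both cases $\textit{md}(A)<\textit{md}(\square A)=\textit{md}(\Diamond A)$, so the set of formulas reachable in a node stays inside the finite subformula closure of $A$ (in particular there is no loop, exactly as reflexivity — unlike transitivity, which is not treated here — is harmless). As these rules introduce no block, the modal-degree bound on $[\cdot]$-depth is untouched. The only thing to re-verify is that the modified $\textbf{EXP1}$ still terminates, which follows as in Proposition~\ref{prop:expi-terminates}: the cumulative $(\axiomt)$ rules, like the other left rules, are never applied redundantly once their saturation condition (``$\square A\in\Gamma$ implies $A\in\Gamma$'', resp.\ ``$\Diamond A\in\Delta$ implies $A\in\Delta$'') holds, and each non-redundant application strictly enlarges a component of $T$ within the fixed finite closure.

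The delicate ingredient is $(\axiomd)$. Its saturation condition says that whenever $\Gamma^\square\cup\Delta^\Diamond\neq\emptyset$ the succedent $\Delta$ must already contain a $[\cdot]$-block; hence a non-redundant backward application of $(\axiomd)$ to a node $\Gamma\Rightarrow\Delta$ is possible at most once, producing a single empty block $[\Rightarrow]$, after which the condition holds and $(\axiomd)$ is never reapplied there. So running $(\axiomd)$ right after $\textbf{EXP2}$ spawns at most one fresh child per ``world''; that child is then populated, in later rounds of $\textbf{EXP1}$–$\textbf{EXP3}$, only by formulas propagated through $(\square_L)$ and $(\Diamond_R)$ (and their subformulas), all of modal degree strictly smaller than the triggering $\square A$ or $\Diamond A$, whose own degree is bounded by $\textit{md}$ of the parent sequent. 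Consequently any chain of worlds created by successive $(\axiomd)$-steps is strictly decreasing in modal degree and therefore finite; in particular the infinite empty structure $[\Rightarrow[\ldots\Rightarrow[\Rightarrow]\ldots]]$ cannot be generated, and the $[\cdot]$-depth of the derivation stays bounded by (a constant plus) $\textit{md}(A)$. I expect this strict modal-degree descent for $(\axiomd)$-generated blocks to be the main obstacle; one must additionally check that such a block, once populated and then possibly touched by $\textbf{EXP4}$ (i.e.\ by $(\text{inter}_\downarrow)$ and the realization procedure), does not reintroduce material of non-smaller degree — but this holds because realization (Definition~\ref{def:realization}, Proposition~\ref{prop:realization}) only copies $\#$-reducts of blocks already present, again lowering the modal degree at each $[\cdot]$-nesting step.

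With these points in place, the argument of Theorem~\ref{termination} carries over \emph{mutatis mutandis}: $\text{PROCEDURE}(A)$ for $\mathbf{C}\cdid$ (resp.\ $\mathbf{C}\cdit$) builds a finite derivation, and by its loop-exit condition every leaf is either axiomatic or global-saturated with respect to the full set of saturation conditions, now including the one for $(\axiomd)$ (resp.\ for $(\axiomt_\square)$ and $(\axiomt_\Diamond)$). \qed
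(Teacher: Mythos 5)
Your proof is correct and takes essentially the same approach as the paper, which offers no separate argument for this theorem beyond noting that the ($\axiomd$)-saturation condition blocks the generation of infinitely nested empty structures, that the ($\axiomt$)-rules are folded into $\textbf{EXP1}$, and that the proof of Theorem~\ref{termination} then carries over. Your explicit verification that the ($\axiomt$)-rules add only subformulas and no blocks, and that ($\axiomd$)-created blocks descend strictly in modal degree, supplies exactly the detail the paper leaves implicit.
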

\section{Completeness}\label{sec:completeness}

Using the decision procedure of the previous section, we show how to build a countermodel for an unprovable formula, which entails the completeness of \textbf{C}\cdikk. The construction is then adapted to \textbf{C}\cdid~and \textbf{C}\cdit. 

Given a global-saturated sequent $S$ in \textbf{C}\cdikk, define a model $\+M_S$ for it as:

\begin{definition}\label{def:model-construction}
  The model $\+M_S=(W_S,\leq_S,R_S,V_S)$ is a quadruple where\\
  - $W_S=\{x_{\Phi\Rightarrow\Psi}~|~\Phi\Rightarrow\Psi\in^+S\}$; \\
  - $x_{S_1}\leq_S x_{S_2}$ if $S_1\subseteq^{\-S} S_2$; \qquad $R_Sx_{S_1}x_{S_2}$ if $S_2\in^{[\cdot]}_0S_1$; \\
  %- for each $x_{\Phi\Rightarrow\Psi}\in W_S$, let $V_S(x_{\Phi\Rightarrow\Psi})=\{p~|~p\in \Phi\}$.
  - for each $p\in\mathbf{At}$, let $V_S(p)=\{x_{\Phi\Rightarrow\Psi}~|~p\in \Phi\}$.
\end{definition}

\begin{proposition}\label{prop:hp-fc-property}
  $\+M_S$ satisfies (FC) and (DC).
\end{proposition}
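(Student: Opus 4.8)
The plan is to show that the model $\+M_S$ extracted from a global-saturated sequent $S$ is both forward confluent (FC) and downward confluent (DC), i.e.\ that ${\geq_S}\circ{R_S}\subseteq{R_S}\circ{\geq_S}$ and ${\leq_S}\circ{R_S}\subseteq{R_S}\circ{\leq_S}$. Both inclusions should follow from the saturation conditions for $(\text{inter}_\rightarrow)$ and $(\text{inter}_\downarrow)$ respectively, combined with Proposition~\ref{sequent-inclusion} (which guarantees $\Gamma\Rightarrow\Delta\subseteq^{\-S}\Sigma\Rightarrow\Pi$ whenever $\langle\Sigma\Rightarrow\Pi\rangle\in\Delta$), and with the transitivity of $\subseteq^{\-S}$. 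Since $S$ is global-saturated, every $T\in^+S$ is R4-saturated, so all the relevant saturation conditions are available at every node of the model.

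For forward confluence, I would take $x_{S_1}\geq_S x_{S_0}$ and $R_S x_{S_0} x_{S_2}$; unfolding the definitions, $S_0\subseteq^{\-S} S_1$ and $S_2\in^{[\cdot]}_0 S_0$, i.e.\ $[S_2]$ occurs in $Suc(S_0)$. By clause (ii) of structural inclusion, there is a block $[S_2']$ in $Suc(S_1)$ with $S_2\subseteq^{\-S} S_2'$. Now $R_S x_{S_1} x_{S_2'}$ and $x_{S_2}\leq_S x_{S_2'}$, which witnesses ${\geq_S}\circ{R_S}\subseteq{R_S}\circ{\geq_S}$. Note this direction uses only the definition of $\subseteq^{\-S}$ and not any saturation condition — it is essentially built into the structural-inclusion relation.

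For downward confluence, I would take $x_{S_0}\leq_S x_{S_1}$ and $R_S x_{S_0} x_{S_2}$, so $S_0\subseteq^{\-S} S_1$ and $[S_2]\in Suc(S_0)$; I need a world $R_S$-above $x_{S_1}$ and $\geq_S$-above $x_{S_2}$. The natural candidates for worlds $R_S$-reachable from $x_{S_1}$ are the $[\cdot]$-blocks appearing directly in $Suc(S_1)$. The subtle point is that $S_1$ need not literally $\leq_S$-dominate $x_{S_0}$ by containing a matching $R$-successor; rather, $x_{S_1}$ will typically be a world reached through a chain of $\langle\cdot\rangle$ and $[\cdot]$ blocks, and the decisive case is when $x_{S_1}$ sits inside an implication block $\langle\Sigma\Rightarrow\Pi\rangle$ whose ancestor also carries $[S_2]$ — precisely the configuration $\langle\Sigma\Rightarrow\Pi,[\Lambda\Rightarrow\Theta]\rangle$ with $[\Lambda\Rightarrow\Theta]$ a copy (up to $\subseteq^{\-S}$) of $[S_2]$, obtained by the saturation of $(\text{inter}_\rightarrow)$. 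Then the $(\text{inter}_\downarrow)$ saturation condition, applied to that implication block, yields a block $[\Phi\Rightarrow\Psi]\in Suc(S_1)$ with $\Phi\Rightarrow\Psi\subseteq^{\-S}\Lambda\Rightarrow\Theta$; chaining the inclusions $\Phi\Rightarrow\Psi\subseteq^{\-S}\Lambda\Rightarrow\Theta$ and (from $S_2\subseteq^{\-S}\Lambda\Rightarrow\Theta$, or the appropriate direction) the domination of $S_2$, together with transitivity of $\subseteq^{\-S}$, gives the required $R_S$-successor of $x_{S_1}$ lying $\geq_S$ above $x_{S_2}$.

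The main obstacle I anticipate is the bookkeeping in the downward-confluence case: carefully setting up the induction on how $x_{S_1}$ is reached from $x_{S_0}$ in the nested structure, decomposing the path into $\leq_S$-steps (moving into $\langle\cdot\rangle$-blocks, which by the (trans) saturation condition and Proposition~\ref{sequent-inclusion} are $\supseteq^{\-S}$-larger) and making sure that whenever an $R_S$-edge $[S_2]$ is ``passed over'' on the $\leq$-side, a structurally-included copy survives in the target world — which is exactly what the combined saturation for $(\text{inter}_\rightarrow)$ and $(\text{inter}_\downarrow)$ guarantees, as illustrated by Example~\ref{ex:5}. Verifying that the realization procedure preserves the needed inclusion $f_{S_1}(S_2)\subseteq^{\-S}\Lambda\Rightarrow\Theta$ (Proposition~\ref{prop:realization}(ii)) is what makes the $(\text{inter}_\downarrow)$-generated blocks usable here, so I would invoke that proposition at the key step rather than re-deriving it.
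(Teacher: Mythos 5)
Your forward-confluence half is correct and is exactly the paper's argument: since $\leq_S$ is \emph{defined} as $\subseteq^{\-S}$, clause (ii) of Definition~\ref{def:saturation-inclusion} immediately supplies the required block $[S_2']\in Suc(S_1)$, and, as you rightly note, no saturation condition is involved.

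The downward-confluence half has a genuine problem. The configuration you set up is not downward confluence: you again assume $R_S x_{S_0} x_{S_2}$ (the $R$-edge leaving the \emph{lower} world) and look for an $R_S$-successor of $x_{S_1}$ lying \emph{above} $x_{S_2}$ --- that is forward confluence restated. For (DC), i.e.\ ${\leq}\circ{R}\subseteq{R}\circ{\leq}$ as it is used in the soundness proof of $(\text{inter}_{\downarrow})$, you must assume $x_{S_0}\leq_S x_{S_1}$ and $R_S x_{S_1} x_{S_2}$ (so $[S_2]\in Suc(S_1)$) and produce $x_0$ with $R_S x_{S_0} x_0$ and $x_0\leq_S x_{S_2}$. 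Once stated correctly, the proof is exactly symmetric to your (FC) case and needs none of the machinery you invoke: $x_{S_0}\leq_S x_{S_1}$ means $S_0\subseteq^{\-S} S_1$ by Definition~\ref{def:model-construction}, so clause (iii) of Definition~\ref{def:saturation-inclusion} yields a block $[\Phi\Rightarrow\Psi]\in Suc(S_0)$ with $\Phi\Rightarrow\Psi\subseteq^{\-S} S_2$, and $x_{\Phi\Rightarrow\Psi}$ is the witness. The $(\text{inter}_{\downarrow})$ saturation condition, Proposition~\ref{sequent-inclusion}, and the realization procedure play no role in this proposition: they are what guarantee during proof search that a global-saturated sequent exists whose $\langle\cdot\rangle$-nesting is compatible with $\leq_S$ (which the Truth Lemma needs), but the frame conditions of $\+M_S$ follow directly from the two symmetric clauses built into $\subseteq^{\-S}$. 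In particular, there is no induction on how $x_{S_1}$ is reached through the nested structure, because $\leq_S$ is not defined by that reachability but by $\subseteq^{\-S}$ outright.
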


\begin{lemma}[Truth Lemma for {\rm \textbf{C}}\cdikk]\label{lem:truth-lemma-cdik}
  Let $S$ be a global-saturated sequent in {\rm \textbf{C}}\cdikk~and $\+M_{S}=(W_S,\leq_S,R_S,V_S)$ defined as above. (a). If $A\in\Phi$, then $\+M_{S},x_{\Phi\Rightarrow\Psi}\Vdash A$; (b). If $A\in\Psi$, then $M_{S},x_{\Phi\Rightarrow\Psi}\nVdash A$.
\end{lemma}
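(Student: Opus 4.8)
The plan is to prove the two statements (a) and (b) simultaneously by induction on the structure of the formula $A$, exploiting the fact that $S$ is global-saturated, which means every sequent $T \in^+ S$ satisfies all the saturation conditions (R1--R4). First I would set up the induction: the atomic and constant cases ($p$, $\top$, $\bot$) are handled directly by the definition of $V_S$ and the fact that an axiomatic sequent would contradict global saturation (so if $p \in \Psi$ then $p \notin \Phi$, by (id); if $\bot \in \Phi$ we contradict $(\bot_L)$). The propositional connectives $\wedge, \vee$ are routine using the corresponding saturation conditions together with the induction hypothesis, noting that $\subseteq^{\-S}$-monotonicity of $V_S$ and of forcing of already-treated formulas must be maintained — this heredity at the level of subformulas is what makes part (a) and the $\leq_S$-direction work.

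The interesting cases are $A \supset B$, $\square A$, and $\lozenge A$. For $A \supset B$ in the antecedent, I would use the $(\supset_L)$-saturation condition at the relevant world and all its $\leq_S$-successors (sequents $T \supseteq^{\-S} (\Phi\Rightarrow\Psi)$); for $A\supset B$ in the succedent, the R3-saturation condition gives either $A \in \Gamma, B\in\Delta$ (apply IH directly) or an implication block $\langle \Sigma \Rightarrow \Pi\rangle$ with $A \in \Sigma$, $B \in \Pi$, which by Proposition~\ref{sequent-inclusion} yields a genuine $\leq_S$-successor witnessing $x \Vdash A$, $x \nVdash B$, hence $x \nVdash A\supset B$. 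For $\square A$ in the succedent, the $(\square_R)$-saturation condition produces a block $[\Lambda \Rightarrow \Theta] \in \Delta$ with $A \in \Theta$; since $R_S x_{\Phi\Rightarrow\Psi} x_{\Lambda\Rightarrow\Theta}$ by construction, the IH gives $x_{\Lambda\Rightarrow\Theta}\nVdash A$, so $x\nVdash\square A$. For $\square A$ in the antecedent, I need: for every $R_S$-successor $y$, $y \Vdash A$. Successors are of two flavours — direct blocks $[\Sigma\Rightarrow\Pi]\in\Delta$, where $(\square_L)$-saturation forces $A \in \Sigma$ and the IH applies — but also blocks reachable via $\leq_S$ then $R_S$; here forward confluence (Proposition~\ref{prop:hp-fc-property}) together with heredity reduces to the direct case. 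Dually, $\lozenge A$ in the antecedent uses $(\lozenge_L)$-saturation (a block with $A$ on the left) to furnish the required $R_S$-successor forcing $A$; $\lozenge A$ in the succedent uses $(\lozenge_R)$-saturation, which says every direct block has $A$ on its right — but again one must check \emph{all} $R_S$-successors, and this is exactly where downward confluence and the $(\text{inter}_\downarrow)$-saturation condition, realized via the tracking/realization machinery, are needed, so that every world $R_S$-accessible from $x$ is $\leq_S$-below some directly-nested block of a world $\leq_S$-below... — more precisely, one shows each $R_S$-successor is $\subseteq^{\-S}$-included in a directly nested block so that $A$ appearing in the right of the "master" block propagates.

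The main obstacle I expect is precisely the $\square_L$ (universal, in the antecedent) and $\lozenge_R$ (universal over successors, in the succedent) cases: unlike in the classical nested setting, an $R_S$-successor of $x_{\Phi\Rightarrow\Psi}$ need not be a block \emph{directly} nested in $\Psi$ — it can be a block nested in some $\leq_S$-larger sequent, and in the presence of $(\text{inter}_\downarrow)$-generated worlds the relation $R_S$ has more edges than a naive reading suggests. Handling this cleanly requires a preliminary structural lemma: if $R_S x_{S_1} x_{S_2}$, i.e. $S_2 \in^{[\cdot]}_0 S_1$, and $x_{S_1} \leq_S x_{S_1'}$ (so $S_1 \subseteq^{\-S} S_1'$), then by clause (ii) of structural inclusion there is a directly nested block in $S_1'$ containing $S_2$ up to $\subseteq^{\-S}$; combined with Proposition~\ref{prop:hp-fc-property} this lets me confine attention to directly nested modal blocks. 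Once that lemma is in place, each modal case uses the matching saturation condition plus the IH. I would also isolate, as a sublemma proved alongside, the monotonicity statement ``$x_{S_1} \leq_S x_{S_2}$ and $x_{S_1}\Vdash A$ imply $x_{S_2}\Vdash A$'' — this is essentially part (a) plus $\subseteq^{\-S}$ transitivity and is what the $\supset$ and $\square$ antecedent cases lean on — to keep the induction self-contained.
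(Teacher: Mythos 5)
Your overall strategy---induction on $A$, discharging each case with the matching saturation condition and the induction hypothesis---is exactly the paper's, and your propositional, $\supset$, $\square_R$ and $\Diamond_L$ cases line up with the paper's proof. However, the ``main obstacle'' you identify does not exist: you have misread the definition of $R_S$. In Definition~\ref{def:model-construction}, $R_S x_{S_1} x_{S_2}$ holds \emph{if and only if} $S_2 \in^{[\cdot]}_0 S_1$, so every $R_S$-successor of $x_{\Phi\Rightarrow\Psi}$ \emph{is} a block directly nested in $\Psi$; there are no further edges reachable ``via $\leq_S$ then $R_S$.'' Consequently the $(\square_L)$ and $(\Diamond_R)$ cases reduce immediately to the corresponding saturation conditions plus the IH, with no need for your preliminary structural lemma, for confluence, or for the tracking/realization machinery inside the Truth Lemma. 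Those ingredients (Proposition~\ref{prop:hp-fc-property}, the $(\text{inter}_{\downarrow})$ saturation, realization) serve only to certify that $\+M_S$ is forward and downward confluent, i.e.\ a legitimate countermodel in ${\mathcal C}_{\fdcfra}$---they play no role in the Truth Lemma itself. Likewise the semantic monotonicity sublemma is unnecessary: in the $\supset$- and $\square$-antecedent cases one only needs formula-level persistence $\Phi\subseteq\Sigma$ (clause (i) of $\subseteq^{\-S}$) to re-apply the saturation condition at the $\leq_S$-successor. None of this invalidates your argument---the extra cases you guard against are vacuous---but the proof is considerably shorter than you anticipate, and presenting the spurious obstacle as essential would mislead a reader about where downward confluence is actually used.
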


\noindent By truth lemma we obtain as usual the completeness of {\rm \textbf{C}}\cdikk.

\begin{theorem}[Completeness of \textbf{C}\cdikk] If $A$ is valid in \dik, then $A$ is provable in 
\cdikk.
\end{theorem}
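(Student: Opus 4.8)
The final statement to prove is the Completeness of \textbf{C}\cdikk: if $A$ is valid in \dik, then $A$ is provable in \cdikk. The plan is to prove the contrapositive: assume $A$ is not provable in \cdikk, and construct a countermodel witnessing that $A$ is not valid in \dik.

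First I would run $\text{PROCEDURE}(\Rightarrow A)$, which by Theorem \ref{termination} terminates with a finite derivation $\+D$ in which either all leaves are axiomatic or there is at least one global-saturated leaf. Since $A$ is not provable in \cdikk, and (by the Propositions establishing equivalence) not provable in \local\ or in {\rm \textbf{C}}\cdikk\ either, the derivation $\+D$ cannot have all leaves axiomatic; hence it contains a global-saturated leaf $S$. The root $\Rightarrow A$ is of course $\in^+ S$ via the empty context, or more precisely $S$ is an expansion whose ``root world'' corresponds to the sequent from which proof search started, so $A$ appears in the succedent of the relevant top-level sequent. Next I would form the model $\+M_S = (W_S, \leq_S, R_S, V_S)$ as in Definition \ref{def:model-construction}. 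By Proposition \ref{prop:hp-fc-property}, $\+M_S$ satisfies forward and downward confluence, so it is a genuine \dik-model (one still has to check $V_S(p)$ is $\leq_S$-closed, which is immediate from clause (i) of structural inclusion, and that $\leq_S$ is a preorder, which follows from reflexivity and transitivity of $\subseteq^\-S$).

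The core step is to invoke the Truth Lemma (Lemma \ref{lem:truth-lemma-cdik}): for the world $x_{\Phi\Rightarrow\Psi}$ corresponding to the top-level sequent of $S$, since $A \in \Psi = Suc$ of that sequent, we get $\+M_S, x_{\Phi\Rightarrow\Psi} \nVdash A$. Therefore $A$ is falsified in a model based on a forward and downward confluent frame, i.e. in ${\mathcal C}_{\fdcfra}$, so $A$ is not valid in \dik. This contradicts the hypothesis, so $A$ must be provable in \cdikk, and since provability in \cdikk\ coincides with provability in \local\ and in {\rm \textbf{C}}\cdikk\ we are done. One final bookkeeping point: the derivation $\+D$ may be obtained in the cumulative annotated calculus {\rm \textbf{C}}\cdikk, so the argument strictly shows non-provability in {\rm \textbf{C}}\cdikk; by Proposition \ref{cumulcalc}-equivalence this transfers back to \cdikk.

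The main obstacle is entirely hidden inside the Truth Lemma, whose proof must be done by induction on the structure of $A$ and crucially relies on the sequent $S$ being \emph{global-saturated} — every saturation condition is used for the corresponding connective or modality, and the $(\text{inter}_\downarrow)$ saturation condition together with the realization machinery is what guarantees the $\Box$ case works under the local forcing condition and that downward confluence holds so the induction is coherent (this is precisely the point illustrated by Example \ref{ex:5}, where \local\ alone fails). The wrapper argument I have sketched here is routine; the delicate content is the verification that $\+M_S$ really is well-defined and that each inductive case of the Truth Lemma goes through, in particular the $\Box_R$/$\Diamond_L$ cases that create new $R_S$-successors and the $\supset_R$ case that creates new $\leq_S$-successors, checking monotonicity is respected throughout.
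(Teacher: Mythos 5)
Your proposal is correct and follows essentially the same route as the paper: extract a global-saturated leaf from a failed (terminating) proof search, build $\+M_S$, use Proposition~\ref{prop:hp-fc-property} for forward/downward confluence, and apply the Truth Lemma (Lemma~\ref{lem:truth-lemma-cdik}) to falsify $A$ at the root world, transferring back through the equivalences between \cdikk, \local\ and {\rm \textbf{C}}\cdikk. The paper treats this wrapper as routine (``by the truth lemma we obtain as usual''), and your additional checks (that $V_S$ is $\leq_S$-closed and $\leq_S$ is a preorder) are correct and harmless extras.
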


\begin{example}
	We show how to build a countermodel for the formula $(\Diamond p \supset \Box q) \supset \Box (p \supset q)$ which is not provable in \textbf{C}\cdikk. Ignoring the first step, we initialize the derivation with $\Diamond p \supset \Box q \Rightarrow \Box (p \supset q)$. By backward application of rules, one branch of the derivation ends up with the following saturated sequent $S_0=\Diamond p\supset\square q\Rightarrow\square(p\supset q),\Diamond p,[\Rightarrow p\supset q,p,\langle p\Rightarrow q\rangle]$, 
  and we further let 
  $S_1= \ \Rightarrow p\supset q,p,\langle p\Rightarrow q\rangle$ while 
  $S_2=p\Rightarrow q$. 
	We then get the model $M_{S_0} = (W, \leq, R, V)$ where  $W=\{x_{S_0}, x_{S_1}, x_{S_2} \}$, $x_{S_1} \leq x_{S_2}$, $Rx_{S_0}x_{S_1}$, 
	%$V(x_{S_0}) = V(x_{S_1}) =\varnothing$ and $V(x_{S_2}) = \{p\}$. 
	$V(p)=\{x_{S_2}\}$ and $V(q)=\varnothing$.
  It is easy to see that $x_{S_0} \not\Vdash (\Diamond p \supset \Box q) \supset \Box (p \supset q)$.	
\end{example}

Next, we consider the completeness of \textbf{C}\cdid~and \textbf{C}\cdit. We consider the model $\+M_S=(W_S,\leq_S,R_S,V_S)$ for a global-saturated sequent $S$ in either calculi, where $W_S,\leq_S$ and $V_S$ as in Definition \ref{def:model-construction}, $R_S$ modified as follows: \\
- For \textbf{C}\cdid: $R_Sx_{S_1}x_{S_2}$ if $S_2\in^{[\cdot]}_0S_1$ or $Suc(S_1)$ is $[\cdot]$-free and $x_{S_1} = x_{S_2}$;\\ 
- For \textbf{C}\cdit: $R_Sx_{S_1}x_{S_2}$ if $S_2\in^{[\cdot]}_0S_1$ or $x_{S_1} = x_{S_2}$.

\noindent Trivially the relation $R_S$ is serial or reflexive according to  \cdid \ or \cdit,   moreover models for \textbf{C}\cdid~and \textbf{C}\cdit~still satisfy (FC) and (DC). Finally,

\remove{
\begin{proposition}\label{prop:r-dit-dit}
  If the root $S$ is a global-saturated sequent in {\rm \textbf{C}}\cdid (resp. {\rm \textbf{C}}\cdit), then $R_S$ is serial (resp. reflexive).
\end{proposition}
We can extend the truth lemma to the models built for \textbf{C}\cdid~and \textbf{C}\cdit. So that we finally obtain:
}

\begin{theorem}[Completeness of \textbf{C}\cdid~and \textbf{C}\cdit] 
  If $A$ is valid in \did~ (resp. \dit), then $A$ is provable in  {\rm \textbf{C}}\cdid \ (resp. {\rm \textbf{C}}\cdit).
\end{theorem}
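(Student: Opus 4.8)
The plan is to mirror the completeness proof for \textbf{C}\cdikk, adapting the model construction to the enlarged accessibility relation and the new saturation conditions. I argue by contraposition. Suppose $A$ is not provable in \textbf{C}\cdid~(resp. \textbf{C}\cdit). By the termination theorem for \textbf{C}\cdid~and \textbf{C}\cdit, proof-search for $\Rightarrow A$ yields a finite derivation $\+D$ in which, since not all leaves are axiomatic, there is a non-axiomatic global-saturated leaf $S$. From $S$ I build the model $\+M_S=(W_S,\leq_S,R_S,V_S)$ exactly as in Definition~\ref{def:model-construction}, except that $R_S$ is enlarged by loop edges as indicated in the text: for \textbf{C}\cdid, $R_S x_{S_1}x_{S_1}$ whenever $Suc(S_1)$ is $[\cdot]$-free; for \textbf{C}\cdit, $R_S x_{S_1}x_{S_1}$ for every $S_1\in^+ S$.

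First I would check that $R_S$ is of the required shape: reflexivity for \textbf{C}\cdit is immediate, and seriality for \textbf{C}\cdid holds because each $x_{S_1}$ either already has an $\in^{[\cdot]}_0$-successor or has $Suc(S_1)$ $[\cdot]$-free, in which case the loop $R_S x_{S_1}x_{S_1}$ applies. Next I would re-examine Proposition~\ref{prop:hp-fc-property}: the forward- and downward-confluence arguments for the old edges are unchanged, so only the interaction with the new loop edges has to be verified. The key observation is that $[\cdot]$-freeness of the succedent is invariant along $\subseteq^{\-S}$ in both directions (by clauses (ii) and (iii) of structural inclusion), hence whenever $x_{S_1}\leq_S x_{S_2}$, a loop at $x_{S_1}$ forces a loop at $x_{S_2}$ and conversely; checking the two confluence diagrams for the cases where one of the $R$-steps is a loop then reduces to taking the matching loop, so $\+M_S$ lies in ${\mathcal C}_{\fdcfra}^{\{\AxiomSer\}}$ (resp. ${\mathcal C}_{\fdcfra}^{\{\AxiomRef\}}$).

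The main work is the Truth Lemma: for all formulas $B$ and all $x_{\Phi\Rightarrow\Psi}\in W_S$, if $B\in\Phi$ then $\+M_S,x_{\Phi\Rightarrow\Psi}\Vdash B$, and if $B\in\Psi$ then $\+M_S,x_{\Phi\Rightarrow\Psi}\nVdash B$. I would prove this by induction on $B$, reusing the cases of Lemma~\ref{lem:truth-lemma-cdik} verbatim for the propositional connectives and for the genuine $R_S$-edges; what must be added is the effect of loop edges in the $\square$- and $\Diamond$-cases. For \textbf{C}\cdit this uses the saturation conditions for $(\axiomt_\square)$ and $(\axiomt_\Diamond)$: if $\square B\in\Phi$ then $B\in\Phi$, so the loop successor forces $B$ by the induction hypothesis; if $\Diamond B\in\Psi$ then $B\in\Psi$, so the loop successor does not force $B$. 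For \textbf{C}\cdid one observes that the $(\axiomd)$-saturation condition forbids a loop at $x_{\Phi\Rightarrow\Psi}$ whenever $\Phi^\square\cup\Psi^\Diamond$ is non-empty, so in the $\square$-case with $\square B\in\Phi$ and in the $\Diamond$-case with $\Diamond B\in\Psi$ no loop is present and only genuine $R_S$-successors matter, exactly as in Lemma~\ref{lem:truth-lemma-cdik}; the remaining cases $\Diamond B\in\Phi$ and $\square B\in\Psi$ are handled by the saturation conditions of $(\Diamond_L)$ and $(\square_R)$, which supply genuine $R_S$-successors.

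Finally, applying the Truth Lemma to the root $S=\Gamma\Rightarrow\Delta$ gives $\+M_S,x_S\Vdash C$ for every $C\in\Gamma$ and $\+M_S,x_S\nVdash D$ for every $D\in\Delta$, so $x_S\nVdash S$; specialising to $S$ arising from the root $\Rightarrow A$, this exhibits $A$ as failing in a model of the appropriate class, i.e.\ $A$ is not valid in \did~(resp. \dit), which is the contrapositive of the claim. I expect the only delicate point to be the bookkeeping in the Truth Lemma for \textbf{C}\cdid, namely checking that the conditional presence of loop edges is exactly aligned with the $(\axiomd)$-saturation condition so that the modal cases close; the seriality/reflexivity of $R_S$ and the confluence checks for the new edges are routine given the invariance of $[\cdot]$-freeness under $\subseteq^{\-S}$.
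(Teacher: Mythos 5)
Your proposal is correct and follows essentially the same route as the paper: enlarge $R_S$ with the indicated loop edges, verify seriality/reflexivity and the preservation of (FC) and (DC), and extend the Truth Lemma by handling the loop successors via the $(\axiomt_\square)$/$(\axiomt_\Diamond)$ saturation conditions for \cdit\ and via the observation that the $(\axiomd)$ saturation condition excludes a loop exactly when $\square B\in\Phi$ or $\Diamond B\in\Psi$ for \cdid. Your explicit check that $[\cdot]$-freeness of succedents is invariant under $\subseteq^{\-S}$ (making the confluence diagrams close for loop edges) is a detail the paper only asserts, and it is argued correctly.
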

\section{Conclusion}\label{sec:conclusion}
We studied \dik, the basic intuitionistic modal logic with locally defined modalities and some of its extensions.  
In further research, we intend to investigate the extension of both axiomatization and calculi to the whole modal cube.
For instance, we would like to provide a (terminating) calculus for the \textbf{S4} extension of \dik~(studied in \cite{Balbiani:et:al:2021}).
Since \dik~is incomparable with \ik, we may wonder what is the ``super'' intuitionistic modal logic obtained by combining both is.
Our broader goal is to build a framework of axiomatization and uniform calculi for a wide range of \textbf{IML}s, including other natural variants that have been little studied or remain entirely unexplored so far.

%
% ---- Bibliography ----
%
% BibTeX users should specify bibliography style 'splncs04'.
% References will then be sorted and formatted in the correct style.
%
\bibliographystyle{splncs04}
\bibliography{refs}

\begin{thebibliography}{10}
\providecommand{\url}[1]{\texttt{#1}}
\providecommand{\urlprefix}{URL }
\providecommand{\doi}[1]{https://doi.org/#1}

\bibitem{Balbiani:et:al:2021}
Balbiani, P., Di{\'{e}}guez, M., Fern{\'{a}}ndez{-}Duque, D.: Some constructive
  variants of {S4} with the finite model property. In: 36th Annual {ACM/IEEE}
  Symposium on Logic in Computer Science, {LICS}. pp. 1--13. {IEEE} (2021).
  \doi{10.1109/LICS52264.2021.9470643}

\bibitem{FIK-csl2024}
Balbiani, P., Gao, H., Gencer, C., Olivetti, N.: A natural intuitionistic modal
  logic: Axiomatization and bi-nested calculus. In: 32nd EACSL Annual
  Conference on Computer Science Logic (CSL 2024) (to appear)

\bibitem{Bellin:et:al:2001}
Bellin, G., De~Paiva, V., Ritter, E.: Extended curry-howard correspondence for
  a basic constructive modal logic. In: Proceedings of methods for modalities.
  vol.~2 (2001)

\bibitem{BozicDozen}
Bo{\v{z}}i{\'c}, M., Do{\v{s}}en, K.: Models for normal intuitionistic modal
  logics. Studia Logica  \textbf{43},  217--245 (1984)

\bibitem{d1997grafting}
D'agostino, M., Gabbay, D.M., Russo, A.: Grafting modalities onto substructural
  implication systems. Studia Logica  \textbf{59},  65--102 (1997)

\bibitem{anupam:2023}
Das, A., Marin, S.: On intuitionistic diamonds (and lack thereof). In:
  Ramanayake, R., Urban, J. (eds.) Automated Reasoning with Analytic Tableaux
  and Related Methods - 32nd International Conference, {TABLEAUX} 2023. Lecture
  Notes in Computer Science, vol. 14278, pp. 283--301. Springer (2023).
  \doi{10.1007/978-3-031-43513-3\_16}

\bibitem{Fischer:Servi:1977}
Fischer-Servi, G.: On modal logic with an intuitionistic base. Studia Logica
  \textbf{36}(3),  141--149 (1977). \doi{10.1007/BF02121259}

\bibitem{Fischer:Servi:1978}
Fischer-Servi, G.: Semantics for a Class of Intuitionistic Modal Calculi, pp.
  59--72. Springer Netherlands, Dordrecht (1981).
  \doi{10.1007/978-94-009-8937-5_5}

\bibitem{Fischer:Servi:1984}
Fischer-Servi, G.: Axiomatizations for some intuitionistic modal logics.
  Rendiconti del Seminario Matematico Universit{\`a} e Politecnico di Torino
  \textbf{42} (1984), \url{https://cir.nii.ac.jp/crid/1371132818982119684}

\bibitem{Fitting:2014}
Fitting, M.: Nested sequents for intuitionistic logics. Notre Dame J. Formal
  Log.  \textbf{55}(1),  41--61 (2014). \doi{10.1215/00294527-2377869}

\bibitem{Galmiche:Salhi:2010}
Galmiche, D., Salhi, Y.: Label-free natural deduction systems for
  intuitionistic and classical modal logics. Journal of Applied Non-Classical
  Logics  \textbf{20}(4),  373--421 (2010). \doi{10.3166/jancl.20.373-421}

\bibitem{Marin:Maroles:2020}
Marin, S., Morales, M.: Fully structured proof theory for intuitionistic modal
  logics. In: AiML 2020 - Advances in Modal Logic (2020),
  \url{https://hal.science/hal-03048959}

\bibitem{marin2014label}
Marin, S., Stra{\ss}burger, L.: Label-free modular systems for classical and
  intuitionistic modal logics. In: Advances in Modal Logic 10 (2014)

\bibitem{Plotkin:Stirling:1986}
Plotkin, G., Stirling, C.: A framework for intuitionistic modal logics. In:
  Proceedings of the 1st Conference on Theoretical Aspects of Reasoning about
  Knowledge (TARK). pp. 399--406 (1986). \doi{10.5555/1029786.1029823}

\bibitem{Simpson:1994}
Simpson, A.K.: The proof theory and semantics of intuitionistic modal logic.
  Ph.D. Thesis, University of Edinburgh (1994),
  \url{https://api.semanticscholar.org/CorpusID:2309858}

\bibitem{2013cut}
Stra{\ss}burger, L.: Cut elimination in nested sequents for intuitionistic
  modal logics. In: Foundations of Software Science and Computation Structures:
  16th International Conference, FOSSACS 2013, Held as Part of the European
  Joint Conferences on Theory and Practice of Software, ETAPS 2013, Rome,
  Italy, March 16-24, 2013. Proceedings 16. pp. 209--224. Springer (2013)

\bibitem{Wijesekera:1990}
Wijesekera, D.: Constructive modal logics {I}. Ann. Pure Appl. Log.
  \textbf{50}(3),  271--301 (1990). \doi{10.1016/0168-0072(90)90059-B}

\end{thebibliography}

\appendix
\section{Proofs in Section \ref{sec:logic}}
%
%
%{\em Proof of Lemma~\ref{lemma:HB:monotonicity}.}
{\noindent \textbf{Lemma~\ref{lemma:HB:monotonicity}.}}
{\it Let $(W,{\leq},{R},V)$ be a forward and downward confluent model.
For all $A{\in}\Fo$ and for all $s,t{\in}W$, if $s{\Vdash}A$ and $s{\leq}t$ then $t{\Vdash}A$.}
\begin{proof}
%
%
%By induction on ${\parallel}A{\parallel}$.\qed
By induction on $A$.\qed
\end{proof}
%
%
%{\em Proof of Proposition~\ref{proposition:comparing:FS:and:W}.}
%Let $A{\in}{\mathcal L}$.
%By induction on ${\parallel}A{\parallel}$, the reader may easily prove that for all models $(W,\leq,R,V)$ and for all $s{\in}W$, $s{\Vdash}A$ if and only if $s{\Vdash_{\mathtt{FS}}}A$ if and only if $s{\Vdash_{\mathtt{W}}}A$.
%
%
%{\em Proof of Lemma~\ref{lemma:about:D:and:axioms}.}
{\noindent \textbf{Lemma~\ref{lemma:about:D:and:axioms}.}}
{\it If $\AxiomSer{\in}X$ or $\AxiomRef{\in}X$ then $\square p{\supset}\lozenge p$ and $\neg\square\bot$ are in \dik$X$.}
\begin{proof}
By using $\K_{\square}$, \textbf{NEC}, $\K_{\Diamond}$ and $\NotPosBot$.\qed
\end{proof}
%
%
%{\em Proof of Theorem~\ref{theorem:soundness:dfik}.}
{\noindent \textbf{Theorem~\ref{theorem:soundness:dfik}.}}
{\it \dik$X$-derivable formulas are ${\mathcal C}_{\fdcfra}^{X}$-validities.}
\begin{proof}
By induction on the length of a derivation of $A$.\qed
\end{proof}
%
%
%{\em Proof of Lemma~\ref{lemma:another:property:of:LIKD}.}
{\noindent \textbf{Lemma~\ref{lemma:another:property:of:LIKD}.}}
{\it If $\AxiomSer{\in}X$ or $\AxiomRef{\in}X$ then for all theories $\Gamma$, $\lozenge\square\Gamma{\subseteq}\Gamma$.}
\begin{proof}
Suppose $\AxiomSer{\in}X$.
Let$\Gamma$ be a theory.
If $\lozenge\square\Gamma{\not\subseteq}\Gamma$ then there exists a formula $A$ such that $A{\in}\square\Gamma$ and $\lozenge A{\not\in}\Gamma$.
Hence, $\square A{\in}\Gamma$.
Since $\AxiomSer{\in}X$, therefore by Lemma~\ref{lemma:about:D:and:axioms}, $\square A\supset\lozenge A$ is in \dik{$X$}.
Since $\square A{\in}\Gamma$, therefore $\lozenge A{\in}\Gamma$: a contradiction.\qed
\end{proof}
%
%
%{\em Proof of Lemma~\ref{lemma:theory:gamma:plus:A}.}
%
%
%For (i)~and~(ii), use axioms and inference rules of Intuitionistic Propositional Logic.
%For (iii), use, in addition, axiom $\K_{\square}$ and inference rule \textbf{NEC}.
%
%
%\\
%\\
%
%
%{\em Proof of Lemma~\ref{lemma:almost:completeness}.}
%{\noindent \textbf{Lemma~\ref{lemma:almost:completeness}.}}
%
%
%{\it Let $A$ be a formula and $\Gamma$ be a theory.
%If $A\not\in\Gamma$ then there exists a prime theory $\Delta$ such that $A\not\in\Delta$ and $\Gamma\subseteq\Delta$.}
%
%
%\begin{proof}
%
%
%See standard textbooks about Intuitionistic Propositional Logic.\qed
%
%
%\end{proof}
%
%
%{\em Proof of Lemma~\ref{fc:canonical:frame:is:forward:confluent}.}
{\noindent \textbf{Lemma~\ref{fc:canonical:frame:is:forward:confluent}.}}
{\it
\begin{enumerate}
\item $(W_{\mathbf{L}},\leq_{\mathbf{L}},R_{\mathbf{L}})$ is forward confluent,
\item $(W_{\mathbf{L}},\leq_{\mathbf{L}},R_{\mathbf{L}})$ is downward confluent,
\item if $\AxiomSer{\in}X$ (resp. $\AxiomRef{\in}X$, $\AxiomTra{\in}X$) then $(W_{\mathbf{L}},\leq_{\mathbf{L}},R_{\mathbf{L}})$ is serial (resp. reflexive, transitive).
\end{enumerate}
}
\begin{proof}
$\mathbf{(1)}$~See~\cite{FIK-csl2024}.
\\
\\
$\mathbf{(2)}$~Let $\Gamma,\Delta,\Lambda{\in}W_{\mathbf{L}}$ be such that $\Gamma\leq_{\mathbf{L}}\Delta$ and ${R_{\mathbf{L}}}\Delta\Lambda$.
Hence, $\Gamma{\subseteq}\Delta$ and ${\square}\Delta{\subseteq}\Lambda$ and ${\lozenge}\Lambda{\subseteq}\Delta$.
\\
\\
We claim that ${\square}\Gamma{\subseteq}\Lambda$.
If not, there exists a formula $A$ such that $A{\in}{\square}\Gamma$ and $A{\not\in}\Lambda$.
Thus, ${\square}A{\in}\Gamma$.
Since $\Gamma{\subseteq}\Delta$, then ${\square}A{\in}\Delta$.
Since ${R_{\mathbf{L}}}\Delta\Lambda$, then $A{\in}\Lambda$: a contradiction.
Consequently, ${\square}\Gamma{\subseteq}\Lambda$.
\\
\\
We claim that for all formulas $A,B$, if ${\lozenge}A{\not\in}\Gamma$ and $A{\vee}B{\in}{\square}\Gamma$ then $B{\in}\Lambda$.
If not, there exists formulas $A,B$ such that ${\lozenge}A{\not\in}\Gamma$, $A{\vee}B{\in}{\square}\Gamma$ and $B{\not\in}\Lambda$.
Hence, ${\square}(A{\vee}B){\in}\Gamma$.
Thus, using the fact that $(\CD){\in}\mathbf{L}$, ${\lozenge}A{\vee}{\square}B{\in}\Gamma$.\footnote{This is our only use of axiom $(\CD)$ in the completeness proof.} 
Consequently, either ${\lozenge}A{\in}\Gamma$, or ${\square}B{\in}\Gamma$.
Since ${\lozenge}A{\not\in}\Gamma$, then ${\square}B{\in}\Gamma$.
Since $\Gamma{\subseteq}\Delta$, then ${\square}B{\in}\Delta$.
Since ${R_{\mathbf{L}}}\Delta\Lambda$, then $B{\in}\Lambda$: a contradiction.
Hence, for all formulas $A,B$, if ${\lozenge}A{\not\in}\Gamma$ and $A{\vee}B{\in}{\square}\Gamma$ then $B{\in}\Lambda$.
\\
\\
Let ${\mathcal S}{=}\{\Theta : \Theta$ is a theory such that {\bf (1)}~${\square}\Gamma{\subseteq}\Theta$, {\bf (2)}~$\Theta{\subseteq}\Lambda$ and {\bf (3)}~for all formulas $A,B$, if ${\lozenge}A{\not\in}\Gamma$ and $A{\vee}B{\in}\Theta$ then $B{\in}\Lambda\}$.
\\
\\
Obviously, ${\square}\Gamma\in{\mathcal S}$.
Thus, ${\mathcal S}$ is nonempty.
Moreover, for all nonempty chains $(\Theta_{i})_{i{\in}I}$ of elements of ${\mathcal S}$, $\bigcup\{\Theta_{i} : i{\in}I\}$ is an element of ${\mathcal S}$.
Consequently, by Zorn's Lemma, ${\mathcal S}$ possesses a maximal element $\Theta$.
Hence, $\Theta$ is a theory such that ${\square}\Gamma{\subseteq}\Theta$, $\Theta{\subseteq}\Lambda$ and for all formulas $A,B$, if ${\lozenge}A{\not\in}\Gamma$ and $A{\vee}B{\in}\Theta$ then $B{\in}\Lambda$.
\\
\\
Thus, it only remains to be proved that $\Theta$ is proper and prime and ${R_{\mathbf{L}}}\Gamma\Theta$.
\\
\\
We claim that $\Theta$ is proper.
If not, ${\bot}{\in}\Theta$.
Since $\Theta{\subseteq}\Lambda$, then ${\bot}{\in}\Lambda$: a contradiction.
Consequently, $\Theta$ is proper.
\\
\\
We claim that $\Theta$ is prime.
If not, there exists formulas $A,B$ such that $A{\vee}B{\in}\Theta$, $A{\not\in}\Theta$ and $B{\not\in}\Theta$.
Hence, by the maximality of $\Theta$ in ${\mathcal S}$, $\Theta{+}A{\not\in}{\mathcal S}$ and $\Theta{+}B{\not\in}{\mathcal S}$.
Thus, either there exists a formula $C$ such that $C{\in}\Theta{+}A$ and $C{\not\in}\Lambda$, or there exists formulas $C,D$ such that ${\lozenge}C{\not\in}\Gamma$, $C{\vee}D{\in}\Theta{+}A$ and $D{\not\in}\Lambda$ and either there exists a formula $E$ such that $E{\in}\Theta{+}B$ and $E{\not\in}\Lambda$, or there exists formulas $E,F$ such that ${\lozenge}E{\not\in}\Gamma$, $E{\vee}F{\in}\Theta{+}B$ and $F{\not\in}\Lambda$.
Consequently, we have to consider the following four cases.
\\
$\mathbf{(1)}$ Case ``there exists a formula $C$ such that $C{\in}\Theta{+}A$ and $C{\not\in}\Lambda$ and there exists a formula $E$ such that $E{\in}\Theta{+}B$ and $E{\not\in}\Lambda$'':
Hence, $A{\supset}C{\in}\Theta$ and $B{\supset}E{\in}\Theta$.
Thus, using axioms and inference rules of Intuitionistic Propositional Logic, $A{\vee}B{\supset}C{\vee}E{\in}\Theta$.
Since $A{\vee}B{\in}\Theta$, then $C{\vee}E{\in}\Theta$.
Since $\Theta{\subseteq}\Lambda$, then $C{\vee}E{\in}\Lambda$.
Since $C{\not\in}\Lambda$ and $E{\not\in}\Lambda$, then $C{\vee}E{\not\in}\Lambda$: a contradiction.
\\
$\mathbf{(2)}$ Case ``there exists a formula $C$ such that $C{\in}\Theta{+}A$ and $C{\not\in}\Lambda$ and there exists formulas $E,F$ such that ${\lozenge}E{\not\in}\Gamma$, $E{\vee}F{\in}\Theta{+}B$ and $F{\not\in}\Lambda$'':
Consequently, $A{\supset}C{\in}\Theta$ and $B{\supset}E{\vee}F{\in}\Theta$.
Hence, using axioms and inference rules of Intuitionistic Propositional Logic, $A{\vee}B{\supset}E{\vee}C{\vee}F{\in}\Theta$.
Since $A{\vee}B{\in}\Theta$, then $E{\vee}C{\vee}F{\in}\Theta$.
Since ${\lozenge}E{\not\in}\Gamma$, then $C{\vee}F{\in}\Lambda$.
Since $C{\not\in}\Lambda$ and $F{\not\in}\Lambda$, then $C{\vee}F{\not\in}\Lambda$: a contradiction.
\\
$\mathbf{(3)}$ Case ``there exists formulas $C,D$ such that ${\lozenge}C{\not\in}\Gamma$, $C{\vee}D{\in}\Theta{+}A$ and $D{\not\in}\Lambda$ and there exists a formula $E$ such that $E{\in}\Theta{+}B$ and $E{\not\in}\Lambda$'':
Thus, $A{\supset}C{\vee}D{\in}\Theta$ and $B{\supset}E{\in}\Theta$.
Consequently, using axioms and inference rules of Intuitionistic Propositional Logic, $A{\vee}B{\supset}C{\vee}D{\vee}E{\in}\Theta$.
Since $A{\vee}B{\in}\Theta$, then $C{\vee}D{\vee}E{\in}\Theta$.
Since ${\lozenge}C{\not\in}\Gamma$, then $D{\vee}E{\in}\Lambda$.
Since $D{\not\in}\Lambda$ and $E{\not\in}\Lambda$, then $D{\vee}E{\not\in}\Lambda$: a contradiction.
\\
$\mathbf{(4)}$ Case ``there exists formulas $C,D$ such that ${\lozenge}C{\not\in}\Gamma$, $C{\vee}D{\in}\Theta{+}A$ and $D{\not\in}\Lambda$ and there exists formulas $E,F$ such that ${\lozenge}E{\not\in}\Gamma$, $E{\vee}F{\in}\Theta{+}B$ and $F{\not\in}\Lambda$'':
Hence, $A{\supset}C{\vee}D{\in}\Theta$ and $B{\supset}E{\vee}F{\in}\Theta$.
Thus, using axioms and inference rules of Intuitionistic Propositional Logic, $A{\vee}B{\supset}C{\vee}E{\vee}D{\vee}F{\in}\Theta$.
Since $A{\vee}B{\in}\Theta$, then $C{\vee}E
$\linebreak$
{\vee}D{\vee}F{\in}\Theta$.
Since ${\lozenge}C{\not\in}\Gamma$ and ${\lozenge}E{\not\in}\Gamma$, then using axiom $(\DP)$, ${\lozenge}(C{\vee}E){\not\in}\Gamma$.
Since $C{\vee}E{\vee}D{\vee}F{\in}\Theta$, then $D{\vee}F{\in}\Lambda$.
Since $D{\not\in}\Lambda$ and $F{\not\in}\Lambda$, then $D{\vee}F{\not\in}\Lambda$: a contradiction.
\\
Consequently, $\Theta$ is prime.
\\
\\
We claim that ${R_{\mathbf{L}}}\Gamma\Theta$.
If not, there exists a formula $A$ such that $A{\in}\Theta$ and ${\lozenge}A{\not\in}\Gamma$.
Hence, using axioms and inference rules of Intuitionistic Propositional Logic, $A{\vee}{\bot}{\in}\Theta$.
Since ${\lozenge}A{\not\in}\Gamma$, then ${\bot}{\in}\Lambda$: a contradiction.
Thus, ${R_{\mathbf{L}}}\Gamma\Theta$.
\\
\\
All in all, we have proved that $\Theta{\in}W_{\mathbf{L}}$ is such that $R_{\mathbf{L}}\Gamma\Theta$ and $\Theta\leq_{\mathbf{L}}\Lambda$.
\\
\\
$\mathbf{(3)}$~Suppose $\AxiomSer{\in}X$.
We demonstrate $(W_{\mathbf{L}},\leq_{\mathbf{L}},R_{\mathbf{L}})$ is serial.
Let $\Gamma{\in}W_{\mathbf{L}}$.
Let ${\mathcal S}{=}\{\Delta$: $\Delta$ is a theory such that $\square\Gamma{\subseteq}\Delta$, $\lozenge\Delta{\subseteq}\Gamma$ and $\bot{\not\in}\Delta\}$.
Since $\AxiomSer{\in}X$, therefore by Lemmas~\ref{lemma:about:D:and:axioms} and~\ref{lemma:another:property:of:LIKD}, $\square\bot{\not\in}\Gamma$ and $\lozenge\square\Gamma{\subseteq}\Gamma$.
Hence, $\square\Gamma{\in}{\mathcal S}$.
Thus, ${\mathcal S}$ is nonempty.
Moreover, for all $\subseteq$-chains $(\Delta_{i})_{i{\in}I}$ of elements of ${\mathcal S}$, $\bigcup_{i{\in}I}\Delta_{i}$ is in ${\mathcal S}$.
Consequently, by Zorn's Lemma, there exists a maximal elements $\Delta$ in ${\mathcal S}$.
By using \textbf{DP}, the reader may easily verify that $\Delta{\in}W_{\mathbf{L}}$.
Moreover, obviously, ${R_{\mathbf{L}}}\Gamma\Delta$.
\\
\\
Suppose $\AxiomRef{\in}X$.
By using $\AxiomRef$, the reader may easily verify that $(W_{\mathbf{L}},\leq_{\mathbf{L}},R_{\mathbf{L}})$ is reflexive.
\\
\\
Suppose $\AxiomTra{\in}X$.
By using $\AxiomTra$, the reader may easily verify that $(W_{\mathbf{L}},\leq_{\mathbf{L}},R_{\mathbf{L}})$ is transitive.\qed
\end{proof}
%
%
%{\em Proof of Lemma~\ref{lemma:prime:proper:for:implication}.}
{\noindent \textbf{Lemma~\ref{lemma:prime:proper:for:implication}.}}
{\it
Let $\Gamma$ be a prime theory.
\begin{enumerate}
    \item If $B\supset C\not\in\Gamma$ then there exists a prime theory $\Delta$ such that $\Gamma\subseteq\Delta$, $B\in\Delta$ and $C\not\in\Delta$,
    \item if $\square B\not\in\Gamma$ then there exists a prime theory $\Delta$ such that $R_{\mathbf{L}}\Gamma\Delta$ and $B\not\in\Delta$,
    \item if $\Diamond B\in\Gamma$ then there exists a prime theory $\Delta$ such that $R_{\mathbf{L}}\Gamma\Delta$ and $B\in\Delta$.
\end{enumerate}
}
\begin{proof}
Let $\Gamma$ be a prime theory.
For~(1), see~\cite{FIK-csl2024}.
For~(2), suppose $\square B\not\in\Gamma$.
Hence, by~\cite{FIK-csl2024}, there exists prime theories $\Lambda,\Delta$ such that $\Gamma\subseteq\Lambda$, $R_{\mathbf{L}}\Lambda\Delta$ and $B\not\in\Delta$.
Thus, there exists a prime theory $\Theta$ such that $R_{\mathbf{L}}\Gamma\Theta$ and $\Theta\subseteq\Delta$.
Since $B\not\in\Delta$, $B\not\in\Theta$.
For~(3), see~\cite{FIK-csl2024}.\qed
\end{proof}
%
%
%{\em Proof of Lemma~\ref{lemma:truth:lemma}.}
{\noindent \textbf{Lemma~\ref{lemma:truth:lemma}.}}
{\it For all formulas $A$ and for all $\Gamma\in W_{\mathbf{L}}$, $A\in\Gamma$ if and only if $\Gamma\Vdash A$.}
\begin{proof}
By induction on $A$.
The case when $A$ is an atom is by definition of $V_{\mathbf{L}}$.
The cases when $A$ is of the form $\bot$, $\top$, $B\wedge C$ and $B\vee C$ are as usual.
The cases when $A$ is of the form $B\supset C$, $\square B$ and $\Diamond B$ use the Existence Lemma.\qed
\end{proof}
%
%
%{\em Proof of Lemma~\ref{lemma:l:min:C1}.}
{\noindent \textbf{Proposition~\ref{lemma:l:min:C1}.}}
{\it
\begin{enumerate}
\item \dik$X$ is conservative over $\IPL$,
\item \dik$X$~contains all substitution instances of $\IPL$ and is closed with respect to modus ponens,
\item \dik$X$~has the disjunction property if and only if $\AxiomSer{\in}X$ or $\AxiomRef{\in}X$,
\item the addition of the law of excluded middle to \dik$X$~yields modal logic $\K$,
\item $\square$ and $\lozenge$ are independent in \dik$X$.
\end{enumerate}
}
\begin{proof}
$\mathbf{(1)}$ By the fact that for all partial orders $(W,{\leq})$, $(W,{\leq},\mathbf{Id})$ validates \dik$X$ ($\mathbf{Id}$ denoting the identity relation on $W$).
\\
\\
$\mathbf{(2)}$ By definition of \dik$X$.
\\
\\
$\mathbf{(3)}$ From left to right, it suffices to notice that although ${\lozenge}{\top}{\vee}{\square}{\bot}$ is in \dik, neither ${\lozenge}{\top}$ is in \ditra, nor ${\square}{\bot}$ is in \ditra.
From right to left, for the sake of the contradiction, suppose $\AxiomSer{\in}X$ or $\AxiomRef{\in}X$ and \dik$X$~does not possess the disjunction property.
Hence, there exists formulas $A_{1},A_{2}$ such that $A_{1}{\vee}A_{2}$ is in \dik$X$~and neither $A_{1}$ is in \dik$X$, nor $A_{2}$ is in \dik$X$.
Thus, by the completeness of \dik$X$, there exists frames $(W_{1},{\leq_{1}},{R_{1}})$ and $(W_{2},{\leq_{2}},{R_{2}})$, there exists models ${\mathcal M}_{1}{=}(W_{1},{\leq_{1}},{R_{1}},V_{1})$ and ${\mathcal M}_{2}{=}(W_{2},{\leq_{2}},{R_{2}},V_{2})$ based on these frames, there exists $s_{1}{\in}W_{1}$ and there exists $s_{2}{\in}W_{2}$ such that $(W_{1},{\leq_{1}},{R_{1}})\Vdash$\dik$X$, $(W_{2},{\leq_{2}},{R_{2}})\Vdash$\dik$X$, ${\mathcal M}_{1},s_{1}{\not\Vdash}A$ and ${\mathcal M}_{2},s_{2}{\not\Vdash}A$.
Moreover, for all $i=1,2$, $R_{i}$ is serial (resp. reflexive, transitive) if $\AxiomSer{\in}X$ (resp. $\AxiomRef{\in}X$, $\AxiomTra{\in}X$).
Let $s$ be a new element and ${\mathcal M}{=}(W,{\leq},{R},V)$ be the model such that
\begin{itemize}
    \item $W{=}W_{1}{\cup}W_{2}{\cup}\{s\}$,
    \item $\leq$ is the least preorder on $W$ containing $\leq_{1}$, $\leq_{2}$, $\{s\}{\times}W_{1}$ and $\{s\}{\times}W_{2}$,
    \item $R$ is the least binary relation on $W$ containing $R_{1}$, $R_{2}$ and $\{(s,s)\}$,
    \item for all atoms $p$, $V(p){=}V_{1}(p){\cup}V_{2}(p)$.
\end{itemize}
Obviously, ${\mathcal M}$ is a forward and downward confluent model.
In other respect, $R$ is serial (resp. reflexive, transitive) if $\AxiomSer{\in}X$ (resp. $\AxiomRef{\in}X$, $\AxiomTra{\in}X$).
Moreover, ${\mathcal M},s_{1}{\not\Vdash}A_{1}$ and ${\mathcal M},s_{2}{\not\Vdash}A_{2}$.
Consequently, ${\mathcal M},s{\not\Vdash}A_{1}{\vee}A_{2}$: a contradiction with the fact that $A_{1}{\vee}A_{2}$ is in \dik$X$.
\\
\\
$\mathbf{(4)}$ As done in~\cite{FIK-csl2024} for \fik.
\\
\\
$\mathbf{(5)}$ As done in~\cite{FIK-csl2024} for \fik.\qed
\end{proof}
\section{Proofs in Section \ref{sec:calculi}}

\begin{lemma}\label{sharp-preserving}
  Let $\Lambda\Rightarrow\Theta$ be a sequent, $\+M=(W,\leq,R,V)$ a model and $x\in W$. 
  If $x\nVdash \Theta$, then $x\nVdash\Theta^\flat$. And dually, if $x\Vdash \ \Rightarrow\Theta^\#$, then $x\Vdash \ \Rightarrow \Theta$. 
\end{lemma}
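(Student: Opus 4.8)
The plan is to prove the two halves by structural induction --- say on the maximal nesting depth of $[\cdot]$-blocks occurring in $\Theta$ (equivalently, on the size of $\Theta$), since the operators $(\cdot)^\flat$ and $(\cdot)^\#$ recurse only into the succedents $\Psi_i$ of the top-level $[\cdot]$-blocks of $\Theta$, which are strictly simpler. Throughout, I read $x\Vdash\Theta$ as an abbreviation of $x\Vdash(\Rightarrow\Theta)$, which by the forcing clause for sequents amounts to $x\Vdash\mathcal{O}$ for some $\mathcal{O}\in\Theta$; dually, $x\nVdash\Theta$ means $x\nVdash\mathcal{O}$ for every $\mathcal{O}\in\Theta$. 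The parameter $\Lambda$ is irrelevant, since neither $\Theta^\flat$ nor $\Theta^\#$ depends on it, and the two statements do not refer to each other, so I would run two separate inductions.

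For the $\flat$-claim, the base case is $\Theta$ being $[\cdot]$-free: then $\Theta^\flat=\emptyset$, and $x\nVdash\emptyset$ holds by the explicit clause $\mathcal{M},x\not\Vdash\emptyset$. In the inductive case $\Theta=\Theta_0,[\Phi_1\Rightarrow\Psi_1],\ldots,[\Phi_k\Rightarrow\Psi_k]$ with $\Theta_0$ being $[\cdot]$-free, so that $\Theta^\flat=[\Phi_1\Rightarrow\Psi_1^\flat],\ldots,[\Phi_k\Rightarrow\Psi_k^\flat]$, I would assume $x\nVdash\Theta$, fix an index $i$, and use $x\nVdash[\Phi_i\Rightarrow\Psi_i]$ to pick a world $y$ with $Rxy$, $y\Vdash A$ for all $A\in\Phi_i$, and $y\nVdash\Psi_i$; applying the induction hypothesis to $\Psi_i$ gives $y\nVdash\Psi_i^\flat$, whence $y\nVdash(\Phi_i\Rightarrow\Psi_i^\flat)$ and thus $x\nVdash[\Phi_i\Rightarrow\Psi_i^\flat]$. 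Since this holds for every $i$, $x\nVdash\Theta^\flat$. The underlying point is simply that thinning the succedent of a $[\cdot]$-block while keeping its antecedent turns a counter-assignment into a counter-assignment.

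For the $\#$-claim, the base case $\Theta$ $[\cdot]$-free gives $\Theta^\#=Fm(\Theta)$, a sub-multiset of $\Theta$, so $x\Vdash(\Rightarrow\Theta^\#)$ immediately yields $x\Vdash(\Rightarrow\Theta)$. In the inductive case, with $\Theta=\Theta_0,[\Phi_1\Rightarrow\Psi_1],\ldots,[\Phi_k\Rightarrow\Psi_k]$ and $\Theta^\#=Fm(\Theta_0),[\Rightarrow\Psi_1^\#],\ldots,[\Rightarrow\Psi_k^\#]$, I would take a witness $\mathcal{O}\in\Theta^\#$ with $x\Vdash\mathcal{O}$: if $\mathcal{O}$ is a formula of $Fm(\Theta_0)$ it already belongs to $\Theta$; if $\mathcal{O}=[\Rightarrow\Psi_i^\#]$, then every $y$ with $Rxy$ forces $(\Rightarrow\Psi_i^\#)$, so by the induction hypothesis applied to $\Psi_i$ every such $y$ forces $(\Rightarrow\Psi_i)$, hence also $(\Phi_i\Rightarrow\Psi_i)$, its succedent being satisfied irrespective of the antecedent; therefore $x\Vdash[\Phi_i\Rightarrow\Psi_i]$, and that block lies in $\Theta$. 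Either way $x\Vdash(\Rightarrow\Theta)$.

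I do not expect a genuine obstacle: the argument is a routine structural induction. The only points that need a moment of care are matching the recursive shapes of $(\cdot)^\flat$ and $(\cdot)^\#$ against the forcing clauses for $[\cdot]$-blocks and for sequents --- concretely, the two monotonicity remarks used above: for the $\#$-part, that a sequent is forced as soon as its succedent is, so that deleting the antecedent $\Phi_i$ when forming $[\Rightarrow\Psi_i^\#]$ is harmless; for the $\flat$-part, the dual fact that a refutation of $\Phi_i\Rightarrow\Psi_i$ persists when its succedent is replaced by $\Psi_i^\flat$.
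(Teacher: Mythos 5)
Your proof is correct and is exactly the argument the paper intends: the paper's own proof is just the one-line remark ``induction on the structure of $\Theta$,'' and your write-up fills in precisely that structural induction, with the right reading of $x\nVdash\Theta$ (no member of $\Theta$ forced) and the two monotonicity observations about thinning/strengthening the succedent of a $[\cdot]$-block. Nothing to add.
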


\begin{proof}
  Induction on the structure of $\Theta$.
\end{proof}

\begin{proposition}
    ($\text{inter}_{\rightarrow}$) is valid for semantics on (FC) frames and dually ($\text{inter}_{\downarrow}$) is valid for semantics on (DC) frames.
\end{proposition}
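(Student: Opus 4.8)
The plan is to reduce the statement to the empty-context case and then settle it by a heredity-plus-confluence computation. A routine induction on the shape of the context $G\{\}$ shows that it suffices to prove, for each of the two rules and with $G\{\}=\{\}$, that $x\Vdash S_1$ implies $x\Vdash S$, where $S_1$ is the premise, $S$ the conclusion, and $x$ an arbitrary world of an arbitrary model with the relevant frame property: passing through a layer $\langle G'\{\}\rangle$ only invokes the behaviour of $\Vdash$ along $\le$-successors, passing through $[G'\{\}]$ only its behaviour along $R$-successors, and the base case is immediate. So from now on fix a model $\mathcal M=(W,\le,R,V)$ — forward confluent for $(\text{inter}_\rightarrow)$, downward confluent for $(\text{inter}_\downarrow)$ — and a world $x$.

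The engine of the proof is a pair of strengthenings of Lemma~\ref{sharp-preserving} that fold in heredity (Lemma~\ref{lemma:HB:monotonicity}). For $(\text{inter}_\rightarrow)$ I would first prove: \emph{in a forward confluent model, if $a\le a'$ and $a\nVdash\mathcal O$ for every member $\mathcal O$ of a succedent $\Theta$, then $a'\nVdash\mathcal O'$ for every member $\mathcal O'$ of $\Theta^\flat$.} The proof is an induction on the block-nesting depth of $\Theta$: every member of $\Theta^\flat$ is a block $[\Phi_i\Rightarrow\Psi_i^\flat]$ arising from a block $[\Phi_i\Rightarrow\Psi_i]\in\Theta$; picking a witness $z$ of $a\nVdash[\Phi_i\Rightarrow\Psi_i]$ (so $aRz$, $z\Vdash\Phi_i$, $z\nVdash\Psi_i$), forward confluence gives $z'$ with $a'Rz'$ and $z\le z'$, heredity yields $z'\Vdash\Phi_i$, and the induction hypothesis yields $z'\nVdash\Psi_i^\flat$, so $z'$ witnesses $a'\nVdash[\Phi_i\Rightarrow\Psi_i^\flat]$ (the base case, $\Psi_i$ being $[\cdot]$-free so $\Psi_i^\flat=\emptyset$, uses only the heredity step). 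Dually, for $(\text{inter}_\downarrow)$ I would prove: \emph{in a downward confluent model, if $a\le a'$ and $a\Vdash\mathcal O$ for some member $\mathcal O$ of $\Theta^\#$, then $a'\Vdash\mathcal O'$ for some member $\mathcal O'$ of $\Theta$}, by the same induction but using downward confluence to pull an $R$-successor of $a'$ below one of $a$.

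With these at hand, validity of $(\text{inter}_\rightarrow)$ follows. The only component of the conclusion $\Gamma\Rightarrow\Delta,\langle\Sigma\Rightarrow\Pi\rangle,[\Lambda\Rightarrow\Theta]$ not literally present in the premise $\Gamma\Rightarrow\Delta,\langle\Sigma\Rightarrow\Pi,[\Lambda\Rightarrow\Theta^\flat]\rangle,[\Lambda\Rightarrow\Theta]$ is the block $\langle\Sigma\Rightarrow\Pi\rangle$, so if $x\Vdash S_1$ is witnessed by any component other than the enlarged implication block, the same component witnesses $x\Vdash S$. Otherwise $x\Vdash\langle\Sigma\Rightarrow\Pi,[\Lambda\Rightarrow\Theta^\flat]\rangle$; if moreover $x\Vdash[\Lambda\Rightarrow\Theta]$ we are done, so assume not and fix $y$ with $Rxy$, $y\Vdash\Lambda$ and $y\nVdash\Theta$. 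To get $x\Vdash\langle\Sigma\Rightarrow\Pi\rangle$, take any $x'\ge x$; then $x'\Vdash\Sigma\Rightarrow\Pi,[\Lambda\Rightarrow\Theta^\flat]$, and assuming $x'$ forces all of $\Sigma$ and none of $\Pi$ (else done) we get $x'\Vdash[\Lambda\Rightarrow\Theta^\flat]$. Forward confluence supplies $y'$ with $Rx'y'$ and $y\le y'$, hence $y'\Vdash\Lambda\Rightarrow\Theta^\flat$, while $y'\Vdash\Lambda$ by heredity, so $y'$ forces some member of $\Theta^\flat$ — contradicting the first strengthened lemma applied to $y\le y'$ and $y\nVdash\Theta$. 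Thus $x'\Vdash\Sigma\Rightarrow\Pi$ for every $x'\ge x$, i.e. $x\Vdash\langle\Sigma\Rightarrow\Pi\rangle$, whence $x\Vdash S$. The case of $(\text{inter}_\downarrow)$ is dual: if the witness of $x\Vdash S_1$ is $x\Vdash[\Rightarrow\Theta^\#]$ and $x\nVdash\langle\Sigma\Rightarrow\Pi,[\Lambda\Rightarrow\Theta]\rangle$, one picks $x'\ge x$ and $y$ with $Rx'y$, $y\Vdash\Lambda$ and $y\nVdash\Theta$, uses downward confluence to get $z$ with $Rxz$ and $z\le y$, notes that $z$ forces some member of $\Theta^\#$, and applies the second strengthened lemma to $z\le y$ to contradict $y\nVdash\Theta$.

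The main obstacle is precisely this pair of strengthened lemmas: the bare same-world form of Lemma~\ref{sharp-preserving} does not close the argument, because transporting a witness across an $R$-edge via confluence lands it at a world that is $\le$-above (for $\flat$) or $\le$-below (for $\#$) the original, and one needs the $\flat$- (resp. $\#$-) transform to be persistent in exactly that direction — which is what drives the inductive proofs and what makes the correct orientation of (FC) and (DC) essential. Beyond that, the only thing to watch is that the $\langle\cdot\rangle$-blocks that may occur inside $\Theta$ are simply discarded by $(\cdot)^\flat$ and $(\cdot)^\#$, so they never enter the reasoning; everything else is bookkeeping.
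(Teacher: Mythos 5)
Your proof is correct and follows the same skeleton as the paper's: reduce to the empty context by induction on $G\{\}$, argue by contradiction, extract a witness $y$ of the failed modal block, transport it with the relevant confluence property, and close the argument with a preservation property of $(\cdot)^\flat$ and $(\cdot)^\#$. The one genuine difference is the auxiliary lemma. The paper's Lemma~\ref{sharp-preserving} is stated at a \emph{single} world ($x\nVdash\Theta$ implies $x\nVdash\Theta^\flat$; $x\Vdash\;\Rightarrow\Theta^\#$ implies $x\Vdash\;\Rightarrow\Theta$), and its proof of $(\text{inter}_{\downarrow})$ then finishes by converting $y_0\Vdash\Theta^\#$ at the world $y_0$ and pushing the conclusion up along $y_0\leq y$ --- a step that tacitly uses upward persistence of forcing for the members of $\Theta$, which is Lemma~\ref{lemma:HB:monotonicity} for formulas but is not immediate for modal blocks occurring in $\Theta$. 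Your $\leq$-shifted formulations (in an FC model, $a\leq a'$ and $a\nVdash\Theta$ give $a'\nVdash\Theta^\flat$; in a DC model, $a\leq a'$ and $a$ forcing a member of $\Theta^\#$ give $a'$ forcing a member of $\Theta$) fold the confluence and heredity steps into the induction and deliver the contradiction directly at $y$, so your version is the more self-contained of the two and makes explicit exactly the point the paper's last sentence glosses over. You also work out the $(\text{inter}_{\rightarrow})$/(FC) case in full, which the paper delegates to the cited result for \fik.
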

  
\begin{proof}
  The proof for (FC) can be found in \cite[Lemma 29]{FIK-csl2024}. For (DC), we prove the basic case here, i.e. context $G\{\}$ in the rule is empty. The general case can be shown by induction on the structure of $G\{\}$.
    
  Assume for the sake of a contradiction that the rule is not valid, then there is a model $\+M=(W,\leq,R,V)$ and $x\in W$ s.t. $x\Vdash \Gamma\Rightarrow\Delta,\langle\Sigma\Rightarrow\Pi,[\Lambda\Rightarrow\Theta]\rangle,[\Rightarrow\Theta^\#]$ and $x\nVdash \Gamma\Rightarrow\Delta,\langle\Sigma\Rightarrow\Pi,[\Lambda\Rightarrow\Theta]\rangle$. 
  Then we have (a). $x\nVdash\langle\Sigma\Rightarrow\Pi,[\Lambda\Rightarrow\Theta]\rangle$ and (b). $x\Vdash[\Rightarrow\Theta^\#]$. According to (a), there is some $x'\geq x$ s.t. $x'\nVdash\Sigma\Rightarrow\Pi,[\Lambda\Rightarrow\Theta]$. It follows that $x'\nVdash[\Lambda\Rightarrow\Theta]$, which means there is some $y$ with $Rx'y$ s.t. $y\Vdash\Lambda$ and $y\nVdash\Theta$. 
  Meanwhile, since $\+M$ satisfies (DC), by $x\leq x'$ and $Rx'y$, we see that there is some $y_0$ s.t. $Rxy_0$ and $y_0\leq y$. According to (b), $y_0\Vdash\Theta^\#$. By Lemma \ref{sharp-preserving}, $y_0\Vdash\Theta$. Since $y_0\leq y$, we see that $y\Vdash\Theta^\#$, a contradiction. 
  \qed
\end{proof}

{\noindent \textbf{Theorem~\ref{thm:soundness-cdik}.}} 
{\it If a sequent $S$ is provable in \cdikk, then it is a validity in \dik.}

\begin{proof}
  By verifying each rule in \cdikk~is valid in \dik.
\end{proof}

\begin{lemma}\label{Prop:empty-block-derivable}
  Let $S=\Gamma\Rightarrow\Delta$ be a sequent and $\Delta$ is not $[\cdot]$-free, $G\{\}$ a context. 
  Then $G\{\Gamma\Rightarrow\Delta\}$ is provable in \local~if and only if $G\{\Gamma\Rightarrow\Delta,[\Rightarrow]\}$ is provable. 
\end{lemma}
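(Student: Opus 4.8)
This statement is an equivalence and, granted the structural properties of \local\ recorded immediately before it, both directions are short. The guiding intuition is semantic: in a succedent, the block $[\Rightarrow]$ is forced at a world $x$ exactly when $x$ has no $R$-successor, and in that case \emph{every} block $[\Sigma\Rightarrow\Pi]$ is vacuously forced at $x$; hence $[\Rightarrow]$ is redundant in the succedent as soon as some $[\cdot]$-block is already present there, which is precisely the hypothesis that $\Delta$ is not $[\cdot]$-free. The plan is to turn this observation into a purely syntactic argument using the height-preserving admissibility of weakening and contraction in \local, both for formulas \emph{and} for blocks.

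For the implication ``$G\{\Gamma\Rightarrow\Delta\}$ provable $\Rightarrow G\{\Gamma\Rightarrow\Delta,[\Rightarrow]\}$ provable'', it suffices to apply the height-preserving admissible weakening rule that adjoins the block $[\Rightarrow]$ to the succedent of the sequent sitting in the hole of $G$. This direction does not use the assumption on $\Delta$.

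For the converse, since $\Delta$ is not $[\cdot]$-free write $\Delta=\Delta_0,[\Sigma_0\Rightarrow\Pi_0]$, and start from a derivation of $G\{\Gamma\Rightarrow\Delta_0,[\Sigma_0\Rightarrow\Pi_0],[\Rightarrow]\}$. View the empty sequent occurring inside the block $[\Rightarrow]$ as the hole of the composed context $G'\{\,\}=G\{\Gamma\Rightarrow\Delta_0,[\Sigma_0\Rightarrow\Pi_0],[\{\,\}]\}$, which is a context in the sense of Definition~\ref{def-context}. Iterating the extended height-preserving weakening rules, enlarge this empty sequent into $\Sigma_0\Rightarrow\Pi_0$ (adding, one at a time, the formulas of $\Sigma_0$ to its antecedent and the formulas and blocks of $\Pi_0$ to its succedent); this yields a derivation of $G\{\Gamma\Rightarrow\Delta_0,[\Sigma_0\Rightarrow\Pi_0],[\Sigma_0\Rightarrow\Pi_0]\}$, which now contains the block $[\Sigma_0\Rightarrow\Pi_0]$ with multiplicity two in the same succedent. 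A single application of height-preserving contraction on blocks merges the two occurrences and gives a derivation of $G\{\Gamma\Rightarrow\Delta_0,[\Sigma_0\Rightarrow\Pi_0]\}=G\{\Gamma\Rightarrow\Delta\}$, as required.

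There is no real obstacle here: the argument only rearranges structural steps. The two points to state carefully are that weakening may enlarge a sub-sequent occurring arbitrarily deep inside a $[\cdot]$-block (this is weakening in a context obtained by composition, so that ``$[\{\,\}]$'' itself counts as a legitimate context) and that contraction is height-preserving admissible on $[\cdot]$-blocks and not merely on formulas — both being among the properties of \local\ established in Section~\ref{sec:calculi}. Because nothing in the argument inspects any particular logical or modal rule, it transfers verbatim to the corresponding calculi for the extensions with (\axiomd) or (\axiomt); the load-bearing ingredient, if one wishes to name it, is the block form of height-preserving contraction.
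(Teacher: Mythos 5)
Your proposal is correct, but it proves the right-to-left direction by a genuinely different route than the paper. The paper's proof handles that direction by induction on the structure of the derivation of $G\{\Gamma\Rightarrow\Delta,[\Rightarrow]\}$, tracing the empty block upwards through each possible last rule; you instead avoid any induction by decomposing $\Delta=\Delta_0,[\Sigma_0\Rightarrow\Pi_0]$, weakening the empty block (at arbitrary nesting depth, via a composed context) into a second copy of $[\Sigma_0\Rightarrow\Pi_0]$, and then contracting the two identical blocks. Your argument is shorter and more transparent, and the left-to-right direction coincides with the paper's (plain weakening of an empty block, with no use of the hypothesis on $\Delta$). The trade-off is where the burden lies: your proof is entirely parasitic on the height-preserving admissibility of weakening \emph{and contraction on $[\cdot]$-blocks} in \local, which the paper asserts just before the lemma but does not prove in detail --- block contraction in a bi-nested calculus is itself a nontrivial induction --- whereas the paper's direct induction needs only ordinary weakening as a prerequisite. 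Since the paper explicitly records block-level contraction as hp-admissible, your derivation is a legitimate and arguably cleaner corollary of the stated structural facts; it also makes visible the semantic reason the hypothesis ``$\Delta$ is not $[\cdot]$-free'' is needed, namely that $[\Rightarrow]$ is subsumed by any other $[\cdot]$-block already present in the succedent.
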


\begin{proof}
  The left-to-right direction is straightforward by applying $(w_R)$ to add an empty $[\Rightarrow]$. For the right-to-left, it can be done by induction on the structure of the derivation. \qed
\end{proof}
 
{\noindent \textbf{Proposition~\ref{prop:lik-eq-fragment}.}} 
{\it The $(\text{inter}_{\downarrow})$ rule is admissible in \local. Consequently, a sequent $S$ is provable in  \cdikk \ if and only if  $S$ is provable in \local.}

\begin{proof}
  We show the admissibility of ($\text{inter}_{\downarrow}$) by induction on the structure of a derivation in \local. Assume the premise $S=\Gamma\Rightarrow\Delta,\langle\Sigma\Rightarrow\Pi,[\Lambda\Rightarrow\Theta]\rangle,[\Rightarrow\Theta^\#]$ is derivable, 
  if it is an axiom, it is easy to see $\Gamma\Rightarrow\Delta,\langle\Sigma\Rightarrow\Pi,[\Lambda\Rightarrow\Theta]\rangle$ is also an axiom. 
  Otherwise, there is a derivation $\+D$ with $S$ be the conclusion. 
  
  Consider the last rule application $(r)$ in $\+D$ and we call the derivation above $(r)$ by $\+D_0$.
  
  \textbf{(i).} Assume the principal formula or block of $(r)$ is among $\Gamma,\Delta,\Sigma,\Pi$, which implies the $(r)$ application only concerns part(s) of $\Gamma,\Delta,\Sigma,\Pi$ while $\Lambda,\Theta,\Theta^\#$ are all side. Then it is easy to transform $\+D$ into a derivation for $\Gamma\Rightarrow\Delta,\langle\Sigma\Rightarrow\Pi,[\Lambda\Rightarrow\Theta]\rangle$. We only take one example here and other cases are similar. 
  Let $(r)$ be a single-premise rule only concerning $\Delta$ which means $\Gamma,\Sigma,\Pi$ are all side, then $\+D_0=\+D_0',(\Gamma\Rightarrow\Delta,\langle\Sigma\Rightarrow\Pi,[\Lambda\Rightarrow\Theta]\rangle,[\Rightarrow\Theta^\#])$ for some derivation $\+D_0'$. So $\+D$ is

  \[
    \AxiomC{$\+D_0'$}
    \UnaryInfC{$\Gamma\Rightarrow\Delta',\langle\Sigma\Rightarrow\Pi,[\Lambda\Rightarrow\Theta]\rangle,[\Rightarrow\Theta^\#]$}
    \RightLabel{($r$)}
    \UnaryInfC{$\Gamma\Rightarrow\Delta,\langle\Sigma\Rightarrow\Pi,[\Lambda\Rightarrow\Theta]\rangle,[\Rightarrow\Theta^\#]$}
    \DisplayProof
  \]

  and we can construct a derivation for $\Gamma\Rightarrow\Delta,\langle\Sigma\Rightarrow\Pi,[\Lambda\Rightarrow\Theta]\rangle$ as follows

  \[
    \AxiomC{$\+D_0'$}
    \UnaryInfC{$\Gamma\Rightarrow\Delta',\langle\Sigma\Rightarrow\Pi,[\Lambda\Rightarrow\Theta]\rangle,[\Rightarrow\Theta^\#]$}
    \RightLabel{(IH)}
    \UnaryInfC{$\Gamma\Rightarrow\Delta',\langle\Sigma\Rightarrow\Pi,[\Lambda\Rightarrow\Theta]\rangle$}
    \RightLabel{($r$)}
    \UnaryInfC{$\Gamma\Rightarrow\Delta,\langle\Sigma\Rightarrow\Pi,[\Lambda\Rightarrow\Theta]\rangle$}
    \DisplayProof
  \]

  \textbf{(ii).} Assume the principal formula or block of $(r)$ is in $\Theta^\#$ while all the $\Gamma,\Delta,\Sigma,\Pi,\Theta$ are side. In this case, $(r)$ can only be a right rule and $\+D_0=\+D_0',(\Gamma\Rightarrow\Delta,\langle\Sigma\Rightarrow\Pi,[\Lambda\Rightarrow\Theta]\rangle,[\Rightarrow(\Theta^\#)'])$ for some derivation $\+D_0'$. 
  Then $\+D$ is
  \[
    \AxiomC{$\+D_0'$}
    \UnaryInfC{$\Gamma\Rightarrow\Delta,\langle\Sigma\Rightarrow\Pi,[\Lambda\Rightarrow\Theta]\rangle,[\Rightarrow(\Theta^\#)']$}
    \RightLabel{($r$)}
    \UnaryInfC{$\Gamma\Rightarrow\Delta,\langle\Sigma\Rightarrow\Pi,[\Lambda\Rightarrow\Theta]\rangle,[\Rightarrow\Theta^\#]$}
    \DisplayProof
  \]

  We only show the case of $(\vee_R)$ here and other cases are similar. In this case, $(\vee_R)$ is applied to a sequent $\Rightarrow \Omega,A,B\in^{[\cdot]}\Rightarrow[\Rightarrow(\Theta^\#)']$. Hence $\Rightarrow(\Theta^\#)'$ can be rewritten explicitly as $\Rightarrow \Theta_0,[\Rightarrow\Psi,[\cdots[\Rightarrow\Omega,A,B]\cdots]]$ for some $\Theta_0$. Correspondingly, $\Lambda\Rightarrow\Theta$ is then of the form $\Lambda\Rightarrow \Theta'_0,[\Phi\Rightarrow\Psi,[\cdots[\Xi\Rightarrow\Omega',A\vee B]\cdots]]$. 
  And $\+D$ is presented explicitly as
  {
    \tiny
    \[
    \AxiomC{$\+D_0'$}
    \UnaryInfC{$\Gamma\Rightarrow\Delta,\langle\Sigma\Rightarrow\Pi,[\Lambda\Rightarrow \Theta'_0,[\Phi\Rightarrow\Psi,[\cdots[\Xi\Rightarrow\Omega',A\vee B]\cdots]]]\rangle,[\Rightarrow \Theta_0,[\Rightarrow\Psi,[\cdots[\Rightarrow\Omega,A,B]\cdots]]]$}
    \RightLabel{($\vee_R$)}
    \UnaryInfC{$\Gamma\Rightarrow\Delta,\langle\Sigma\Rightarrow\Pi,[\Lambda\Rightarrow \Theta'_0,[\Phi\Rightarrow\Psi,[\cdots[\Xi\Rightarrow\Omega',A\vee B]\cdots]]]\rangle,[\Rightarrow \Theta_0,[\Rightarrow\Psi,[\cdots[\Rightarrow\Omega,A\vee B]\cdots]]]$}
    \DisplayProof
  \]
  }
  
  Since $(w)$ is hp-admissible in \local, we can construct a derivation for $\Gamma\Rightarrow\Delta,\langle\Sigma\Rightarrow\Pi,[\Lambda\Rightarrow\Theta]\rangle$ as follows 
  {
    \tiny
    \[
      \AxiomC{$\+D_0'$}
      \UnaryInfC{$\Gamma\Rightarrow\Delta,\langle\Sigma\Rightarrow\Pi,[\Lambda\Rightarrow \Theta'_0,[\Phi\Rightarrow\Psi,[\cdots[\Xi\Rightarrow\Omega',A\vee B]\cdots]]]\rangle,[\Rightarrow \Theta_0,[\Rightarrow\Psi,[\cdots[\Rightarrow\Omega,A,B]\cdots]]]$}
      \RightLabel{($w$)}
      \UnaryInfC{$\Gamma\Rightarrow\Delta,\langle\Sigma\Rightarrow\Pi,[\Lambda\Rightarrow \Theta'_0,[\Phi\Rightarrow\Psi,[\cdots[\Xi\Rightarrow\Omega',A,B,A\vee B]\cdots]]]\rangle,[\Rightarrow \Theta_0,[\Rightarrow\Psi,[\cdots[\Rightarrow\Omega,A,B,A\vee B]\cdots]]]$}
      \RightLabel{(IH)}
      \UnaryInfC{$\Gamma\Rightarrow\Delta,\langle\Sigma\Rightarrow\Pi,[\Lambda\Rightarrow \Theta'_0,[\Phi\Rightarrow\Psi,[\cdots[\Xi\Rightarrow\Omega',A,B,A\vee B]\cdots]]]\rangle$}
      \RightLabel{($\vee_R$)}
      \UnaryInfC{$\Gamma\Rightarrow\Delta,\langle\Sigma\Rightarrow\Pi,[\Lambda\Rightarrow \Theta'_0,[\Phi\Rightarrow\Psi,[\cdots[\Xi\Rightarrow\Omega',A\vee B,A\vee B]\cdots]]]\rangle$}
      \RightLabel{($c_R$)}
      \UnaryInfC{$\Gamma\Rightarrow\Delta,\langle\Sigma\Rightarrow\Pi,[\Lambda\Rightarrow \Theta'_0,[\Phi\Rightarrow\Psi,[\cdots[\Xi\Rightarrow\Omega',A\vee B]\cdots]]]\rangle$}
      \DisplayProof
    \]
  }
  
  and we see the conclusion is just $\Gamma\Rightarrow\Delta,\langle\Sigma\Rightarrow\Pi,[\Lambda\Rightarrow \Theta]\rangle$.

  ~

  \textbf{(iii).} Assume the principal formula or block of $(r)$ is in $\Lambda\Rightarrow\Theta$ while all the $\Gamma,\Delta,\Sigma,\Pi,\Theta^\#$ are side. 

  $\+D_0=\+D_0',(\Gamma\Rightarrow\Delta,\langle\Sigma\Rightarrow\Pi,[\Lambda'\Rightarrow\Theta']\rangle,[\Rightarrow\Theta^\#])$ for some derivation $\+D_0'$. 
  Then $\+D$ is
  \[
    \AxiomC{$\+D_0'$}
    \UnaryInfC{$\Gamma\Rightarrow\Delta,\langle\Sigma\Rightarrow\Pi,[\Lambda'\Rightarrow\Theta']\rangle,[\Rightarrow\Theta^\#]$}
    \RightLabel{($r$)}
    \UnaryInfC{$\Gamma\Rightarrow\Delta,\langle\Sigma\Rightarrow\Pi,[\Lambda\Rightarrow\Theta]\rangle,[\Rightarrow\Theta^\#]$}
    \DisplayProof
  \]

  If $(r)$ satisfies at least one of the following conditions: 
  \begin{itemize}
    \item the principal formula or block is not in the $(\Theta')^\#$-part of $\Lambda'\Rightarrow\Theta'$,
    \item it is a left rule among $(\vee_L)(\wedge_L)(\square_L)(\supset_L)$,
    \item it is a structural rule (trans)($\text{inter}_{\rightarrow}$),
  \end{itemize}
  then it is easy to verify that $(\Theta')^\#=\Theta^\#$. And we can construct a derivation for $\Gamma\Rightarrow\Delta,\langle\Sigma\Rightarrow\Pi,[\Lambda\Rightarrow\Theta]\rangle$ as follows
  \[
    \AxiomC{$\+D_0'$}
    \UnaryInfC{$\Gamma\Rightarrow\Delta,\langle\Sigma\Rightarrow\Pi,[\Lambda'\Rightarrow\Theta']\rangle,[\Rightarrow(\Theta')^\#]$}
    \RightLabel{(IH)}
    \UnaryInfC{$\Gamma\Rightarrow\Delta,\langle\Sigma\Rightarrow\Pi,[\Lambda'\Rightarrow\Theta']\rangle$}
    \RightLabel{($r$)}
    \UnaryInfC{$\Gamma\Rightarrow\Delta,\langle\Sigma\Rightarrow\Pi,[\Lambda\Rightarrow\Theta]\rangle$}
    \DisplayProof
  \] 
  Otherwise, $(r)$ is among all the right rules plus $(\Diamond_L)$ and the principal formula is in the $(\Theta')^\#$-part of $\Lambda'\Rightarrow\Theta'$. If $(r)$ is $(\Diamond_L)$, then $(r)$ is applied to some $\Phi\Rightarrow\Psi,[A\Rightarrow]\in^{[\cdot]} \ \Rightarrow[\Rightarrow[\Lambda'\Rightarrow\Theta']]$. Hence the block $[\Lambda'\Rightarrow\Theta']$ can be rewritten as $[\Lambda'\Rightarrow\Theta_0,[\cdots,[\Phi\Rightarrow\Psi,[A\Rightarrow]]]]$ for some $\Theta_0$ and $\Xi$ and accordingly, $\Rightarrow\Theta^\#$ is of the form $\Rightarrow\Theta_0^\#,[\cdots,[\Rightarrow\Psi^\#]]$.

  Then $\+D_0=\+D_0',(\Gamma\Rightarrow\Delta,\langle\Sigma\Rightarrow\Pi,[\Lambda'\Rightarrow\Theta_0,[\cdots,[\Phi\Rightarrow\Psi,[A\Rightarrow]]]]\rangle,[\Rightarrow\Theta_0^\#,[\cdots,[\Rightarrow\Psi^\#]]])$ and $\+D$ is the following
  \[
    \AxiomC{$\+D_0'$}
    \UnaryInfC{$\Gamma\Rightarrow\Delta,\langle\Sigma\Rightarrow\Pi,[\Lambda'\Rightarrow\Theta_0,[\cdots,[\Phi\Rightarrow\Psi,[A\Rightarrow]]]]\rangle,[\Rightarrow\Theta_0^\#,[\cdots,[\Rightarrow\Psi^\#]]]$}
    \RightLabel{($\Diamond_L$)}
    \UnaryInfC{$\Gamma\Rightarrow\Delta,\langle\Sigma\Rightarrow\Pi,[\Lambda'\Rightarrow\Theta_0,[\cdots,[\Phi,\Diamond A\Rightarrow\Psi]]]\rangle,[\Rightarrow\Theta_0^\#,[\cdots,[\Rightarrow\Psi^\#]]]$}
    \DisplayProof
  \]

  Since $(w)$ is hp-admissible in \local, we can construct a derivation for $\Gamma\Rightarrow\Delta,\langle\Sigma\Rightarrow\Pi,[\Lambda\Rightarrow\Theta]\rangle$ as below.
  \[
    \AxiomC{$\+D_0'$}
    \UnaryInfC{$\Gamma\Rightarrow\Delta,\langle\Sigma\Rightarrow\Pi,[\Lambda'\Rightarrow\Theta_0,[\cdots,[\Phi\Rightarrow\Psi,[A\Rightarrow]]]]\rangle,[\Rightarrow\Theta_0^\#,[\cdots,[\Rightarrow\Psi^\#]]]$}
    \RightLabel{($w$)}
    \UnaryInfC{$\Gamma\Rightarrow\Delta,\langle\Sigma\Rightarrow\Pi,[\Lambda'\Rightarrow\Theta_0,[\cdots,[\Phi\Rightarrow\Psi,[A\Rightarrow]]]]\rangle,[\Rightarrow\Theta_0^\#,[\cdots,[\Rightarrow\Psi^\#,[\Rightarrow]]]]$}
    \RightLabel{(IH)}
    \UnaryInfC{$\Gamma\Rightarrow\Delta,\langle\Sigma\Rightarrow\Pi,[\Lambda'\Rightarrow\Theta_0,[\cdots,[\Phi\Rightarrow\Psi,[A\Rightarrow]]]]\rangle$}
    \RightLabel{($\Diamond_L$)}
    \UnaryInfC{$\Gamma\Rightarrow\Delta,\langle\Sigma\Rightarrow\Pi,[\Lambda'\Rightarrow\Theta_0,[\cdots,[\Phi,\Diamond A\Rightarrow\Psi]]]\rangle$}
    \DisplayProof
  \]
  We see the conclusion is just $\Gamma\Rightarrow\Delta,\langle\Sigma\Rightarrow\Pi,[\Lambda\Rightarrow\Theta]\rangle$.

  If $(r)$ is a right rule, then $\Lambda=\Lambda'$ and $(r)$ is applied to some $T\in^{[\cdot]} \ \Rightarrow[\Lambda'\Rightarrow\Theta']$. It can be shown in a similar way as what we have done in \textbf{(iii)}.

  ~

  \textbf{(iv).} ($r$) is ($\text{inter}_{\rightarrow}$) and concerns both $\langle\Sigma\Rightarrow\Pi,[\Lambda\Rightarrow\Theta]\rangle$ and a modal block in $\Delta$. Let $\Delta=\Delta',[\Phi\Rightarrow\Psi]$ and $\+D_0=\+D_0',(\Gamma\Rightarrow\Delta',\langle\Sigma\Rightarrow\Pi,[\Lambda\Rightarrow\Theta],[\Phi\Rightarrow\Psi^\flat]\rangle,[\Phi\Rightarrow\Psi],[\Rightarrow\Theta^\#])$ for some derivation $\+D_0'$. Then $\+D$ is
  \[
    \AxiomC{$\+D_0'$}
    \UnaryInfC{$\Gamma\Rightarrow\Delta',\langle\Sigma\Rightarrow\Pi,[\Lambda\Rightarrow\Theta],[\Phi\Rightarrow\Psi^\flat]\rangle,[\Phi\Rightarrow\Psi],[\Rightarrow\Theta^\#]$}
    \RightLabel{($\text{inter}_{\rightarrow}$)}
    \UnaryInfC{$\Gamma\Rightarrow\Delta',\langle\Sigma\Rightarrow\Pi,[\Lambda\Rightarrow\Theta],\rangle,[\Phi\Rightarrow\Psi],[\Rightarrow\Theta^\#]$}
    \DisplayProof
  \]
  and we can construct a derivation for $\Gamma\Rightarrow\Delta,\langle\Sigma\Rightarrow\Pi,[\Lambda\Rightarrow\Theta]\rangle$ as follows
  \[
    \AxiomC{$\+D_0'$}
    \UnaryInfC{$\Gamma\Rightarrow\Delta',\langle\Sigma\Rightarrow\Pi,[\Lambda\Rightarrow\Theta],[\Phi\Rightarrow\Psi^\flat]\rangle,[\Phi\Rightarrow\Psi],[\Rightarrow\Theta^\#]$}
    \RightLabel{(IH)}
    \UnaryInfC{$\Gamma\Rightarrow\Delta',\langle\Sigma\Rightarrow\Pi,[\Lambda\Rightarrow\Theta],[\Phi\Rightarrow\Psi^\flat]\rangle,[\Phi\Rightarrow\Psi]$}
    \RightLabel{($\text{inter}_{\rightarrow}$)}
    \UnaryInfC{$\Gamma\Rightarrow\Delta',\langle\Sigma\Rightarrow\Pi,[\Lambda\Rightarrow\Theta]\rangle,[\Phi\Rightarrow\Psi]$}
    \DisplayProof
  \] 
  We see the conclusion is just $\Gamma\Rightarrow\Delta,\langle\Sigma\Rightarrow\Pi,[\Lambda\Rightarrow\Theta]\rangle$.

  ~

  \textbf{(v).} ($r$) is ($\text{inter}_{\rightarrow}$) and concerns both $[\Rightarrow\Theta^\#]$ and an implication block in $\Delta$. In this case, the derivation $\+D$ is of the same form of the example we presented in (i), thus we can construct a derivation for $\Gamma\Rightarrow\Delta,\langle\Sigma\Rightarrow\Pi,[\Lambda\Rightarrow\Theta]\rangle$ similarly.

  ~

  \textbf{(vi).} ($r$) is ($\text{inter}_{\rightarrow}$) and concerns both $\langle\Sigma\Rightarrow\Pi,[\Lambda\Rightarrow\Theta]\rangle$ and $[\Rightarrow\Theta^\#]$. In this case, $\+D_0=\+D_0',(\Gamma\Rightarrow\Delta,\langle\Sigma\Rightarrow\Pi,[\Lambda\Rightarrow\Theta],[\Rightarrow]\rangle,[\Rightarrow\Theta^\#])$ for some derivation $\+D_0'$. Then $\+D$ is
  \[
    \AxiomC{$\+D_0'$}
    \UnaryInfC{$\Gamma\Rightarrow\Delta,\langle\Sigma\Rightarrow\Pi,[\Lambda\Rightarrow\Theta],[\Rightarrow]\rangle,[\Rightarrow\Theta^\#]$}
    \RightLabel{($\text{inter}_{\rightarrow}$)}
    \UnaryInfC{$\Gamma\Rightarrow\Delta,\langle\Sigma\Rightarrow\Pi,[\Lambda\Rightarrow\Theta]\rangle,[\Rightarrow\Theta^\#]$}
    \DisplayProof
  \]
  and we can construct a derivation as follows
  \[
    \AxiomC{$\+D_0'$}
    \UnaryInfC{$\Gamma\Rightarrow\Delta,\langle\Sigma\Rightarrow\Pi,[\Rightarrow],[\Lambda\Rightarrow\Theta]\rangle,[\Rightarrow\Theta^\#]$}
    \RightLabel{(IH)}
    \UnaryInfC{$\Gamma\Rightarrow\Delta,\langle\Sigma\Rightarrow\Pi,[\Rightarrow],[\Lambda\Rightarrow\Theta]\rangle$}
    \DisplayProof
  \]
  By Lemma \ref{Prop:empty-block-derivable}, we see $\Gamma\Rightarrow\Delta,\langle\Sigma\Rightarrow\Pi,[\Lambda\Rightarrow\Theta]\rangle$ is derivable as well.
  \qed
\end{proof}

\begin{definition}[Empty structure]
    We inductively define an empty structure (of nested modal blocks) $\+E_n$ as below.
    \begin{itemize}
      \item $\+E_0=[\Rightarrow]$ is an empty structure;
      \item for $n\geq 1$, $\+E_n=[\Rightarrow \+E_{n-1}]$ where $\+E_{n-1}$ is an empty structure.
    \end{itemize}
\end{definition}  

\begin{proposition}\label{prop:empty-structure}
    If $\Gamma\Rightarrow\Delta,\+E_n$ is derivable where $\+E_n$ is an empty structure for some $n$, then $\Gamma\Rightarrow\Delta$ is also derivable. 
\end{proposition}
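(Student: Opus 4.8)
The plan is to prove this by a straightforward induction on $n$, relying only on the rule $(\axiomd)$, which is available in the calculi in which empty structures can arise by backward proof search, namely \cdidd\ and \cdid\ (and their cumulative variants \textbf{C}\cdid); recall that $(\axiomd)$ may be applied inside an arbitrary context $G\{\}$. Throughout, ``derivable'' is understood in one of these calculi.

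For the base case $n=0$ we have $\+E_0=[\Rightarrow]$, so $\Gamma\Rightarrow\Delta,[\Rightarrow]$ is derivable, and extending its derivation by a single application of $(\axiomd)$ at the root (with the empty context) yields a derivation of $\Gamma\Rightarrow\Delta$. For the inductive step, assume $n\ge 1$ and that $\Gamma\Rightarrow\Delta,\+E_n$ is derivable. Unfolding, $\+E_n=[\Rightarrow[\Rightarrow\cdots[\Rightarrow[\Rightarrow]]\cdots]]$ with $n+1$ nested blocks, so its innermost occurrence of $[\Rightarrow]$ sits at the position marked by the context
\[
  G\{\}\ =\ \Gamma\Rightarrow\Delta,[\Rightarrow[\Rightarrow\cdots[\Rightarrow\ \{\}\,]\cdots]]
\]
with $n-1$ intermediate layers $[\Rightarrow\cdot]$; that is, $\Gamma\Rightarrow\Delta,\+E_n=G\{\ \Rightarrow[\Rightarrow]\,\}$, which is exactly the premise shape of $(\axiomd)$ with this context and with empty $\Gamma'$ and $\Delta'$. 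One further application of $(\axiomd)$ therefore produces a derivation of $G\{\ \Rightarrow\ \}=\Gamma\Rightarrow\Delta,\+E_{n-1}$, and the induction hypothesis then gives a derivation of $\Gamma\Rightarrow\Delta$.

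Equivalently, the whole argument is the observation that $n+1$ successive applications of $(\axiomd)$ — each removing the current innermost $[\Rightarrow]$ — transform $\Gamma\Rightarrow\Delta,\+E_n$ into $\Gamma\Rightarrow\Delta$. I do not anticipate any real obstacle; the only point needing (minor) care is to check at each step that the sequent to which $(\axiomd)$ is applied is genuinely of the form $G\{\Gamma'\Rightarrow\Delta',[\Rightarrow]\}$, which is immediate once $\+E_n$ is written out. It is worth noting that the statement depends essentially on $(\axiomd)$: in \local\ or \cdikk\ it fails already for $n=0$ — for instance $\square\bot\Rightarrow[\Rightarrow]$ is derivable (by $(\square_L)$ then $(\bot_L)$) while $\square\bot\Rightarrow$ is not — which is precisely why Lemma~\ref{Prop:empty-block-derivable} carries the hypothesis that $\Delta$ is not $[\cdot]$-free.
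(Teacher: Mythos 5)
Your proof is correct and is essentially the paper's own argument: the paper likewise disposes of $\+E_n$ by finitely many applications of the $(\axiomd)$ rule, and your induction on $n$ (peeling off the innermost $[\Rightarrow]$ at each step) merely makes the bookkeeping explicit. The closing remark that the statement genuinely needs $(\axiomd)$ — and hence fails in \local, which is why Lemma~\ref{Prop:empty-block-derivable} assumes $\Delta$ is not $[\cdot]$-free — is accurate and a nice touch, though not required.
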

  
\begin{proof}
  Since $\+E_n$ is a finite structure which contains only empty modal blocks, $\Gamma\Rightarrow\Delta$ is derivable by applying finitely-many ($\axiomd$) rule to $\Gamma\Rightarrow\Delta,\+E_n$.
\end{proof}

{\noindent \textbf{Lemma~\ref{lemma:dp-cdid}.}} 
{\it Suppose $S =\ \Rightarrow A_1, \ldots, A_m, \langle G_1\rangle, \ldots, \langle G_n\rangle,[H_1],\ldots,[H_l]$ is provable in \cdid (resp. \cditt), where $A_i$'s are formulas, $G_j$ and $H_k$'s are sequents. 
Furthermore, each $H_k$ is of the form $\Rightarrow\Theta_k$ and for each sequent $T\in^{[\cdot]} H_k$, $T$ has an empty antecedent. We claim there is at least one $\Rightarrow A_i, \Rightarrow \langle G_j\rangle$ or $\Rightarrow[H_k]$ which is provable for some $i\leq m, j\leq n$ or $k\leq l$ in \cdid (resp. \cditt).}
  
\begin{proof}
  We show the case of \cdid~by double induction on the height of a proof of $S$ and the modal depth of $S$. 
  If $S$ is an axiom, then some $\Rightarrow\langle G_j\rangle$ or $\Rightarrow[H_k]$ must be an axiom. Otherwise $S$ it is obtained by applying a rule to some $A_i,\langle G_j\rangle$ or $[H_k]$.
  
  First, suppose $S$ is derived by applying a rule to $A_1$ (to simplify the index). We only illustrate the following three cases. 
  Let $A_1 = B\land C$ and $S$ is derived by
    \[
    \AxiomC{$\Rightarrow B, A_2, \ldots, A_m, \overline{\langle G_j\rangle},\overline{[H_k]}$}
    \AxiomC{$\Rightarrow C, A_2, \ldots, A_m, \overline{\langle G_j\rangle},\overline{[H_k]}$}
    \RightLabel{($\wedge_R$)}
    \BinaryInfC{$\Rightarrow B\wedge C, A_2, \ldots, A_m, \overline{\langle G_j\rangle},\overline{[H_k]}$}
    \DisplayProof
    \]
  By IH on the first premise, if either for some $i\geq 2$, sequent $\Rightarrow A_i$ is provable, or some $\langle G_j\rangle$ is provable, then we are done; otherwise $\Rightarrow B$ must be provable while all the other $A_i$ or $\langle G_j\rangle$ are not. In this case by IH on the second premise it follows that $\Rightarrow C$ must be provable; then we conclude $\Rightarrow B\wedge C$ is provable by applying $(\land_R)$. 
    
  Let $A_1 = \Box B$ and $S$ is derived by
    \[
    \AxiomC{$\Rightarrow A_2, \ldots, A_m, \overline{\langle G_j\rangle},\overline{[H_k]},[\Rightarrow B]$}
    \RightLabel{($\square_R$)}
    \UnaryInfC{$\Rightarrow \square B,A_2, \ldots, A_m, \overline{\langle G_j\rangle},\overline{[H_k]}$}
    \DisplayProof
    \]
  By IH, if one of these $A_2, \ldots, A_m,\langle G_j\rangle, [H_k]$'s is provable, then we are done; otherwise $\Rightarrow [\Rightarrow B]$ is provable, by ($\square_R$), we see $\Rightarrow \square B$, i.e. $\Rightarrow A_1$ is provable as well.
  
  Let $A_1=\Diamond B$, assume $l\geq 1$ and the modal block where $B$ is propagated is just $[H_1]$. Recall $H_1=\ \Rightarrow\Theta_1$ and then $S$ is derived by
    \[
    \AxiomC{$\Rightarrow \Diamond B,A_2, \ldots, A_m, \overline{\langle G_j\rangle},[H_2],\ldots,[H_l],[\Rightarrow\Theta_1,B]$}
    \RightLabel{($\Diamond_R$)}
    \UnaryInfC{$\Rightarrow \Diamond B,A_2, \ldots, A_m, \overline{\langle G_j\rangle},[H_2],\ldots,[H_l],[\Rightarrow\Theta_1]$}
    \DisplayProof
    \]
 By IH, if one of these $A_i,\langle G_j\rangle,[H_k]$'s where $i,j\geq 1$ and $k\geq 2$ is provable, then we are done; otherwise $[\Rightarrow\Theta_1,B]$ is provable. Denote $\Rightarrow\Theta_1,B$ by $T$, by definition, we see $md(T)< md(S)$, so we can apply IH to $T$. It follows that either $B$ or an object $\+O\in\Theta_1$ is provable. 
  If such a $\+O$ is provable, by weakening, we see $\Rightarrow\Theta_1$ is provable as well; otherwise $B$ is provable, also by weakening, it follows directly $\Rightarrow[\Rightarrow B]$ is provable. Then we can construct a derivation as follows
    \[
    \AxiomC{$\Rightarrow[\Rightarrow B]$}
    \RightLabel{($w_R$)}
    \UnaryInfC{$\Rightarrow[\Rightarrow B],\Diamond B$}
    \RightLabel{($\Diamond_R$)}
    \UnaryInfC{$\Rightarrow[\Rightarrow],\Diamond B$}
    \RightLabel{(\axiomd)}
    \UnaryInfC{$\Rightarrow \Diamond B$}
    \DisplayProof
    \]
  Thus $\Rightarrow\Diamond B$ is provable and we are done.
  
  Next, suppose $S$ is derived by applying a rule within some $\langle G_j\rangle$ or $[H_k]$. Left rule applications can only be applied to $\langle G_j\rangle$ and are trivial, as for right rule applications, the reasoning is the same as above. 
    \iffalse
    We only consider $(\text{inter}_\downarrow)$ here and to simplify the index assume it is applied to $\langle G_1\rangle$. 
    Let $G_1=\Sigma \Rightarrow\Pi,[\Phi\Rightarrow\Psi]$ and $S$ is derived by
    \[
    \AxiomC{$\Rightarrow A_1, \ldots, A_m, \langle\Sigma\Rightarrow\Pi,[\Phi\Rightarrow\Psi]\rangle,\langle G_2\rangle,\ldots, \langle G_n\rangle,\overline{[H_k]},[\Rightarrow\Psi^\sharp]$}
    \RightLabel{$(\text{inter}_\downarrow)$}
    \UnaryInfC{$\Rightarrow A_1, \ldots, A_m, \langle\Sigma\Rightarrow\Pi,[\Phi\Rightarrow\Psi]\rangle,\langle G_2\rangle,\ldots, \langle G_n\rangle,\overline{[H_k]}$}
    \DisplayProof
    \]
    By IH, if one of these $A_i,\langle G_j\rangle,[H_k]$'s is provable, then we are done; otherwise $\Rightarrow[\Rightarrow \Psi^\sharp]$ is provable. By weakening and Proposition \ref{prop:context-provability}, it follows that $\Rightarrow \langle \Sigma\Rightarrow\Pi,[\Phi\Rightarrow\Psi]\rangle$, i.e. $\Rightarrow \langle G_1\rangle$ is provable as well.
    \fi
  
  Lastly, we consider $(\text{inter}_\rightarrow)$ which is applied between some $\langle G_j\rangle$ and $[H_k]$. We further assume such $\langle G_j\rangle$ and $[H_k]$ are just $\langle G_1\rangle$ and $[H_1]$ in order to simplify the index. 
  Let $G_1=\Sigma \Rightarrow\Pi$, recall $H_1=\ \Rightarrow\Theta_1$ and then $S$ is derived by
    \[
    \AxiomC{$\Rightarrow \overline{A_i}, \langle G_2\rangle,\ldots, \langle G_n\rangle,[H_2],\ldots,[H_l],\langle\Sigma\Rightarrow\Pi,[\Rightarrow\Theta_1^\flat]\rangle,[\Rightarrow\Theta_1]$}
    \RightLabel{$(\text{inter}_\rightarrow)$}
    \UnaryInfC{$\Rightarrow \overline{A_i}, \langle G_2\rangle,\ldots, \langle G_n\rangle,[H_2],\ldots,[H_l],\langle\Sigma\Rightarrow\Pi\rangle,[\Rightarrow\Theta_1]$}
    \DisplayProof
    \]
  By IH, if one of these $A_i,\langle G_j\rangle,[H_k]$'s where $i\geq 1$ and $j,k\geq 2$, or $[\Rightarrow\Theta_1]$ is provable, then we are done; otherwise $\Rightarrow \langle\Sigma\Rightarrow\Pi,[\Rightarrow\Theta_1^\flat]\rangle$ is provable. Recall for each sequent $T\in^+ H_1$, $T$ has an empty antecedent, so 
  $[\Rightarrow\Theta_1^\flat]$ is an empty structure. By Proposition \ref{prop:empty-structure}, we conclude $\Rightarrow \langle\Sigma\Rightarrow\Pi\rangle$ is provable in \cdid.

  For \cdit, the proof is almost the same as what we have done for \cdid, the only different sub-case is the one when $A_1=\Diamond B$ and in the premise $B$ is provable. In \cdit, we can construct a derivation as
  \[
    \AxiomC{$\Rightarrow B$}
    \RightLabel{($w_R$)}
    \UnaryInfC{$\Rightarrow B, \Diamond B$}
    \RightLabel{($\axiomt_\Diamond$)}
    \UnaryInfC{$\Rightarrow \Diamond B$}
    \DisplayProof
  \]
  Then we are done.
  \qed
\end{proof}
\section{Proofs in Section \ref{sec:termination}}

{\noindent \textbf{Proposition~\ref{sequent-inclusion}.}} 
{\it Let $S=\Gamma\Rightarrow\Delta$. If $S$ is saturated with respect to $(\text{trans})$, $(\text{inter}_{\rightarrow})$ and ($\text{inter}_{\downarrow}$), then for $\langle\Sigma\Rightarrow\Pi\rangle\in\Delta$, we have $\Gamma\Rightarrow\Delta\subseteq^\-S\Sigma\Rightarrow\Pi$.}

\begin{proof}
    It follows directly from Definition \ref{def:saturation-inclusion} and saturation conditions for $(\text{trans})$, $(\text{inter}_{\rightarrow})$ and ($\text{inter}_{\downarrow}$).\qed
\end{proof}

{\noindent \textbf{Proposition~\ref{prop:realization}.}} 
    {\it Let $S=\Gamma\Rightarrow\Delta,\langle S_1\rangle,[S_2]$, where $S_1=\Sigma\Rightarrow\Pi,[\Lambda\Rightarrow\Theta]$ and $S_2= \ \Rightarrow_{\Lambda\Rightarrow\Theta}\Theta^\#$ and $\Gamma\subseteq\Sigma$. 
    If $S_1$ is saturated with respect to all the left rules in {\rm \textbf{C}}\cdikk, then for the sequent $S'=\Gamma\Rightarrow\Delta,\langle S_1\rangle,[f_{S_1}(S_2)]$ which is obtained by the realization procedure in Definition \ref{def:realization}, we have
    \begin{enumerate}
      \item[(i).] $S'$ is saturated with respect to all the left rules applied to or within $[f_{S_1}(S_2)]$;
      \item[(ii).] $f_{S_1}(S_2)\subseteq^\-S \Lambda\Rightarrow\Theta$;
      \item[(iii).] $S'$ can be obtained by applying left rules of  {\rm \textbf{C}}\cdikk~to $[S_2]$ in $S$.
    \end{enumerate}
    }

\begin{proof}
    We abbreviate $f_{S_1}(S_2)$ as $f(S_2)$ in the proof.
    \begin{enumerate}
      \item[(i).] We only show $S'$ is saturated with $(\Diamond_L)$ and $(\square_L)$ on the block $[f(S_2)]$. If $\square A\in\Gamma$, then by saturation condition, we have $A\in\Theta$. By definition, we see that $A\in Fm(S_2)$ which is inherited in the succedent of $f(S_2)$ as well. If $\Diamond A\in\Gamma$ and $A\in\Lambda$, in this case, by the definition of tracking record, we see $A\in\mathfrak{G}(\Lambda\Rightarrow\Theta,\Gamma)$, which is the antecedent of $f(S_2)$. 
      Left rule saturations for formulas deeply within the block $[f(S_2)]$ can be verified similarly. 
      \item[(ii).] 
      Note that for $T\in^+ S_2$, $T$ is of the form $\Rightarrow_{\Phi\Rightarrow\Psi}\Psi^\#$ for some $\Phi\Rightarrow\Psi\in^+\Lambda\Rightarrow\Theta$. 
      We claim that $f(T)\subseteq^\-S \Phi\Rightarrow\Psi$. We show this by induction on the structure of $T$. If $T$ is block-free, by definition, $Ant(f(T))=\mathfrak{G}(\Phi\Rightarrow\Psi,\Gamma)\subseteq Ant(\Phi\Rightarrow\Psi)$ and then we are done. 
      Otherwise, for each $T'\in^{[\cdot]}T$ and $T'=\ \Rightarrow_{\Phi'\Rightarrow\Psi'}\Psi'^\#$, By IH, $f(T')\subseteq^\-S \Phi'\Rightarrow\Psi'$. 
      The other direction can be verified similarly. Thus $f(T)\subseteq^\-S \Phi\Rightarrow\Psi$. 
      
      \noindent Since $S_2\in^+S_2$, as a result, we have $f(S_2)\subseteq^\-S \Lambda\Rightarrow\Theta$.
      \item[(iii).] Directly from (i).
      \qed
    \end{enumerate}
\end{proof}

\begin{lemma}\label{lem:boundimp}
	Given a sequent $S_0$, let $\+D =  PROC_0(S_0)$,   $S$  be a global-R2-saturated sequent occurring in $\+D$,  $T\in^+ S$,  and  $\Omega_T = \{T' \in^{\langle\cdot\rangle} T\}$, then $\Omega_T$  is finite and bounded by $O(|S_0|^{|S_0|})$.
\end{lemma}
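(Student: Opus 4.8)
The plan is to regard $\{T\}\cup\Omega_T$ as a finite tree in which $T'$ is a child of $T''$ iff $T'\in^{\langle\cdot\rangle}_0 T''$, and to bound its branching degree and its height separately by $n:=|S_0|$. I begin with two structural remarks about $\+D=PROC_0(S_0)$. Since $PROC_0$ only ever applies rules of \textbf{C}\cdikk\ (indeed only R1--R3 rules, via the macro-steps $\textbf{EXP1}$--$\textbf{EXP3}$), and all of these rules have the subformula property, every formula occurring anywhere in $\+D$ is a subformula of $S_0$; in particular $S_0$ has at most $n$ pairwise distinct implication subformulas. Secondly, the only rule of \textbf{C}\cdikk\ that can place a fresh $\langle\cdot\rangle$-block directly into a succedent is $(\supset_{R_2}')$, since $(\text{trans})$ and $(\text{inter}_{\rightarrow})$ merely enlarge $\langle\cdot\rangle$-blocks that are already present. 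Hence every $T'$ with $T'\in^{\langle\cdot\rangle}_0 T''$ is (the interior of) a block $\langle A\Rightarrow B\rangle$ created by a single application of $(\supset_{R_2}')$ to a formula $A\supset B\in Suc(T'')$, and from that point on $A\in Ant(T')$ (the calculus being cumulative).

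The branching bound is the easy half. Fix a node $T''\in\{T\}\cup\Omega_T$; its children are exactly the $\langle\cdot\rangle$-blocks directly in $Suc(T'')$. By the remark above each of them was created from an implication formula of $Suc(T'')$; moreover, by the saturation condition for $(\supset_R)$ together with constraint (ii) (no redundant rule application), once a block $\langle\Sigma\Rightarrow\Pi\rangle$ with $A\in\Sigma,\,B\in\Pi$ is present, $(\supset_{R_2}')$ is never applied again to $A\supset B$ inside $T''$. Thus distinct children of $T''$ stem from distinct implication subformulas of $S_0$, so $T''$ has at most $n$ children.

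For the height bound, consider a branch $T_i\in^{\langle\cdot\rangle}_0 T_{i-1}\in^{\langle\cdot\rangle}_0\cdots\in^{\langle\cdot\rangle}_0 T_0=T$, where $T_{j+1}=\langle A_j\Rightarrow B_j\rangle$ was created by $(\supset_{R_2}')$ applied to $A_j\supset B_j\in Suc(T_j)$ under the side condition $A_j\notin Ant(T_j)$. I claim the implication formulas $A_0\supset B_0,\dots,A_{i-1}\supset B_{i-1}$ are pairwise distinct, which by the subformula bound yields $i\le n$. The key point is that in $PROC_0$ an R3-rule is applied to a leaf only after that leaf is global-$R2$-saturated (see Algorithm~\ref{PROCEDURE0}), and, since $\textbf{EXP3}$ only touches the implication formulas directly in a given $Suc(T_j)$, two consecutive blocks on the branch cannot be generated within one application of $\textbf{EXP3}$; hence each $T_{k+1}$ is produced from a leaf that is global-$R2$-saturated. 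At that instant all of $T_0,\dots,T_k$ are $R2$-saturated, so the $(\text{trans})$-saturation condition gives $Ant(T_0)\subseteq Ant(T_1)\subseteq\cdots\subseteq Ant(T_k)$; combined with $A_j\in Ant(T_{j+1})$ for $j<k$ this yields $\{A_0,\dots,A_{k-1}\}\subseteq Ant(T_k)$, whence the side condition forces $A_k\notin\{A_0,\dots,A_{k-1}\}$ and therefore $A_k\supset B_k\ne A_j\supset B_j$ for all $j<k$. An induction on $k$ finishes the claim.

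Putting the two bounds together, $\{T\}\cup\Omega_T$ is a tree of height $\le n$ with branching $\le n$, so $|\Omega_T|\le\sum_{d=1}^{n}n^{d}\le 2\,n^{n}=O(|S_0|^{|S_0|})$, the geometric sum being dominated up to a constant factor by its last term $n^n$; in particular $\Omega_T$ is finite. I expect the genuine difficulty to lie in the height bound, and more precisely in making the timing precise: one must verify that the antecedent inclusions supplied by $(\text{trans})$-saturation are already in force at the moment each successive block on a $\langle\cdot\rangle$-chain is created, rather than only in the final saturated sequent. This is exactly where the argument has to exploit the phase discipline of $PROC_0$ (R1 before R2 before R3) and the fact that consecutive blocks of a $\langle\cdot\rangle$-chain are produced in distinct expansion rounds.
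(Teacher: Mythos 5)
Your proof is correct and follows essentially the same route as the paper's: bound the branching of the $\langle\cdot\rangle$-tree by the number of implication subformulas of $S_0$, and bound the depth by showing that once $A\supset B$ spawns a block, $A$ enters the antecedent and is propagated down the chain by R2-saturation, so the side condition of the modified implication rule blocks any reuse of $A\supset B$ further along the chain. Your extra care about the timing (that R2-saturation is already in force when each successive block is created, thanks to the phase discipline of $\mathrm{PROC}_0$) is exactly the observation the paper makes when it notes that each $T_i$ in the chain is global-R2-saturated.
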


\begin{proof}
    Let $S$ and  $T\in^+ S$ as in the hypothesis,  and let us call any $T^* \in_0^{\langle\cdot\rangle} T$ a $\langle\cdot\rangle$-child of $T$. Since each $\langle\cdot\rangle$-child of $T$ is generated  by some formula  $C\supset D\in Sub(S_0)$,  $T$ has at most $O(|S_0|)$  $\langle\cdot\rangle$-children. 
    It suffices to show that for any  sequent $S'=S$ or $S'$ occurring above $S$ in $\+D$, every  $\in_0^{\langle\cdot\rangle} $-chain of  implication blocks starting   from $T$ in $S'$ has a length  bounded by $O(|S_0|)$. 
    To this purpose let $\ldots T_{i+1} \in_0^{\langle\cdot\rangle} T_i \in_0^{\langle\cdot\rangle} \ldots \in_0^{\langle\cdot\rangle} T_0 = T$ be such a chain. Observe that each $T_i$ is Global R2-saturated by definition of $PROC_0$. For any $i$, $T_{i+1}$ is a  $\langle\cdot\rangle$-child of $T_i$, thus it must be:  
$T_i = \Sigma_i \Rightarrow \Pi_i, C \supset D, \langle T_{i+1}\rangle$, where: 
    \begin{quote}
    - $C \supset D\in Sub(S_0)$\\
    - $C\not\in \Sigma_i$\\
    - $T_{i+1} = \Sigma_{i+1} \Rightarrow \Pi_{i+1}$\\
    - $\Sigma_i \subseteq \Sigma_{i+1}$\\
    - (*) $C\in \Sigma_{i+1}$\\
    - $D\in \Pi_{i+1}$
    \end{quote}
    Observe that for  any  $T_j$ in the chain with $j \geq i+1$,  and  $T_j= \Sigma_j \Rightarrow \Pi_j$,   even if $C \supset D  \in \Pi_j$, the formula $C \supset D$  \emph{cannot} be used to generate again a  $\langle\cdot\rangle$-child of
    $T_j$ because of: (*), the fact $\Sigma_{i+1}\subseteq \Sigma_j$ (since $T_j$ is R2- saturated) and the modified implication rule. Thus the set of implication subformulas  of $S_0$ that can generate   implication blocks 
    \emph{strictly decreases} along the chain, and we may conclude that the chain  has length $O(|S_0|)$.
    \qed
\end{proof}

{\noindent \textbf{Proposition~\ref{proc0term}.}} 
{\it Given a sequent $S_0$, ${\rm PROC}_0(S_0)$ produces a finite derivation with all the leaves axiomatic or at least one global-R3-saturated leaf.}

\begin{proof}
	(Sketch) We prove first that $\text{PROC}_0(S_0)$ terminates by producing a finite derivation. Then either  all leaves are axiomatic, or there must be at least one leaf that is {\rm global-R$3$-saturated}, otherwise  one of the leaves $S$ would be selected in Step 8 and  further expanded,   contradicting the fact that is indeed a leaf.  Thus it suffices to  prove that the procedure produces a finite  derivation. 
    Let  $\+D$ be the derivation built by $\text{PROC}_0(S_0)$.  First we claim that all the branches of $\+D$ are finite. 
    Suppose for the sake of a contradiction that $\+D$ contains an infinite branch ${\cal B} = S_0,\ldots, S_i, \ldots $. The branch is generated by applying repeatedly $\textbf{EXP1}(\cdot), \textbf{EXP2}(\cdot)$ and $\textbf{EXP3}(\cdot)$ to each   $S_i$. 
    Since each of these sub-procedures terminates, the three of them must infinitely alternate on the branch. It is easy to see that if  $T_i\in^+ S_i$ satisfies a saturation condition for a rule ($r$), it will still satisfy it in all $S_j$ with $j >i$. 
    We can conclude that the branch must contain infinitely many phases of $\textbf{EXP3}(\cdot)$ each time generating  new implication blocks; therefore sequents  $S_i\in {\cal B}$ will eventually contain an \emph{unbounded} number of implication blocks, contradicting the previous Lemma \ref{lem:boundimp}.
	Thus each branch of the derivation $\+D$ built by $\text{PROC}_0(S_0)$ is finite. 
    To conclude the proof, observe that $\+D$ is a tree whose branches have a finite length and is finitely branching (namely each node/sequent has at most two successors, as the rules in \textbf{C}\cdikk~are at most binary), therefore $\+D$ is finite.
	\qed
\end{proof}

\begin{proposition}\label{prop:md-preserve}
    Let $(r)$ be a rule in {\rm \textbf{C}}\cdikk~and of one of the following forms,
    \[
        \AxiomC{$S_0$}
        \RightLabel{($r$)}
        \UnaryInfC{$S$}
        \DisplayProof
        \quad
        \AxiomC{$S_1$}
        \AxiomC{$S_2$}
        \RightLabel{($r$)}
        \BinaryInfC{$S$}
        \DisplayProof
      \]
    If $md(S)=k$, then in both cases, the modal degree of the premise(s) remains the same, i.e. $md(S_0)=md(S_1)=md(S_2)=k$.
\end{proposition}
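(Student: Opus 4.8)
The plan is to prove this by a case analysis on the rule $(r)$, after two small preparatory observations. First I would record that the $\flat$- and $\#$-operators never raise modal degree: a straightforward induction on the structure of $\Theta$ gives $\textit{md}(\Theta^\flat)\le\textit{md}(\Theta)$ and $\textit{md}(\Theta^\#)\le\textit{md}(\Theta)$, since both operators only \emph{delete} material (formulas, $\langle\cdot\rangle$-blocks, antecedents of modal blocks) and then recurse into the surviving $[\cdot]$-blocks. Second, I would note that $\textit{md}$ is monotone under insertion of material and is ``context-stable'': if $S'$ is obtained from $S$ by inserting further formulas and/or blocks at arbitrary depth and $\textit{md}(S')=\textit{md}(S)$, then $\textit{md}(G\{S'\})=\textit{md}(G\{S\})$ for every context $G\{\}$. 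This follows by induction on $G\{\}$, because $\textit{md}$ of a sequent-in-context is built from the $\textit{md}$ of the plugged sequent using only $\max$ and $(\cdot)+1$, both monotone in that argument. Since every rule of \textbf{C}\cdikk~is cumulative — the principal formula or block is retained in the premise(s) and each premise is obtained from the conclusion by \emph{adding} material — the inequality $\textit{md}(S_i)\ge\textit{md}(S)$ is immediate from monotonicity, and by context-stability it remains only to check, rule by rule, that in the active sub-sequent the material added when passing to a premise does not increase the modal degree.

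The propositional rules are immediate: whenever a subformula $C$ of the principal formula $F$ is inserted at $F$'s nesting level (as in $(\wedge_L')$, $(\wedge_R)$, $(\vee_L)$, $(\vee_R)$, $(\supset_L)$, $(\supset_{R_1}')$), one has $\textit{md}(C)\le\textit{md}(F)$, so nothing grows. The one propositional case worth a word is $(\supset_{R_2}')$, which introduces a fresh implication block $\langle A\Rightarrow B\rangle$: here one uses that $\langle\cdot\rangle$-blocks contribute to $\textit{md}$ \emph{without} the $+1$ increment, so the new contribution $\textit{md}(A\Rightarrow B)=\max(\textit{md}(A),\textit{md}(B))$ is already bounded by $\textit{md}(A\supset B)$, which occurs in the succedent. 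For the modal rules that open a $[\cdot]$-block, namely $(\square_R')$ (adding $[\Rightarrow A]$) and $(\Diamond_L)$ (adding $[A\Rightarrow]$), the new block contributes $\textit{md}(A)+1=\textit{md}(\square A)$, resp. $\textit{md}(\Diamond A)$, which is shadowed by the $\square A$, resp. $\Diamond A$, already present at the outer level; $(\square_L)$ and $(\Diamond_R)$ are handled the same way, the subformula $A$ being pushed \emph{into} an already present $[\cdot]$-block whose degree can then rise to at most $\textit{md}(\square A)$, resp. $\textit{md}(\Diamond A)$. For $(\text{trans})$ the copied multiset $\Gamma'$ is taken from the outer antecedent, so $\textit{md}(\Gamma')$ cannot exceed the outer degree. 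Finally, for $(\text{inter}_{\rightarrow})$ and $(\text{inter}'_\downarrow)$ I would combine the first auxiliary fact with the observation that the newly created $[\cdot]$-block is a $\flat$- or $\#$-shrinking of a block $[\Lambda\Rightarrow\Theta]$ that already appears in the conclusion — at the top level for $(\text{inter}_{\rightarrow})$, one level inside the $\langle\cdot\rangle$-block for $(\text{inter}'_\downarrow)$ — so that, since $\langle\cdot\rangle$-blocks carry no increment, in both cases the new contribution $\textit{md}(\Lambda\Rightarrow\Theta^\flat)+1$, resp. $\textit{md}(\Lambda\Rightarrow\Theta^\#)+1$, is at most $\textit{md}(\Lambda\Rightarrow\Theta)+1$, which is already part of $\textit{md}(S)$. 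The axioms $(\bot_L)$, $(\top_R)$, $(\text{id})$ have no premises and the claim is vacuous.

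The main obstacle, such as it is, is purely bookkeeping discipline around the asymmetry between $[\cdot]$- and $\langle\cdot\rangle$-blocks: every nontrivial case reduces to checking that the $+1$ carried by a freshly created or enlarged $[\cdot]$-block is already witnessed in the conclusion — by a $\square$- or $\Diamond$-formula for $(\square_R')$, $(\Diamond_L)$, $(\square_L)$, $(\Diamond_R)$, and by a sibling or parent $[\cdot]$-block (whose body $\flat$ and $\#$ can only shrink) for the two interaction rules. It is precisely this asymmetry that ties the statement to \textbf{C}\cdikk~and not to its $(\axiomd)$-extension: the rule $(\axiomd)$ creates an empty block $[\Rightarrow]$ out of nothing, and thus genuinely adds $1$ to the modal degree of a $[\cdot]$-free sequent, so it would have to be excluded if the modal extensions were included in the statement.
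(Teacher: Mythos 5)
Your proof is correct and follows the same route as the paper, whose entire argument is the one-line remark ``by checking all the rules in \textbf{C}$_{\fik^+}$ one by one''; you simply carry out that rule-by-rule check in full, correctly exploiting the asymmetry that $[\cdot]$-blocks add $1$ to the modal degree while $\langle\cdot\rangle$-blocks do not, and that $\flat$ and $\#$ never raise it. Your closing observation that the $(\axiomd)$ rule genuinely fails the property (creating $[\Rightarrow]$ from nothing) is also accurate and consistent with the paper's separate treatment of that rule in the termination argument for the extensions.
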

  
\begin{proof}
    By checking all the rules in \textbf{C}\cdikk~one by one.\qed
\end{proof}

{\noindent \textbf{Theorem~\ref{termination}.}} 
{\it Let $A$ be a formula. Proof-search for $\Rightarrow A$ in {\rm \textbf{C}}\cdikk~terminates with a finite derivation in which either all the leaves are axiomatic or there is at least one global-saturated leaf.}

\begin{proof}
    By Proposition \ref{proc0term}, $\text{PROC}_0(\cdot)$ terminates. 
    Also, by Proposition \ref{prop:expi-terminates}, each phases of $\textbf{EXP4}(\cdot)$ terminates. Hence 
    it suffices to show that in the whole process of $\text{PROC}(A)$, $\text{PROC}_0(\cdot)$ can only be iterated for finitely many rounds.
  
    Assume $md(A)=k$. The proof-search procedure is initialized by $\text{PROC}_0(\Rightarrow A)$, if $A$ is provable, then the proof is built by $\text{PROC}_0(\Rightarrow A)$, otherwise we obtain a finite derivation $\+D$ with at least one global-R3-saturated leaf. 
    Take such a global-R3-saturated leaf $S$ from $\+D$. Then following the algorithm, first $(\text{inter}_{\downarrow})$ is applied to $S$, we see that 
    $S$ is expanded to some $S'$ where each implication block of the form $\langle\Sigma\Rightarrow\Pi,[\Lambda_0\Rightarrow\Theta_0]\rangle\in^+S$ is expanded to $\langle\Sigma\Rightarrow\Pi,[\Lambda_0\Rightarrow\Theta_0]\rangle,[\Rightarrow\Theta_0^\#]$. After the realization procedure, the previously produced modal block $[\Rightarrow\Theta_0^\#]$ is expanded to some $[\Lambda_1\Rightarrow\Theta_1]$. Now in order to make the whole $S'$ global-R3-saturated, we turn to the previous step $\text{PROC}_0(\cdot)$ again. In this case, recall that $S$ is already global-R3-saturated and the only difference between $S$ and $S'$ is such blocks like $[\Lambda_1\Rightarrow\Theta_1]$, namely the modal blocks produced in the previous $\textbf{EXP4}(\cdot)$ procedure. 
    
    In the next round of $\text{PROC}_0(\cdot)$, rules from R1 and R2 are first applied to these $[\Lambda_1\Rightarrow\Theta_1]$-blocks and we obtain $[\Lambda_2\Rightarrow\Theta_2]$. Now when we turn to R3-rule, namely $(\supset_R)$, it can only be applied to $\Lambda_2\Rightarrow\Theta_2$ itself. We claim for such a sequent $\Lambda_2\Rightarrow\Theta_2$, $md(\Lambda_2\Rightarrow\Theta_2)<md(S)$. 
    Assume $[\Lambda_2\Rightarrow\Theta_2]\in^+S''$ where $S''$ is the expansion of $S$ after all the previous procedures before we obtain $\Lambda_2\Rightarrow\Theta_2$. 
    Note that $S$ is obtained by applying R1-R3 rules to $\Rightarrow A$, by Proposition \ref{prop:md-preserve}, we have $md(S)=md(A)=k$. Next, consider the steps expanding $S$ to $S''$. By Proposition \ref{prop:md-preserve}, backward $(\text{inter}_{\downarrow})$ application also preserves modal degree of a sequent. While for the realization procedures, according to Proposition \ref{prop:realization} (iii), the procedures can be simulated by left rule applications in \textbf{C}\cdikk, which by Proposition \ref{prop:md-preserve}, preserve modal degree. Therefore, we conclude $md(S'')=md(S)=k$ as well. Since $[\Lambda_2\Rightarrow\Theta_2]$ is a block in $S''$, by definition, we see that $md(\Lambda_2\Rightarrow\Theta_2)\leq k-1$ which is strictly smaller than $md(S)$. 
    This means each time we start a new round of $\text{PROC}_0(\cdot)$ in the loop, when we go to the $\textbf{EXP3}(\cdot)$ step, the modal degree of the sequent(s) to which R3 rule applies goes down strictly. 
    Thus $\textbf{EXP3}(\cdot)$ and also $\text{PROC}_0(\cdot)$ can only be iterated for finitely many rounds.
    \qed
\end{proof}
\section{Proofs in Section \ref{sec:completeness}}

For convenience, we abbreviate $x_{\Phi\Rightarrow\Psi},\leq_S,R_S,W_S$ as $x,\leq, R,W$ respectively in the following proofs.\\

{\noindent \textbf{Proposition~\ref{prop:hp-fc-property}.}} {\it $\+M_S$ satisfies (FC) and (DC).}

\begin{proof}
    We only show the case of (DC) here, for (FC), the proof is similar with \cite[Proposition 51]{FIK-csl2024}. 
    Take arbitrary $x_{\Gamma\Rightarrow\Delta},x_{\Sigma\Rightarrow\Pi},x_{\Lambda\Rightarrow\Theta}\in W_S$ with $x_{\Gamma\Rightarrow\Delta}\leq x_{\Sigma\Rightarrow\Pi}$ and $Rx_{\Sigma\Rightarrow\Pi}x_{\Lambda\Rightarrow\Theta}$, our goal is to find some $x_0\in W_S$ s.t. both $x_0\leq x_{\Lambda\Rightarrow\Theta}$ and $Rx_{\Gamma\Rightarrow\Delta}x_0$ hold. 
    Since $Rx_{\Sigma\Rightarrow\Pi}x_{\Lambda\Rightarrow\Theta}$, by the definition of $R$, we see that $[\Lambda\Rightarrow\Theta]\in\Pi$. Meanwhile, since $x_{\Gamma\Rightarrow\Delta}\leq x_{\Sigma\Rightarrow\Pi}$, by the definition of $\leq$, we have $\Gamma\Rightarrow\Delta\subseteq^\-S\Sigma\Rightarrow\Pi$. 
    Given that $[\Lambda\Rightarrow\Theta]\in\Pi$, by Definition \ref{def:saturation-inclusion}, it implies there is a block $[\Phi\Rightarrow\Psi]\in\Delta$ s.t. $\Phi\Rightarrow\Psi\subseteq^\-S\Lambda\Rightarrow\Theta$. 
    In the meantime, $[\Phi\Rightarrow\Psi]\in\Delta$, it follows that $Rx_{\Gamma\Rightarrow\Delta}x_{\Phi\Rightarrow\Psi}$.\qed
\end{proof}
  
{\noindent \textbf{Lemma~\ref{lem:truth-lemma-cdik}.}} 
{\it Let $S$ be a global-saturated sequent in {\rm \textbf{C}}\cdikk~and $\+M_{S}=(W_S,$
$\leq_S,R_S,V_S)$ defined as above. (a). If $A\in\Phi$, then $\+M_{S},x_{\Phi\Rightarrow\Psi}\Vdash A$; (b). If $A\in\Psi$, then $M_{S},x_{\Phi\Rightarrow\Psi}\nVdash A$.}

\begin{proof}
    We prove the lemma by induction on $A$ and only present the non-trivial case when $A$ is of the form $\square B$.  
  
    For (a), let $\square B\in\Phi$. We see $\Phi\Rightarrow\Psi$ satisfies the saturation condition associated with $(\square_L)$ for $\square B$. Assume for the sake of a contradiction that $x\nVdash \square B$. 
    Then there exists $x_{\Sigma\Rightarrow\Pi}\in W$ s.t. $Rxx_{\Sigma\Rightarrow\Pi}$ and $x_{\Sigma\Rightarrow\Pi}\nVdash B$. By IH, we see that $B\notin\Sigma$. Meanwhile, since $Rxx_{\Sigma\Rightarrow\Pi}$, according to the model construction, we have $[\Sigma\Rightarrow\Pi]\in\Psi$. By the saturation condition associated with $(\square_L)$, we have $B\in \Lambda$, which leads to a contradiction. 
    For (b), let $\square B\in \Psi$. 
    We see $\Phi\Rightarrow\Psi$ satisfies the saturation condition associated with $(\square_{R'})$ for $\square B$. By the saturation condition, there is a block $[\Lambda\Rightarrow\Theta]\in\Psi$ with $B\in\Theta$. By IH, we have $x_{\Lambda\Rightarrow\Theta} \nVdash B$. According to the model construction, we have $Rxx_{\Lambda\Rightarrow\Theta}$, so $x \nVdash \Box B$. \qed
\end{proof}

\begin{lemma}[Truth lemma for {\rm \textbf{C}}\cdid~and {\rm \textbf{C}}\cdit]\label{lem:truth-lemma-cdid-cdit}
  Let $S$ be a global-saturated sequent in {\rm \textbf{C}}\cdid (resp. {\rm \textbf{C}}\cdit) and $\+M_{S}=(W_S,\leq_S,R_S,V_S)$ defined as above. 
  (a). If $A\in\Phi$, then $\+M_{S},x_{\Phi\Rightarrow\Psi}\Vdash A$; (b). If $A\in\Psi$, then $M_{S},x_{\Phi\Rightarrow\Psi}\nVdash A$.
\end{lemma}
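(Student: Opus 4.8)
The plan is to prove Lemma~\ref{lem:truth-lemma-cdid-cdit} by induction on the structure of the formula $A$, reusing essentially verbatim the argument for the Truth Lemma of \textbf{C}\cdikk\ (Lemma~\ref{lem:truth-lemma-cdik}). All propositional cases, and all sub-cases of the modal cases that concern an $R_S$-successor reached through a $[\cdot]$-block, carry over unchanged: such successors exist in every one of the three model constructions, and the saturation conditions used in them — $(\square_L)$, ($\square_{R}$), $(\Diamond_L)$, $(\Diamond_R)$ — are common to \textbf{C}\cdikk, \textbf{C}\cdid\ and \textbf{C}\cdit. Hence the only new work concerns the \emph{extra self-loops} $R_S\, x_{\Phi\Rightarrow\Psi}\, x_{\Phi\Rightarrow\Psi}$ that the modified $R_S$ creates (always for \textbf{C}\cdit; when $Suc(\Phi\Rightarrow\Psi)$ is $[\cdot]$-free for \textbf{C}\cdid), and these matter only in the two ``universal'' modal sub-cases, namely (a) with $A=\square B\in\Phi$ and (b) with $A=\Diamond B\in\Psi$.

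For \textbf{C}\cdit\ I would argue as follows. In case (a), besides the $[\cdot]$-successors treated as in Lemma~\ref{lem:truth-lemma-cdik}, one must also verify $x_{\Phi\Rightarrow\Psi}\Vdash B$ along the self-loop; the saturation condition for ($\axiomt_\square$) gives $B\in\Phi$, hence $x_{\Phi\Rightarrow\Psi}\Vdash B$ by the induction hypothesis, and so $x_{\Phi\Rightarrow\Psi}\Vdash\square B$. Dually, in case (b) the self-loop forces $x_{\Phi\Rightarrow\Psi}\nVdash B$; the saturation condition for ($\axiomt_\Diamond$) gives $B\in\Psi$, hence $x_{\Phi\Rightarrow\Psi}\nVdash B$ by the induction hypothesis, and combining this with the $(\Diamond_R)$-saturation for the $[\cdot]$-successors yields $x_{\Phi\Rightarrow\Psi}\nVdash\Diamond B$.

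For \textbf{C}\cdid\ there is no ($\axiomt$)-rule, but here the self-loop at $x_{\Phi\Rightarrow\Psi}$ is present only when $Suc(\Phi\Rightarrow\Psi)=\Psi$ is $[\cdot]$-free. In case (a) with $\square B\in\Phi$ the set $\Phi^\square$ is non-empty, so the saturation condition for (\axiomd) forces $\Psi$ to be non-$[\cdot]$-free; thus no self-loop exists and the case collapses to the $[\cdot]$-successor argument of Lemma~\ref{lem:truth-lemma-cdik}. Symmetrically, in case (b) with $\Diamond B\in\Psi$ the set $\Psi^\Diamond$ is non-empty, so the (\axiomd)-saturation again makes $\Psi$ non-$[\cdot]$-free and no self-loop arises. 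The remaining modal sub-cases — (a) for $\Diamond B$ and (b) for $\square B$ — only require the \emph{existence} of one suitable $R_S$-successor, supplied by the $(\Diamond_L)$- resp.\ ($\square_{R}$)-generated $[\cdot]$-block exactly as before, so the presence of additional loop-edges cannot spoil them.

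To close, I would recall — as already noted after the definition of the modified $R_S$, and in the spirit of Proposition~\ref{prop:hp-fc-property} — that $\+M_S$ is serial for \textbf{C}\cdid\ and reflexive for \textbf{C}\cdit, and still satisfies (FC) and (DC), so that it is a genuine model of the respective logic and the extended forcing relation behaves as expected. The only step that truly needs care, and the sole point of departure from the \textbf{C}\cdikk\ proof, is the bookkeeping for the self-loops: noticing that for \textbf{C}\cdid\ they are excluded precisely by the (\axiomd)-saturation condition, and that for \textbf{C}\cdit\ they are neutralised by the ($\axiomt_\square$)/($\axiomt_\Diamond$) conditions. Once this observation is in place, the induction proceeds routinely.
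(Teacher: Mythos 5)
Your proposal is correct and follows essentially the same route as the paper's proof: the propositional and block-successor sub-cases are inherited from the Truth Lemma for \textbf{C}\cdikk, and the only new work is the self-loop bookkeeping, handled exactly as the paper does — via the ($\axiomt_\square$)/($\axiomt_\Diamond$) saturation conditions for \textbf{C}\cdit, and via the observation that the (\axiomd) saturation condition makes $\Psi$ non-$[\cdot]$-free (hence excludes the self-loop) precisely in the two critical \textbf{C}\cdid\ sub-cases. No gaps.
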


\begin{proof}
    The proof is done by induction on $A$, the only two non-trivial cases are $A=\square B\in\Phi$ and $A=\Diamond B\in\Psi$. 
    \begin{enumerate}
      \item[(i).] Let $S$ be a global-saturated sequent in {\rm \textbf{C}}\cdid.
      \begin{enumerate}
        \item[(a).] Let $A=\square B\in\Phi$. Assume for the sake of a contradiction that $x_{\Phi\Rightarrow\Psi}\nVdash \square B$, then there is $x_{\Lambda\Rightarrow\Theta}\in W$ s.t. $Rxx_{\Lambda\Rightarrow\Theta}$ and $x_{\Lambda\Rightarrow\Theta}\nVdash B$. By IH, we have $B\notin\Lambda$. 
        Since $Rxx_{\Lambda\Rightarrow\Theta}$, by the model construction, either $[\Lambda\Rightarrow\Theta]\in\Psi$ or $\Lambda\Rightarrow\Theta=\Phi\Rightarrow\Psi$ and $\Psi$ is $[\cdot]$-free. 
        Note that $\Phi^\square\cup\Psi^\Diamond$ is non-empty, then by the saturation condition associated with (\axiomd), $\Psi$ is not $[\cdot]$-free, so $\Lambda\Rightarrow\Theta$ cannot be $\Phi\Rightarrow\Psi$ itself, which entails $[\Lambda\Rightarrow\Theta]\in\Psi$. By the saturation condition associated with $(\square_R)$, we have $B\in\Lambda$, a contradiction. 
        \item[(b).] Let $A=\Diamond B\in\Psi$. In this case, $\Phi^\square\cup\Psi^\Diamond$ is non-empty, then by the saturation condition associated with (\axiomd), we see that there is a block $[\Lambda\Rightarrow\Theta]\in\Psi$. Furthermore, by the saturation condition associated with $(\Diamond_R)$, we have $B\in\Theta$. By IH, $x\nVdash B$. Meanwhile, according to the model construction, we have $Rxx_{\Lambda\Rightarrow\Theta}$, and hence $x\nVdash\Diamond B$.
      \end{enumerate}
      \item[(ii).] Let $S$ be a global-saturated sequent in {\rm \textbf{C}}\cdit.
      \begin{enumerate}
        \item[(a.)] Let $A=\square B\in\Phi$. 
        Assume for the sake of a contradiction that $x\nVdash \square B$, then similarly as in \textbf{C}\cdid, there is $x_{\Lambda\Rightarrow\Theta}\in W$ s.t. $Rxx_{\Lambda\Rightarrow\Theta}$ and $B\notin\Lambda$. By the model construction, either $[\Lambda\Rightarrow\Theta]\in\Psi$ or $\Lambda\Rightarrow\Theta=\Phi\Rightarrow\Psi$. For the former, by the saturation condition associated with $(\square_R)$, we have $B\in\Lambda$, a contradiction. For the latter, since $\Phi\Rightarrow\Psi$ is saturated with $(\axiomt_\square)$, we see that $B\in\Phi$, i.e. $B\in\Lambda$ as well, also a contradiction. 
        \item[(b).] Let $A=\Diamond B\in\Psi$. Since $\Phi\Rightarrow\Psi$ is saturated with $(\axiomt_\Diamond)$, we see that $B\in\Psi$ as well. By IH, $x\nVdash B$. By the model construction, $R$ is reflexive, so we have $Rxx$, which makes $x\nVdash\Diamond B$ neither.\qed
      \end{enumerate}
    \end{enumerate}
\end{proof}

\end{document}